\pgfplotsset{width=7cm,compat=1.3}
\newcommand{\hiddennote}[1]{}
\newcommand{\xmin}{x_{\text{min}}}
\newcommand{\pv}{\text{p-value}}
\newcommand{\bB}{\mathbf{B}}
\newcommand{\bL}{\mathbf{L}}
\newcommand{\bC}{\mathbf{C}}
\newcommand{\bI}{\mathbf{I}}
\newcommand{\bJ}{\mathbf{J}}
\newcommand{\bV}{\mathbf{V}}
\newcommand{\bphi}{\bm{\phi}^{(\ell)}}
\newcommand{\btheta}{\bm{\theta}}
\newcommand{\bX}{\mathbf{X}}
\newcommand{\bs}{\mathbf{s}}
\newcommand{\be}{\mathbf{e}}
\newcommand{\bzeros}{\mathbf{0}}
\newcommand{\bp}{\mathbf{p}}
\newcommand{\bg}{\mathbf{g}}
\newcommand{\bd}{\mathbf{d}}
\newcommand{\bmu}{\bm{\mu}}
\newcommand{\fbar}{\bar{f}}
\newcommand{\bfl}{\mathbf{f}}
\newcommand{\bfell}{\mathbf{f}^{(\ell)}}
\newcommand{\bfbar}{\mathbf{\bar{f}}}
\newcommand{\A}{\mathcal{A}}
\newcommand{\T}{\mathcal{T}}
\newcommand{\popd}{\text{pop}^{\text{distr}}}
\newcommand{\popn}{\text{pop}^{\text{node}}}
\newcommand{\dg}{d^{\text{germany}}}
\newcommand{\Gtc}{\mathcal{G}(t).{\text{changed}}}
\newcommand{\Gc}{\mathcal{G}.{\text{changed}}}
\newcommand{\Gconn}{\mathcal{G}.{\text{connected}}}
\DeclareMathOperator*{\argmax}{arg\,max}
\renewcommand{\P}{\mathcal{P}}
\newcommand{\R}{\mathbb{R}}
\newcommand{\N}{\mathcal{N}}
\newcommand{\LL}{\mathcal{L}}
\newcommand{\G}{\mathcal{G}}
\newcommand{\eps}{\epsilon}
\newcommand{\Prob}{\mathbb{P}}
\newcommand{\C}{\mathcal{C}}
\newcommand{\fell}{f^{(\ell)}}
\newcommand{\hatxmin}{\hat{x}_{\text{min}}}
\newcommand{\ntail}{n_{\text{tail}}}
\newcommand{\nnonzero}{n_{\neq 0}}
\newcommand{\niter}{n_{\text{iter}}}
\newcommand{\nfeas}{n_{\text{feasible}}}
\newtheorem{theorem}{Theorem}[section]
\newtheorem{assumption}[theorem]{Assumption}
\newtheorem{corollary}[theorem]{Corollary}
\newtheorem{definition}[theorem]{Definition}
\newtheorem{example}[theorem]{Example}
\newtheorem{lemma}[theorem]{Lemma}
\newtheorem{proposition}[theorem]{Proposition}
\newtheorem{remark}[theorem]{Remark}
\newenvironment{proof}[1][Proof]{\textbf{#1.} }{\ \rule{0.5em}{0.5em}}
\algnewcommand{\Inputs}[1]{%
  \State \textbf{Inputs:}
  \Statex \hspace*{\algorithmicindent}\parbox[t]{.8\linewidth}{\raggedright #1}
}
\algnewcommand{\Initialize}[1]{%
  \State \textbf{Initialize:}
  \Statex \hspace*{\algorithmicindent}\parbox[t]{.8\linewidth}{\raggedright #1}
}
\newcommand*{\rom}[1]{\expandafter\@slowromancap\romannumeral #1@}
\begin{document}

\title{Emergence of scale-free blackout sizes in power grids}
\author{Tommaso Nesti}
\affiliation{Centrum Wiskunde and Informatica, 1098 XG Amsterdam, Netherlands.}
\author{Fiona Sloothaak}
\affiliation{Eindhoven University of Technology, 5612 AZ Eindhoven,  Netherlands.}
\author{Bert Zwart}
\affiliation{Centrum Wiskunde and Informatica, 1098 XG Amsterdam, Netherlands.}
\affiliation{Eindhoven University of Technology, 5612 AZ Eindhoven,  Netherlands.}
\date{\today}
%\pacs{}

\begin{abstract}
We model power grids as graphs with heavy-tailed sinks, which represent demand from cities, and study cascading failures on such graphs. Our analysis links the scale-free nature of blackout sizes to the scale-free nature of city sizes, contrasting previous studies suggesting that this nature is governed by self-organized criticality.  Our results are based on a new mathematical framework combining the physics of power flow with rare event analysis for heavy-tailed
distributions, and are validated using various synthetic networks and the German transmission grid.
\end{abstract}

\maketitle

Securing a reliable power grid is of tremendous societal importance due to the highly disruptive repercussions of blackouts. Yet, the study of cascading failures in power grids is a notoriously challenging problem due to its sheer size, combinatorial nature, mixed continuous and discrete processes, physics and engineering specifications~\cite{Bienstockbook, Dorfler2013, Simpson2016, schafer2018_communications, Nesti2018}. Traditional epidemics models~\cite{Watts2002simplemodelof, morone2015,Hindes2016,Pastor2001} are unsuitable for its study, as the physics of power flow are responsible for a non-local propagation of failures~\cite{hines2017}. This challenge has created extensive interest from the engineering and physics communities~\cite{motter2004, witthaut2016,yang2017_PRL1,schafer2018_energy,
Witthaut2015,Crucitti2004,Huang2006}. Analytic models determining the blackout size ignore the
microscopic dynamics of power flow, while the analysis of more realistic networks typically does not go beyond simulation studies. Therefore, a fundamental understanding of blackouts is lacking.

The total blackout size, measured in terms of number of customers affected, is known to be {\em scale-free}~\cite{Carreras2004Chaos, Dobson2007, hines2009, carreras2016}, meaning there exist constants $C, \alpha>0$ such that
\begin{equation}
\label{eq:scaling}
P(S>x) \approx C x^{-\alpha},
\end{equation}
where $\approx$ means that the ratio of both quantities approaches $1$ as $x\rightarrow\infty$.
This law, also known as the Pareto law, occurs in many applications of science and engineering~\cite{Barabasi1999, suki1994, barabasi2005, Clauset2009, Simon1955}. Its significance in our context lies in the fact that big blackouts are substantially more likely than one would infer from more conventional statistical laws. As a result, mitigation policies cannot write off extremely large blackouts as virtually impossible events, and should focus on those in equal proportion to the small, frequent ones. Given the tremendous societal impact of large blackouts\hiddennote{ref on 100 billion dollars per year cost}, understanding why~\eqref{eq:scaling} occurs can lead to focused prevention and/or mitigation policies and is therefore of major significance.

Several attempts to explain~\eqref{eq:scaling} have appeared in the literature. Using simulations, previous studies suggest that~\eqref{eq:scaling} may occur as a consequence of self-organized criticality~\cite{Bak1988,Carreras2004Chaos, Dobson2007, Bienstockbook, Cascadebook}. Specifically,~\cite{Carreras2004Chaos} compares simulation traces of a model for blackouts with those of a model that is known to exhibit self-organized criticality, and shows that the autocorrelation functions are similar. Such indirect analogies of different observables do not provide direct explanations into the precise mechanism behind~\eqref{eq:scaling}. 

Other strands of literature model the cascading mechanism as a branching process with critical offspring distribution~\cite{kim2010}, without taking physical laws of electricity into consideration.
Such models lead to blackout sizes with infinite mean, corresponding to a value of $\alpha=0.5$. While a naive parametric estimation procedure using all data would lead to values of $\alpha$ in the range $(0,1)$, modern statistical techniques focusing on the tail end of the distribution clearly indicate a finite mean blackout size~\cite{hines2009, carreras2016}.

In this Letter, we propose a radically different and much simpler explanation than the aforementioned suggestions. Our central hypothesis is that~\eqref{eq:scaling} is inherited from a similar law for the distribution of {\em city sizes}~\cite{zipf49, Simon1955, Rosen1980, Batty}. We support this claim with a careful analysis of actual data, a new mathematical framework, and supporting simulations for additional insight and validation.

To develop intuition, we view the power grid as a connected graph where nodes represent cities, which are connected by edges modeling transmission lines. Initially, this is a single fully functioning network with balanced supply and demand. After several line failures, the network breaks into disconnected sub-networks, referred to as islands. The balance between supply and demand is not guaranteed to hold in each island, and at least one island is facing a power shortage. As the sum of total demand will be proportional to the total population in the island, the size of the power shortage is proportional to the total population, which is the sum of cities in that island. We now invoke a property of sums of Pareto distributed random variables, which informally says that the sum is dominated by the maximum. In other words, the size of the largest city in this island drives the scale-free nature of the blackout. In extreme value theory, this is known as the principle of a single big jump \cite{Resnick2007, heavytailbook}.

This line of reasoning implies that city sizes and blackout sizes both have Pareto distributions with similar tail behavior. For the case of the US blackout sizes (in terms of the number of customers affected) and city sizes (in terms of population), we confirm this with historical data as summarized in Fig.~\ref{fig:historical_data}, which shows that the parameters $\alpha$ for blackout and city sizes distributions are remarkably similar, each having a finite mean. We refer to Supplemental Material~\cite{NSZ19sm}, section~\rom{2}, for details.
\begin{figure}[h!]
  \begin{subfigure}[t]{0.49\columnwidth}
  \centering
          \includegraphics[width=\textwidth,height=0.9\textwidth]{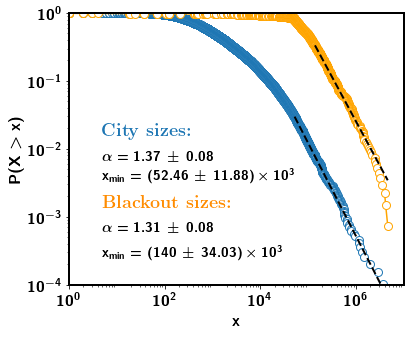}

          \label{fig:PLFIT}
\end{subfigure}\hfill
  \begin{subfigure}[t]{0.49\columnwidth}
  \centering
          \includegraphics[width=\textwidth,height=0.9\textwidth]{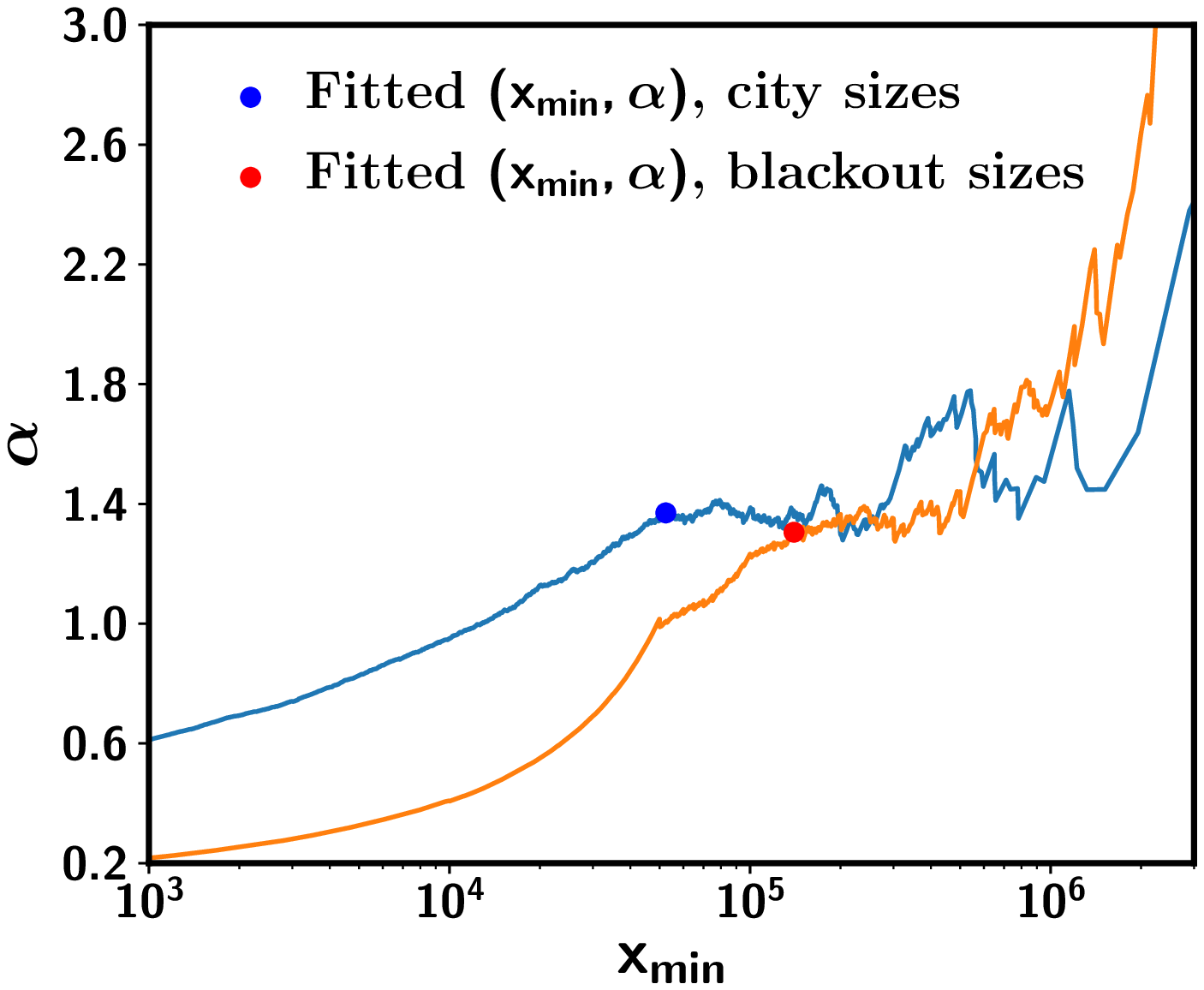}

      \label{fig:Hill}
      \end{subfigure}%\hfill
      ~
      \caption{\footnotesize
      Left: Pareto tail behavior of US city~\cite{Clauset2009} and blackout sizes~\cite{OE417} in the region $x>\xmin$. Estimates
       are based on PLFIT~\cite{Clauset2009}.
        Points depict the empirical complementary cumulative distribution function (CCDF); Solid line depicts the CCDF of a Pareto distribution with parameters $\alpha,\xmin$.
     Right: Hill estimator $\xmin\to \alpha(\xmin)$, also known as the Hill plot~\cite{NSZ19sm}. The PLFIT estimates for city sizes (blue dot) and blackout sizes (red dot) lie within a relatively flat region of the graph, providing support for the Pareto fit.     
     }
       \label{fig:historical_data}
         \end{figure} 

In what follows, we make our claim rigorous by introducing
a new mathematical framework that captures the salient characteristics of actual power system dynamics~\cite{Bienstockbook} and sheds light on the connection between blackout and city sizes.  For a full account, see~\cite{NSZ19sm}, section~\rom{4}.

\begin{figure*}[ht!]
    \centering
    \begin{subfigure}[t]{\textwidth}
        \centering
\includegraphics[width=0.18\textwidth,height=0.22\textwidth]{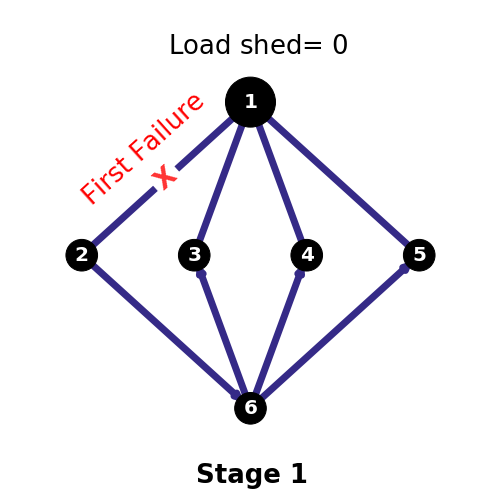}
\includegraphics[width=0.19\textwidth,height=0.22\textwidth]{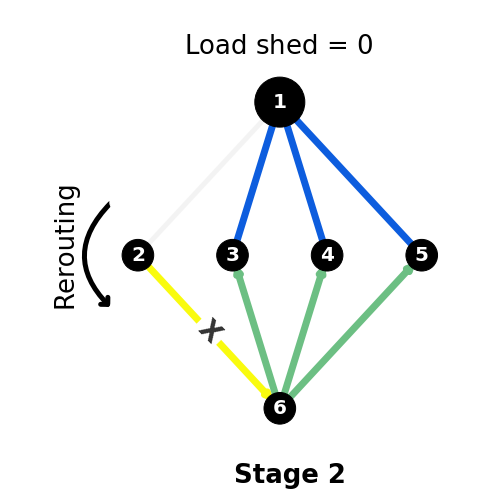}
      \includegraphics[width=0.20\textwidth,height=0.22\textwidth]{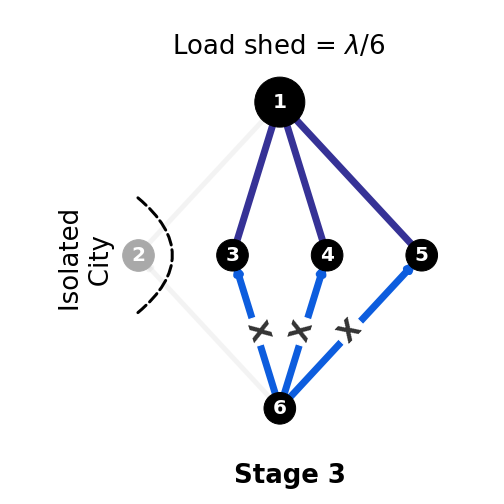}
      \includegraphics[width=0.24\textwidth,height=0.22\textwidth]{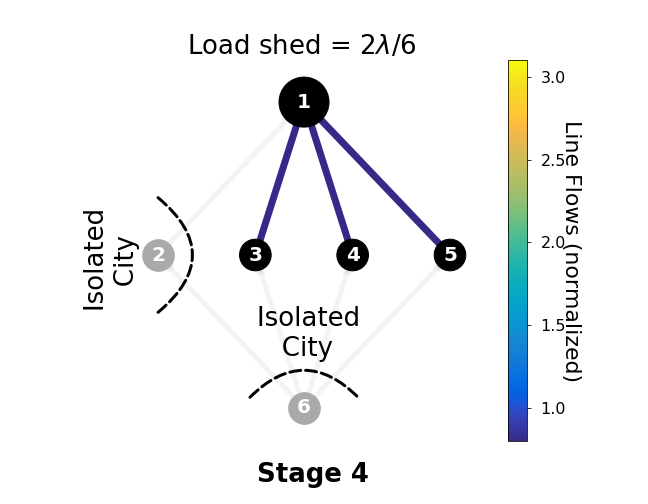}
  \end{subfigure}
%\iffalse  ~
%    \begin{subfigure}[t]{\textwidth}
%    
%      \includegraphics[width=0.19\textwidth,height=0.24\textwidth]{img/6nodes_white/1_B.png}
%      \includegraphics[width=0.19\textwidth,height=0.24\textwidth]{img/6nodes_white/2_B.png}
%      \includegraphics[width=0.22\textwidth,height=0.24\textwidth]{img/6nodes_white/3_B.png}
%  \caption{\footnotesize Initial disturbance occurs at one of the lower lines.}
%  \end{subfigure}\fi
\caption{
          \footnotesize{ Cascade in a 6-node network with $X_1=1, X_j=0$ for $j\ge 2$, $\lambda
          	> 3/4$. The four lower and upper line flows are $\lambda / 24$  and $5 \lambda / 24$, respectively, with corresponding emergency limits $1 / 24$ and $5 / 24$.  The failure of  an upper line causes the load on the adjacent lower line to surge to $\lambda / 6> 1/24$, causing this line to trip (Stage 2). This cutoff leads to the load on the three remaining lower lines to surge to $\lambda / 18$, causing them to trip as well (Stage 3). After isolating node~2 and~6, the cascade ends with $|A_1|=4$ and a total load shed of $2\lambda/6$ (Stage~4).}}
  \vspace{-3mm}
\label{fig:6nodes}
  \end{figure*}

We consider a network with $n$ nodes and $m$ lines. Node $i$ represents a city with  $X_i$ inhabitants.
We consider a static setting where each inhabitant demands one unit of energy. We assume that the $X_i$'s are independent and identically distributed Pareto random variables with $P(X>x) \approx Kx^{-\alpha}$ for constants $K, \alpha>0$. For convenience, we label the nodes such that $X_1$ represents the largest city.

%1)%
For the electricity line flows, we adopt a linear DC power flow model.
This model approximates the more involved AC power flow equations, is widely used in high-voltage transmission system analysis~\cite{Purchala2005}, and accurately described the evolution of the 2011 San Diego blackout~\cite{Bernstein2014}.
Specifically, if $\bg=(g_1,...,g_n)$ and $\bX=(X_1,\ldots,X_n)$ represent the power generation and demand at each city, then the line flows $\bfl=(f_1,...,f_m)$ are given by  $\bfl = \bV (\bg-\bX)$,
where the matrix $\bV \in \mathbb{R}^{m\times n}$ is determined by the network topology and the line reactances.

Our framework consists of three stages called \textit{planning}, \textit{operational} and \textit{emergency}.
The first two stages determine the actual line limits and line flows.
We employ the widely used Direct Current Optimal Power Flow (DC-OPF) formulation with quadratic supply cost functions~\cite{Bienstockbook}:
\begin{equation}
\begin{aligned}
\label{opf}
\min_{\bg}\, &\frac 12 \sum_{i=1}^n g_{i}^2\\
\text{s.t. }&\sum_{i=1}^ng_i = \sum_{i=1}^n X_i,
\end{aligned}
\end{equation}
subject to the reliability constraint
\begin{equation}
\label{rel-constraint}
-\bfbar \leq \bV (\bg-\bX)\leq \bfbar.
\end{equation}
The planning stage concerns how the operational line limits $\bar{\bfl}$ are set. For this, we
solve~\eqref{opf} without~\eqref{rel-constraint}, yielding the uniform (across cities) solution $g^{(\text{pl})}_j=\frac{1}{n}\sum_{i=1}^n X_i$ for all $j\ge 1$, and $\bfl^{(\text{pl})}=-\bV\bX $ (see~\cite{NSZ19sm}, section~\rom{4}). Then, the operational line limits $\bfbar$ are set as
\begin{equation}
\label{eq:LineLimitsPlanning}
\bar{f}_\ell =\lambda |f_{\ell}^{(\text{pl})}|= \lambda |(\bV \bX)_\ell|, \hspace{1cm} \ell=1,...,m,
\end{equation}
where $\lambda \in (0,1]$ is a safety tuning parameter, referred to as loading factor.
In the operational stage, we solve~\eqref{opf} subject to~\eqref{rel-constraint}, yielding a different solution $\bg^{(\text{op})}$ which is not uniform due to the constraint~\eqref{rel-constraint}.
Eq.~\eqref{eq:LineLimitsPlanning} implies that line flows can have a heavy tail, which is consistent with impedance data~\cite{wang2010}. This property is essential, as it allows to create a subnetwork in which the mismatch between supply and demand is heavy-tailed.

This mismatch is established in the emergency stage, which is described next. We focus on cascades initiated by a single line failure, sampled uniformly across all lines. A line failure changes the topology of the grid and causes a global redistribution of network flows according to power flow physics. Consecutive failures occur whenever there are one or more lines for which the redistributed power flow exceeds its emergency line limit $F_\ell= \bar{f}_{\ell}/\lambda$.
Failures are assumed to occur subsequently, and take place at the line where the relative exceedance is largest. Whenever line failures create additional islands, we proportionally lower either generation or demand at all nodes to restore power balance. The cascade continues within each island until none of the remaining emergency line limits are exceeded anymore.

\normalsize
Our formulation may be extended to handle multiple initial failures, correlated city sizes, generator failures, simultaneous failures, generation limits, other strictly convex supply cost functions, and other load-shedding mechanisms. Such variations would affect the value of the pre-factor $C$, but not the exponent~$\alpha$: the tail of the blackout distribution is dominated by the scenario where there is a single city that has a large power demand, while the demand of the other cities is negligible. A formal version of this statement is that, for sufficiently small $\epsilon$,
{\small \begin{equation}
\label{keyformula}
P(S>x) = P (S > x ; X_1 > x, X_i \leq \epsilon x, i\geq 2) + {\rm o}(x^{-\alpha}).
\end{equation}
}
\normalsize
This is a mathematical description of the aforementioned principle of a single big jump.
After a normalization argument, it suffices to consider the case where $X_1=y>0$ and $X_j=0$ for $j\ge 2$.
Then, the solution of the operational DC-OPF can be computed in closed form: $g_1^{(\text{op})}=(1-\lambda (n-1)/n)y$ and $g_j^{(\text{op})}=(\lambda/n) y$ for $j\ge 2 $ (see~\cite{NSZ19sm}, Lemma~\rom{4.2}).
Let $A_1$ be the set of nodes that represents the island containing the largest city, after the cascade has stopped.
The islands that do not contain the largest city must lower their generation to zero after a disconnection, and hence immediately turn stable. Iterating, the blackout size in component~$A_1$ is given by
%{\small
\begin{equation}
S = \sum_{i\in A_1} (X_i-g_i)  = \sum_{j\not\in A_1} (g_j-X_j)= \lambda\frac{n-|A_1|}{n} y.
\end{equation}
%}
Integrating over realizations of $X_1=y, y\geq x$, and using the property of Pareto tails $\mathbb{P}(\,\max(X_1,\ldots,X_n)>x) \approx n \mathbb{P} (X>x)\approx n K x^{-\alpha}$~\cite{Resnick2007}, we find that~\eqref{eq:scaling} holds with
\begin{equation}\label{eq:constant}
C=nK \sum_{j=1}^{n-1} \mathbb{P}\left( |A_1|=j \right) \lambda^\alpha(1-j/n)^\alpha \in [0,\infty).
\end{equation}
%}
The most delicate step, for which~\cite{NSZ19sm}, section~\rom{4.D} provides a rigorous proof, is to show that the cascade sequence does not change when performing the normalization argument in the limit $x\rightarrow\infty$, which is non trivial due to continuity issues.

In~\cite{NSZ19sm}, section~\rom{4}, we show that the pre-factor $C$ in~\eqref{eq:constant} is discontinuous at a discrete set of values of $\lambda$. At such points, the number of possible scenarios leading to a large blackout is increasing, and/or $|A_1|$ is decreasing in $\lambda$. We illustrate this in Fig.~\ref{fig:6nodes}, which also shows how the principle of a single big jump~\eqref{keyformula}, which links the total blackout size to the size of the largest city $X_1$, is realized by means of a few load shedding events, each of which is a fixed fraction of $X_1$ and corresponds to a network disconnection.

%%------------SENSITIVITY TO OUR ASSUMPTIONS ---------------- %%

Our analysis illustrates how heavy-tailed city sizes cause heavy-tailed blackout sizes. Our modeling choices allow for a precise exploration of the cascade sequence, and inherently, an explicit formula for the blackout size tail. However, we emphasize that the essential elements that lead to heavy-tailed blackout sizes are that both the demands and the line limits are heavy-tailed. The small nodes together generate a non-negligible fraction of the demand of the large node. When the power grid satisfies these properties, then~\eqref{keyformula} continues to hold, leading to a heavy-tailed mismatch whenever there is a disconnection.
We illustrate this numerically by studying the effect of relaxing several assumptions in our framework. 

The choice of a quadratic cost function in the DC-OPF ensures that it is most efficient to divide the power generation as equally as possible among the cities, causing all cities to generate a non-negligible fraction of the total demand. Other strictly convex increasing cost functions would lead to a similar effect. Moreover, our result is robust to piecewise linear cost functions (see~\cite{NSZ19sm}, Section~\rom{6.C}), and to the inclusion of generation limits, as long as these limits are a non-negligible fraction of the total demand.

To illustrate the sensitivity of our result with respect to the chosen power flow model, we partially extend our framework to the AC power flow model. We tested its effect on multiple network topologies, and as illustrated in Fig.~\ref{fig:PLFITAC}, we conclude that city size tails still drive the blackout size tail even when the DC assumption is violated. Intuitively, the chosen power flow model determines the redistribution of flow after failures, and thus the cascade sequence. This effect is captured in the prefactor, but does not destroy the Pareto-tailed consequence in the blackout size.

An important remark is that our mathematical framework relies on the city sizes to be random variables. Naturally, city sizes are essentially fixed. 
The remaining source of randomness in our framework, namely the location of the first failure, can be interpreted as a mechanism to bootstrap linear combinations of city sizes.
It is well-known~\cite{Resnick2007} that bootstrap methods cannot recover heavy-tailed behavior if the data set is small. In order to recover a Pareto tail, the frozen network therefore needs to be sufficiently large, e.g. $10^4$ nodes.
To illustrate this, Fig.~\ref{fig:PLFITfrozen} shows simulation results for the SynGrid model, a random graph model designed to generate realistic power grid topologies~\cite{wang2010}.
Finally, Fig.~\ref{fig:PLFITfrozenUniform} reveals that Pareto-tailed city sizes is a crucial assumption in order to recover the same scale-free behavior for blackout sizes, as light-tailed city sizes do not lead to heavy-tailed blackout sizes. 
Additional supporting experiments are reported in~\cite{NSZ19sm}, section~\rom{6}.

\begin{figure}
\begin{subfigure}[t]{0.33\columnwidth}
  \centering
          \includegraphics[width=\textwidth]{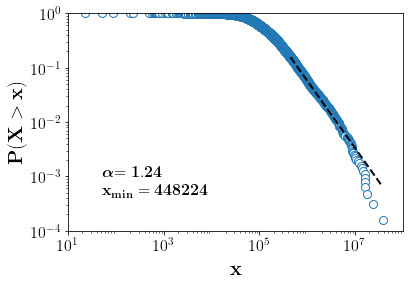}
\end{subfigure}\hfill
 \begin{subfigure}[t]{0.33\columnwidth}
  \centering
          \includegraphics[width=\textwidth]{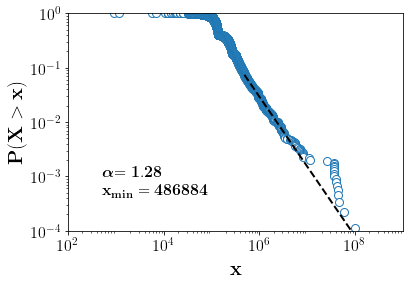} %frozen, syngryd 
\end{subfigure}\hfill
 \begin{subfigure}[t]{0.33\columnwidth}
  \centering
          \includegraphics[width=\textwidth]{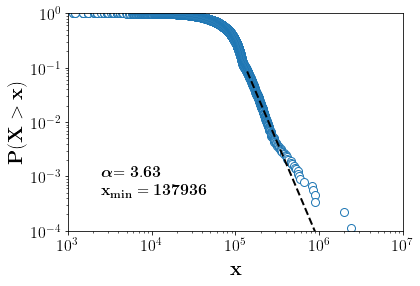}%frozen, syngrid uniform
\end{subfigure}\hfill
%%%%%%%%%%%%%

\begin{subfigure}[t]{0.33\columnwidth}
  \centering
          \includegraphics[width=\textwidth]{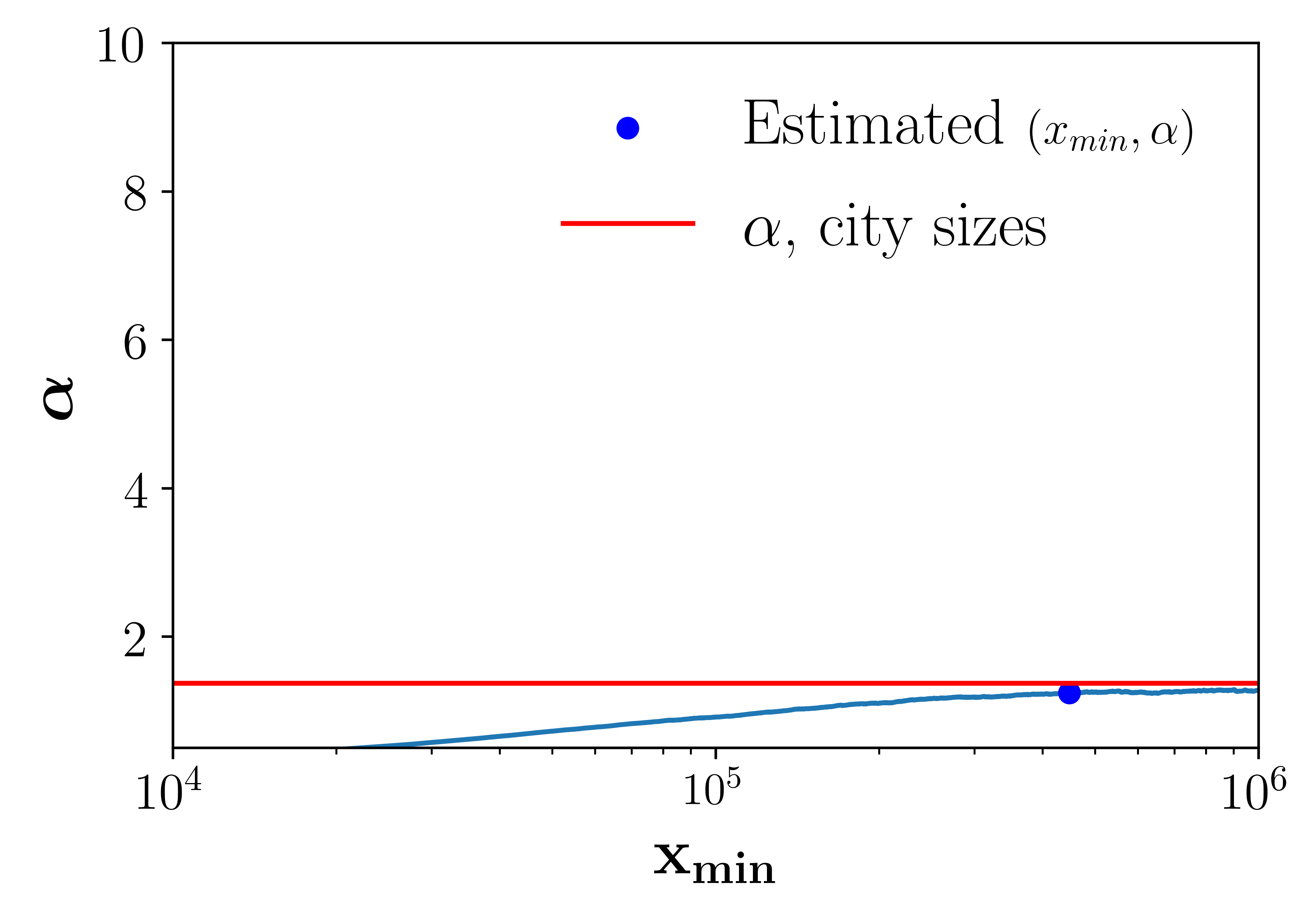}
          \caption{\footnotesize{IEEE 39-bus network, AC power flow model.}}
          \label{fig:PLFITAC}
\end{subfigure}\hfill
 \begin{subfigure}[t]{0.33\columnwidth}
  \centering
                   \includegraphics[width=\textwidth]{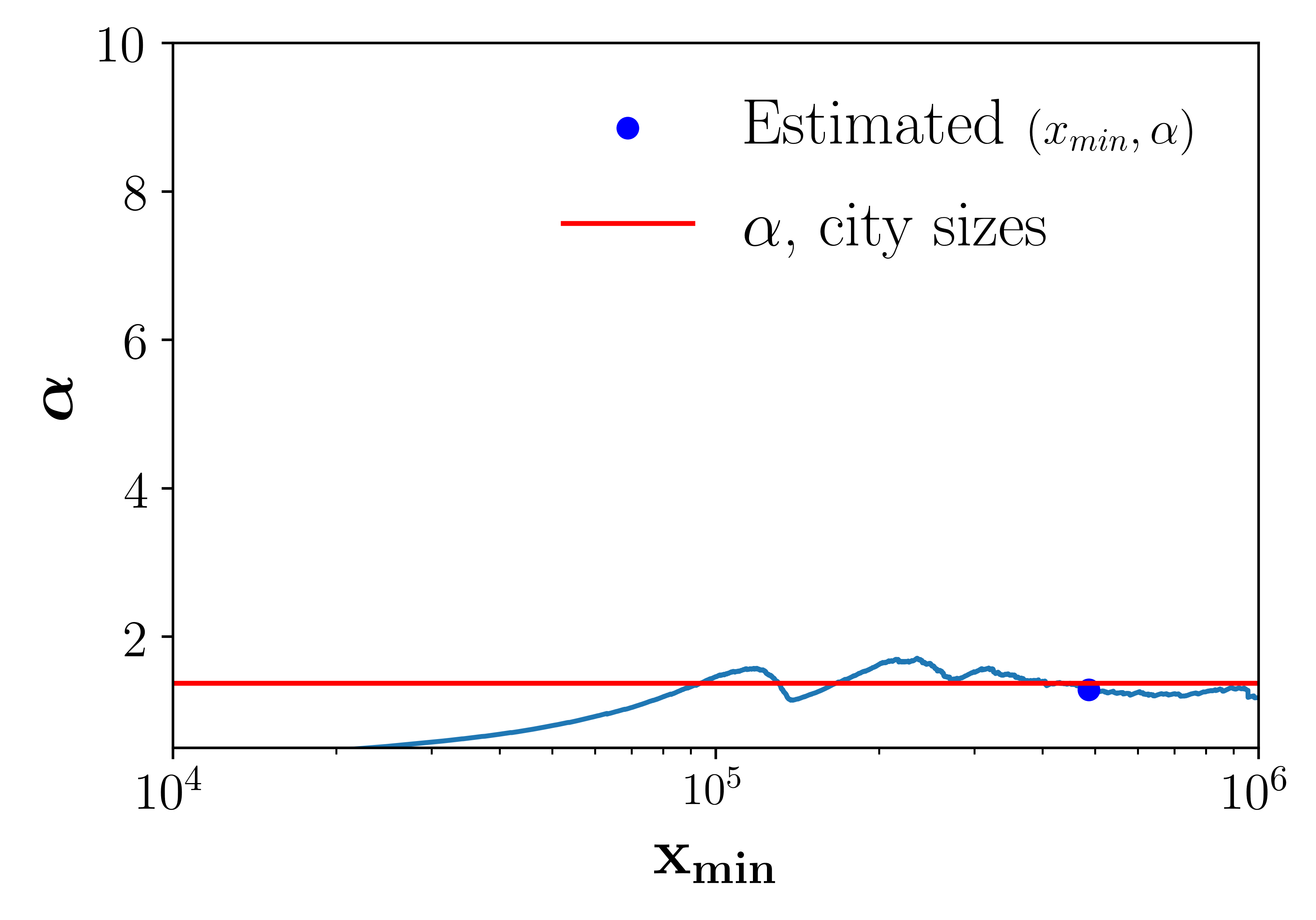}
        \caption{\footnotesize{SynGrid model, $n=10^4$, frozen city sizes.}}
           \label{fig:PLFITfrozen}
\end{subfigure}\hfill
 \begin{subfigure}[t]{0.33\columnwidth}
  \centering
                   \includegraphics[width=\textwidth]{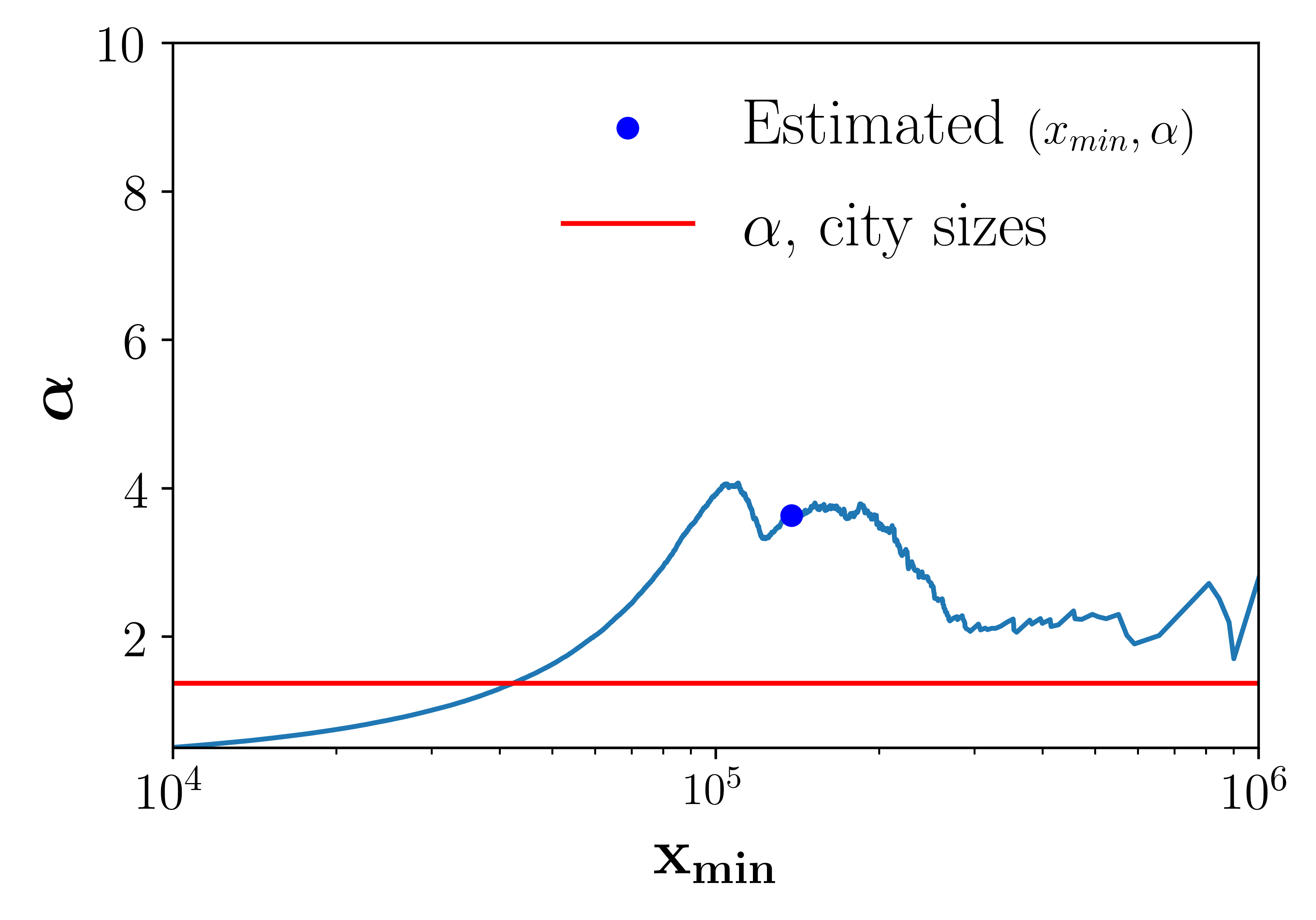}
        \caption{\footnotesize{SynGrid model, $n=10^4$, frozen city sizes sampled from uniform distribution.}}
                     \label{fig:PLFITfrozenUniform}
\end{subfigure}\hfill
\caption{\footnotesize Pareto tail behavior of simulated blackout sizes using the described cascade model with relaxed assumptions, for different topologies and loading factor $\lambda=0.9$. City sizes are sampled from a Pareto distribution with tail index $\alpha^{(\text{city})}=1.37$ in (a,b), and from a uniform distribution with the same mean in (c).
Top: points depict the empirical CCDF, dashed line depicts the CCDF of a Pareto distribution with parameters $\alpha,\xmin$, estimated via PLFIT~\cite{Clauset2009}. Bottom: Hill plots. Red line corresponds to the tail index $\alpha^{(\text{city})}$. 
A good fit is achieved when the PLFIT estimate (blue dot) lies in a flat region closely tracing the red line.
\label{fig:panel_results}
}
\end{figure}

We next present experimental results using the SciGRID network~\cite{SCIGRID00,PyPSA2017}, a model of the German transmission grid that includes generation limits and relaxes several assumptions.
 We simulate blackout realizations by considering one year's worth of hourly snapshots. 
 For each snapshot, we solve the operational DC-OPF and remove one line uniformly at random, initiating a cascade. To assign city sizes to nodes, we have cities correspond to German districts, and we assign a fraction of the population of each district to specific nodes based on a Voronoi tessellation procedure.
In this way, we account for the feature that a single city can encompass multiple nodes in a network. For more details, we refer to~\cite{NSZ19sm}, section~\rom{7}.

The German SciGRID network has a relatively small number of nodes (less than 600), and city sizes are frozen. Therefore, we do not recover Pareto-tailed blackout sizes. However, uniformly across different loading factors $\lambda$, we found that the preponderance of blackouts involves just a single load shedding event due to a network disconnection. For a moderate loading factor $\lambda = 0.7$, nearly 98\% of blackouts only involve a single disconnection. Even for a high loading factor $\lambda = 0.9$, 90\% of the blackouts involve a single disconnection, and the fraction of blackouts with four or more disconnections is below 4\%.
Fig.~\ref{fig:scigrid} depicts the largest observed
blackout, for different values of~$\lambda$.
Even in this massive blackouts, the bulk of the total load shed is the result of a few load shedding events.
These observations are typical properties that follow from our framework (see Fig.~\ref{fig:6nodes}), and sharply contrast the branching process approximations where many small jumps take place.

%4 figures option
\begin{figure}[!htb]
\begin{subfigure}[t]{0.49\columnwidth}
     \centering
\includegraphics[width=1\textwidth]{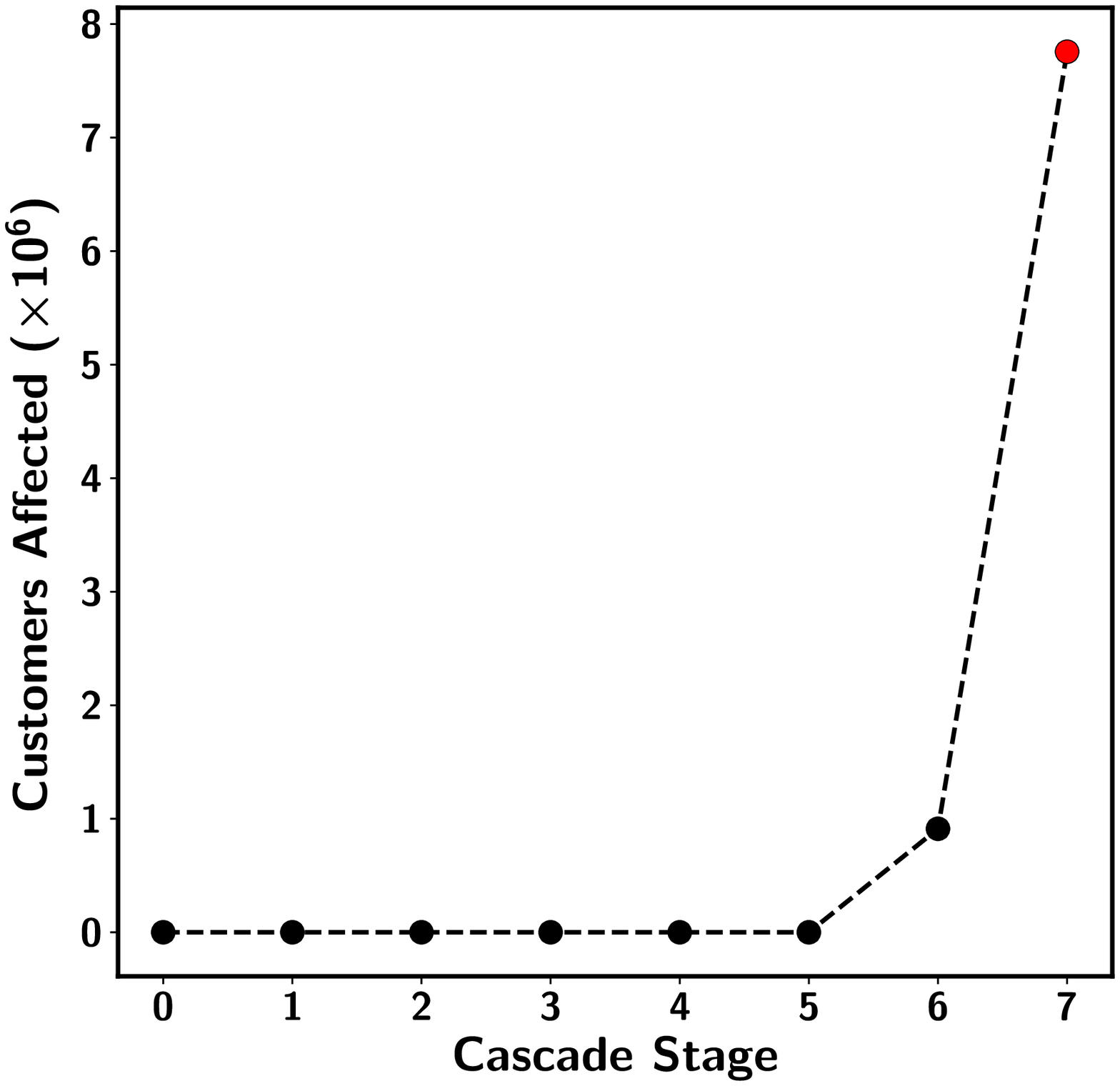}
\end{subfigure}
\hspace{-5mm}
\begin{subfigure}[t]{0.49\columnwidth}
     \centering
\includegraphics[width=1\textwidth]{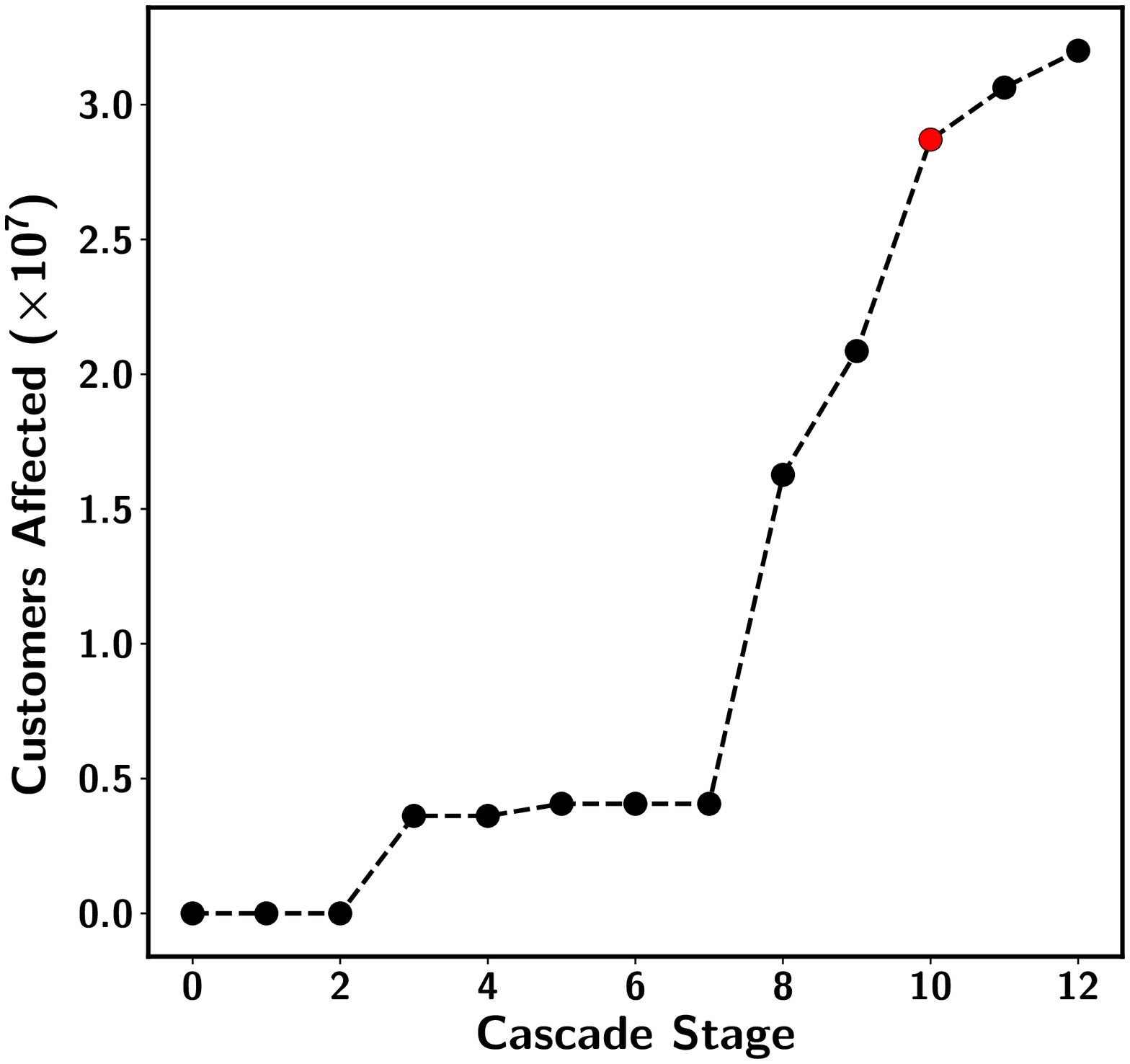}
\end{subfigure}
  \begin{subfigure}[t]{0.45\columnwidth}
     \centering
     \includegraphics[width=\textwidth]{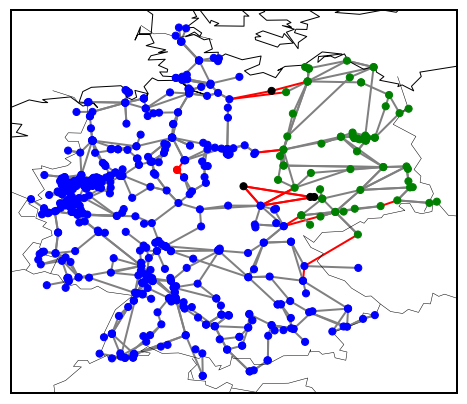}
  \end{subfigure}
\begin{subfigure}[t]{0.45\columnwidth}
     \centering
     \includegraphics[width=\textwidth]{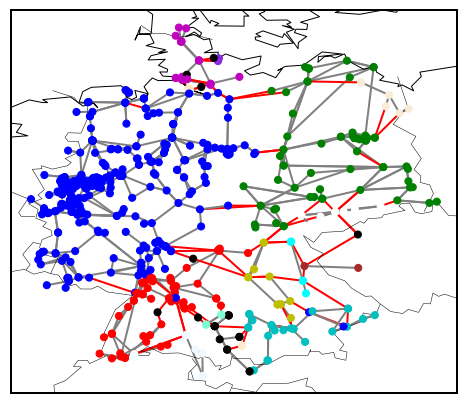}
     \end{subfigure}
          \caption{\footnotesize
Dissection of a massive blackout in the SciGRID network for loading
factors $\lambda = 0.7$ (left) and
$\lambda = 0.9$ (right) in terms of the cumulative number of affected
customers at each stage of the cascade, as displayed in the top charts with the selected stage colored red. The corresponding islanded components are visualized with different colors in the bottom pictures.}
   \label{fig:scigrid}
 \end{figure}

%{\bf Conclusion.}
Using data analysis, probabilistic analysis, and simulations, we have illustrated how extreme variations in city sizes can cause the scale-free nature of blackouts.
Our explanation and refinement~\eqref{eq:constant} of the scaling law~\eqref{eq:scaling}
% and the link to city sizes
show that specific details such as network characteristics only appear in the pre-factor~\eqref{eq:constant}. The main parameter $\alpha$, which determines how fast the probability of a big blackout vanishes as its size grows, is completely determined by the city size distribution.
 Decreasing the constant~\eqref{eq:constant} by performing network upgrades (which in our framework is equivalent to decreasing $\lambda$) would only lead to a modest decrease in the likelihood of big blackouts. Consequently, it is questionable whether network upgrades, as considered in~\cite{Dobson2007, yang2017}, are the most effective way to mitigate the consequences of big blackouts.

Instead, it may be more effective to invest in responsive measures that enable consumers to react to big blackouts.
It is shown in~\cite{hines2009} that durations of blackouts have a tail which is decreasing much faster than~\eqref{eq:scaling}.
At the same time, production facilities often lack redundancy - even short blackouts can lead to huge costs, suggesting that the costs associated to a blackout are concave up to a certain duration.  Therefore, if the goal is to minimize the negative effects of a big blackout, it may be far more effective to invest in solutions (such as local generation and storage) that aim at surviving a blackout of a specific duration. This is consistent with recent studies on
the importance of resilient city design~\cite{Bai2018}.

Finally, our framework and insights suggest new ways of approaching scale-free phenomena in other transportation networks, such as highway traffic jams \cite{trafficjampaper}.
While transport network topologies are not scale-free, they may still exhibit scale-free behavior, caused by the scale-free nature of nodal sizes. 
\\

\noindent
{\bf Acknowledgements.}
We thank Sem Borst for useful discussions, and
the Isaac Newton Institute for support and hospitality during the program ``Mathematics of Energy Systems''.
The grants NWO 639.033.413 , NWO 024.002.003 and EPSRC EP/R014604/1 provided financial support.
 
%GHOST REFERENCES

\nocite{hill1975, %ok %OE417, ALREADY PRESENT
Mieghem2010, %ok
VanMieghem2017, %ok
Tondel2003, %ok
Watts1998, %ok
Wang2018syngrid, %ok
Zimmerman2011, %ok
MolzahnHiskens, %ok
SCIGRID1, %ok%SCIGRID00, ALREADY PRESENT
wpr, %ok
eurostat, %ok
eurostatboundaries, %ok
githubboundaries, %ok
SCIGRID2} %ok}
\bibliography{bibliography}
\clearpage

\onecolumngrid
\begin{center}
\textbf{\large Supplemental Material for:\\ Emergence of scale-free blackout sizes in power grids\\}
\end{center}

\twocolumngrid
\section{Background on Pareto distribution and parameter estimation}\label{s:pareto}
A Pareto-distributed random variable $X$ with minimum value $\xmin>0$ and tail exponent $\alpha>0$ is described by its complementary cumulative distribution function (CCDF) $\bar{F}(x)$
\begin{align}
%&P(x)=\frac{\gamma-1}{\xmin}\Bigl(\frac{x}{\xmin}\Bigr)^{-\gamma},\,x\ge \xmin\\
\bar{F}(x)=\mathbb{P}(X> x)=\Bigl(\frac{x}{\xmin}\Bigr)^{-\alpha},\,x\ge \xmin.
\end{align}
The expected value of $X$ is equal to $(\alpha\xmin)/(\alpha-1)$ if $\alpha>1$, and $\infty$ otherwise.

In order to analyze the power law behavior of city and blackout sizes, we use the PLFIT
method introduced in~\cite{Clauset2009} to fit a Pareto distribution to a given empirical dataset $\{x_i\}_{i=1}^N$.
The PLFIT method is based on a combination of the Hill estimator to find the
tail exponent $\alpha$, and on the Kolmogorov-Smirnov statistic to find $\xmin$, as outlined below.
For each possible choice of $\xmin$, the best-fitting tail index $\alpha$ is found via the Hill estimator~\cite{hill1975}
\[
\hat{\alpha}(\xmin)=n\Bigl[\sum_{x_i\ge \xmin} \ln \frac{x_i}{\xmin}\Bigr]^{-1}.
\] Then, the KS goodness-of-fit statistic $D(\xmin) = \max_{x\ge\xmin} |S(x) - P(x)|$ is calculated, where $S(x)$ is the empirical Cumulative Distribution Function (CDF) of the data and $P(x)$ is the CDF of the Pareto distribution with parameters $\xmin$ and $\hat{\alpha}(\xmin)$. Finally, the estimated $\hatxmin$ is the one that minimizes $D$ over all possible choices of $\xmin$.
Uncertainty in the estimated tail exponent $\hat{\alpha}(\hatxmin)$ and lower bound $\hatxmin$ is quantified via the nonparametric bootstrap method described in~\cite{Clauset2009}.
Finally, a goodness-of-fit test based on the KS statistic is used to generate a p-value that quantifies the plausibility of the power law hypothesis.
The authors in~\cite{Clauset2009} suggest to use the following (conservative) choice: the power law is \textit{ruled out} if $p \le 0.1$.

We remark that any automatic procedure for the estimation of the parameter $\xmin$ is imperfect and should be paired with additional, case-by-case analysis. For instance, it is not known whether the PLFIT estimator is consistent.
In this paper, we always couple the PLFIT procedure with the manual observations of the Hill plot, i.e. the graph of the mapping $\xmin\to\hat{\alpha}(\xmin)$, and report whether the PLFIT results are consistent with the visual analysis of this plot, i.e. whether $\hat{x}_{\text{min}}$ lies within a region where the values of $\alpha$ are relatively stable.

%\note{Write something about the more general class of regular varying functions. Mention other methods: Pickland estimator, double bootstrap method (Voitalov) etc.}

\section{Historical data analysis}

In this section, we analyze the scale-free behavior of US city and blackouts sizes.
The data for US city sizes, as per the 2000 US census, are available in~\cite{Clauset2009}. The data for US blackouts are extracted from the Electric Disturbance Events Annual Summaries, Form OE-417~\cite{OE417} of the US Department of Energy, which includes information on the date, area of interest and number of customers affected by outage events. Here, the size of a blackout is defined as the number of customers affected by it. The dataset covers the period 2002-2018.

Each record, or row, of the OE-417 dataset, contains information such as the date, area of interest and the number of customers affected in a single outage event.
The presence of missing or noisy records in the dataset
%~\footnote{for an investigation on completeness and accuracy of the dataset, we refer to~\cite{Fisher2012}.}
requires the following pre-processing actions: i) records for which the ``Number of customers affected'' entry is unknown are removed; ii) records for which the ``Number of customers affected'' consists of two or more values, corresponding to different US states, are modified by replacing the multiple values with their sum; iii) records for which the ``Number of customers affected'' entry is not purely numeric are removed. The only two exceptions to iii) are when \textit{both} the “cumulative” and “peak” number of customers affected are reported (in which case only the “cumulative” values is retained), and when the number of customers affected is described by a range of values (in which case the midpoint value is retained).

Table~\ref{tab:PLFIT_stats} reports the PLFIT estimated parameters, the corresponding standard deviations (calculated using the nonparametric bootstrap method in~\cite{Clauset2009}), as well as the KS p-values, which indicate a good fit.
The results for the $\alpha$-estimates (city sizes: $1.37\pm 0.06$; blackout sizes: $1.31\pm 0.08$ ) corroborate the claim that the scale-free behavior of blackout sizes is inherited from the power law distribution for city sizes.
\begin{table*}[ht]
\centering
\begin{tabular}{c|c|c|c|c|c}
\hline
\hline
Dataset & $N$ & $\ntail$ &$\hat{\alpha}(\hatxmin)$  & $\hatxmin$ & KS $\pv$  \\
\hline
US city sizes, 2000 Census ($\times 10^3$)
 & $19447$ &  $580$ & $1.37\pm 0.08$ & $52.5 \pm 11.6$ & $0.76$ \\
US blackout sizes, form OE-417 ($\times 10^3$) & $1341$ & $448$ & $1.31\pm 0.08$ &  $140 \pm 31.3$ & $0.32$\\
\hline
\hline
\end{tabular}
\caption{ \small PLFIT statistics for US city and blackout sizes. $\ntail$ is the number of data points $x_i\ge\hatxmin$. Standard deviations obtained via nonparametric bootstrap with $1000$ repetitions. }
\label{tab:PLFIT_stats}
\end{table*}

%The authors in~\cite{Clauset2009} report estimates of $\alpha=1.3\pm 0.3$ for the size of US blackouts during the period 1984-2002. The sensibly larger uncertainty in the tail exponent estimate can be explained by the smaller sample size ($N=211$), when compared to the OE dataset ($N=1341$ after preprocessing).
Fig.~\ref{fig:PLFIT} reports the CCDF and the PLFIT results, and Fig.~\ref{fig:Hill} the corresponding Hill plots. We observe that the estimated parameters lie in the flat portion of the Hill plots.

\begin{figure}[h!]
  \begin{subfigure}[t]{\columnwidth}
  \centering
                    \includegraphics[width=0.9\textwidth]{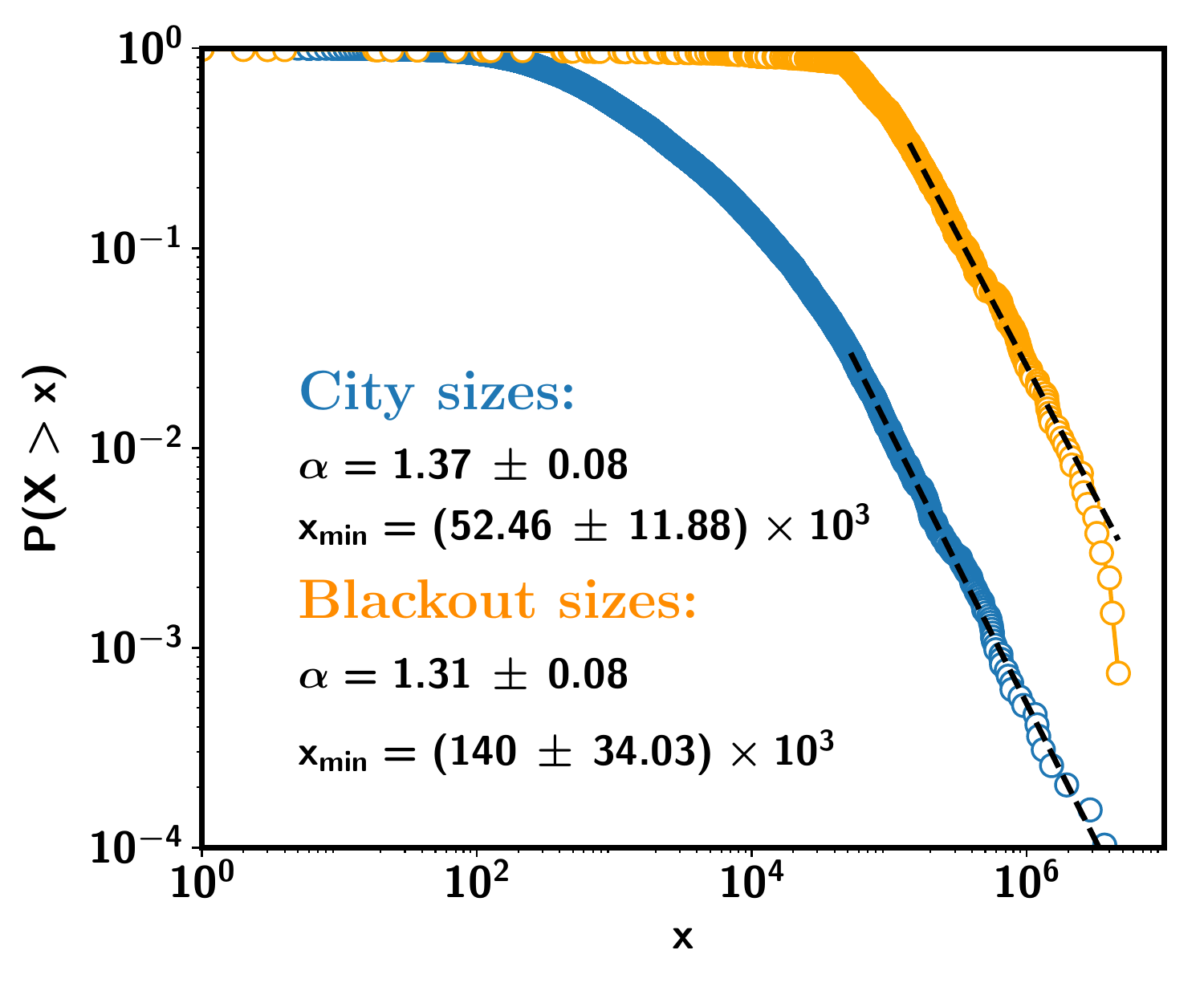}
          \caption{\small{Pareto tail behavior of US city and blackout sizes in the region $x>\xmin$. Estimates for $\alpha$ and $\xmin$, along with standard deviations, are based on PLFIT~\cite{Clauset2009}. Points represent the empirical complementary cumulative distribution function (CCDF); Solid line represents the CCDF of a Pareto distribution with parameters $\alpha,\xmin$.}}
          \label{fig:PLFIT}
\end{subfigure}\hfill
  \begin{subfigure}[t]{\columnwidth}
  \centering
          \includegraphics[width=0.9\textwidth]{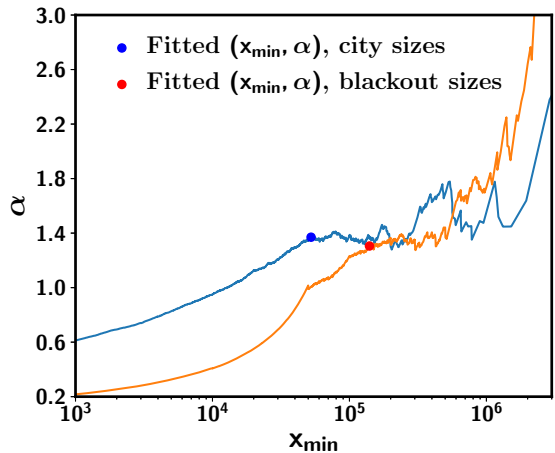}
        \caption{\small{Visualization of the estimated $\alpha$, obtained by only taking into consideration values in the region $x>\xmin$, as a function of $\xmin$. The PLFIT-estimated $\xmin$ for city sizes (blue dot) and blackout sizes (red dot) lie within a region where the values of $\alpha$ are relatively stable, substantiating the results of the PLFIT procedure.}}
      \label{fig:Hill}
      \end{subfigure}\hfill
      \caption{\small{Pareto tail behavior of US city and blackout sizes.}}
      \end{figure}

\section{Power flow}
We model the power grid as a connected graph $\mathcal{G}=\mathcal{G}(\N,\LL)$, where the set of nodes $\N$ represents the $n$ buses in the system, and the set of edges $\LL$ corresponds to the $m$ transmission lines. Let $\bg,\bd\in\R^{n}$ represent the nodal generation and load vectors, respectively, and $\bp=\bg-\bd$ be the net power injections vector. We make use of the \textit{DC approximation}, which is commonly used in high-voltage transmission system analysis~\cite{Purchala2005},
%\cite{Purchala2005,Stott2009,Powell2004,Wood2014}
to model the relationship between active power injections $\bp$ and active line power flows $\bfl\in\R^m$, which is given by the linear mapping
\begin{equation}\label{eq:DC}
\bfl=\bV(\bg-\bd).
\end{equation}
The matrix $\bV\in\R^{m\times n}$ is known as the Power Transfer Distribution Factors (PTDF) matrix and is constructed as outlined below.

\subsection{PTDF matrix}\label{ss:properties_PTDF}
Choosing an arbitrary but fixed orientation of the transmission lines, the network structure is described by the \textit{edge-vertex incidence matrix} $\bC\in\R^{m\times n}$ defined as

\begin{equation*}
 	C_{\ell, i}=\begin{cases}
    	\phantom{-}1		&\text{if } \ell=(i,j),\\
    	-1		&\text{if } \ell=(j,i),\\
    	\phantom{-}0    	&\text{otherwise}.
    \end{cases}
\end{equation*}
Denote by $\beta_\ell>0$ the weight of edge $\ell\in\LL$, corresponding to the \textit{susceptance} of that transmission line. Note that $\beta_{\ell}=x_{\ell}^{-1}$, where $x_{\ell}$ is the reactance of line $\ell$. Denote by $\bB$ the $m \times m$ diagonal matrix defined as $\bB=\mathrm{diag}(\beta_1,\dots, \beta_m)$.
The network topology and weights are simultaneously encoded in the \textit{weighted Laplacian matrix} of the graph $G$, defined as $\bL = \bC^\top \bB \bC$ or entry-wise as
\begin{equation*}
	L_{i,j} =\begin{cases}
		-\beta_{i,j}							& \text{if } i \neq j,\\
		\sum_{k\neq j} \beta_{i,k} 	& \text{if } i=j.
	\end{cases}
\end{equation*}
All the rows of $\bL$ sum up to zero and thus the matrix $\bL$ is singular.

According to the \textit{DC approximation}, the relation between any zero-sum vector of power injections $\bp \in \R^n$ and the phase angles $\btheta \in \R^n$ can be written in matrix form as $\bp = \bL \btheta$. Defining $\bL^+ \in \R^{n \times n}$ as the \textit{Moore-Penrose} pseudo-inverse of $\bL$, we can rewrite this as
\begin{equation}
\label{eq:dcapprox}
	\btheta = \bL^+ \bp.
\end{equation}

\noindent
The line power flows $\bfl$ are related to the phase angles $\btheta$ via the linear relation $\bfl = \bB\bC\btheta$. In view of~\eqref{eq:dcapprox}, the line power flows $\bfl$ can be written as a linear
transformation of the power injections $\bp$, i.e.
\begin{equation}
\label{eq:PTDF}
	\bfl = \bV \bp,
\end{equation}
where $\bV:=\bB\bC\bL^+$ is the PTDF matrix.

The following lemma is based on a well-known result in graph theory (see, for example,~\cite{Mieghem2010}).
\begin{lemma}\label{lm:rk}
If $G$ is a connected graph, $\text{rk }(\bV)=\text{rk }(\bC)=\text{rk }(\bL)=\text{rk }(\bL^+)=n-1$, and the null space of $\bV$ is the one-dimensional subspace generated by $\be=(1,\ldots,1)\in\R^n$, i.e.
\begin{align*}\text{Ker }(\bV)=\text{Ker }(\bC)=\text{Ker }(\bL)=\text{Ker }(\bL^+)=<\be>.
\end{align*}
\end{lemma}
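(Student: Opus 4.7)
The plan is to prove the four kernel equalities in a chain
$$
\text{Ker}(\bC) \;=\; \text{Ker}(\bL) \;=\; \text{Ker}(\bL^+) \;=\; \text{Ker}(\bV) \;=\; \langle \be\rangle,
$$
and then read off the rank statement from the rank--nullity theorem, since $\bC, \bV \in \R^{m\times n}$ and $\bL, \bL^+ \in \R^{n\times n}$ all live on an $n$-dimensional domain.

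First I would handle $\bC$ directly from connectivity. If $\bC\bx = \bzeros$ then for every oriented edge $\ell = (i,j)$ one has $x_i - x_j = 0$, so $x_i = x_j$ across every adjacency. Because $\G$ is connected, any two nodes are joined by a path, forcing $\bx$ to be constant, hence $\text{Ker}(\bC) \subseteq \langle \be\rangle$; the reverse inclusion is immediate from $\bC \be = \bzeros$.

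Next I would pass from $\bC$ to $\bL = \bC^\top \bB \bC$ by exploiting that $\bB = \mathrm{diag}(\beta_1,\dots,\beta_m)$ is diagonal with strictly positive entries, so $\bB^{1/2}$ is real and invertible. Writing $\bL = (\bB^{1/2}\bC)^\top (\bB^{1/2}\bC)$ makes $\bL$ symmetric positive semidefinite, and
\begin{equation*}
\bx^\top \bL \bx \;=\; \|\bB^{1/2}\bC\bx\|^2 \;=\; 0 \iff \bC\bx = \bzeros,
\end{equation*}
which gives $\text{Ker}(\bL) = \text{Ker}(\bC) = \langle\be\rangle$. For the pseudo-inverse step, symmetry of $\bL$ means that in its spectral decomposition $\bL^+$ inverts the nonzero eigenvalues while sharing the zero eigenspace of $\bL$; hence $\text{Ker}(\bL^+) = \text{Ker}(\bL)$ and $\text{Range}(\bL^+) = \text{Range}(\bL) = \text{Ker}(\bL)^\perp = \langle\be\rangle^\perp$.

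The final and genuinely substantive step is the claim $\text{Ker}(\bV) = \langle\be\rangle$ for $\bV = \bB\bC\bL^+$. The inclusion $\langle\be\rangle \subseteq \text{Ker}(\bV)$ follows from $\bL^+\be = \bzeros$. For the converse, suppose $\bV\bx = \bzeros$. Since $\bB$ is invertible this forces $\bC(\bL^+\bx) = \bzeros$, so by the first step $\bL^+\bx = c\be$ for some scalar $c$. But $\bL^+\bx \in \text{Range}(\bL^+) = \langle\be\rangle^\perp$, so $c\be$ must be orthogonal to $\be$, forcing $c=0$; hence $\bL^+\bx = \bzeros$, i.e. $\bx \in \text{Ker}(\bL^+) = \langle\be\rangle$. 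This orthogonality argument (range of a symmetric pseudo-inverse is orthogonal to its kernel) is the one step where one must be careful: it is what prevents $\bV$ from having a kernel strictly larger than $\langle\be\rangle$, and it is where the symmetry of $\bL$ is actually used. The equality of ranks $\text{rk}(\bV) = \text{rk}(\bC) = \text{rk}(\bL) = \text{rk}(\bL^+) = n-1$ is then immediate from rank--nullity applied to each operator, since each has a one-dimensional kernel in $\R^n$.
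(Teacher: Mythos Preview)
Your proof is correct. The paper does not actually supply a proof of this lemma; it merely states that the result is ``based on a well-known result in graph theory'' and refers the reader to~\cite{Mieghem2010}. Your argument fills that gap cleanly: the identification $\text{Ker}(\bC)=\langle\be\rangle$ from connectivity, the passage to $\bL$ via positive-definiteness of $\bB$, the spectral characterization of $\bL^+$, and especially the orthogonality step showing $\text{Ker}(\bV)\subseteq\langle\be\rangle$ (using $\text{Range}(\bL^+)=\langle\be\rangle^\perp$) are all sound and constitute exactly the standard route to this fact. There is nothing to compare against in the paper itself.
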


The following lemmas are technical results which will be needed in Section~\ref{s:CascadeModel}.
\begin{lemma}\label{lm:swap}
Changing the orientation of a subset of lines $\LL'\subset \LL $ has the effect of swapping the sign of the corresponding rows of the PTDF matrix $\bV$.  In particular, it is always possible to choose the orientation such that $\bV\be_1\ge 0$.
\end{lemma}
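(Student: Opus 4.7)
The plan is to show that a re-orientation corresponds to a diagonal sign action on the rows of $\bC$, verify that this action leaves $\bL$ (and hence $\bL^+$) invariant, and conclude that it acts on $\bV$ by left multiplication with the same diagonal sign matrix. Concretely, let $\LL'\subset\LL$ and define the diagonal sign matrix $\bD\in\R^{m\times m}$ by $D_{\ell\ell}=-1$ for $\ell\in\LL'$ and $D_{\ell\ell}=+1$ otherwise. By the definition of the edge-vertex incidence matrix in Section~\ref{ss:properties_PTDF}, reversing the orientation of the lines in $\LL'$ exactly negates the corresponding rows of $\bC$, i.e.\ produces $\bC'=\bD\bC$.

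Next I would check that $\bL$ is unaffected by the re-orientation. Since $\bD$ and $\bB$ are both diagonal, they commute, and since $\bD^2=\bI$, we have
\begin{equation*}
\bL'=\bC'^{\top}\bB\bC'=\bC^{\top}\bD^{\top}\bB\bD\bC=\bC^{\top}\bD\bB\bD\bC=\bC^{\top}\bB\bC=\bL.
\end{equation*}
Hence $\bL^+$ is also unchanged, and the new PTDF matrix is
\begin{equation*}
\bV'=\bB\bC'\bL'^{+}=\bB\bD\bC\bL^{+}=\bD\bB\bC\bL^{+}=\bD\bV,
\end{equation*}
where again the middle step uses that diagonal matrices commute. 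Since $\bD$ is diagonal with $\pm 1$ entries, left multiplication by $\bD$ negates exactly the rows indexed by $\LL'$, which is the first claim.

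For the second claim, given the original PTDF matrix $\bV$, define $\bD$ entrywise by $D_{\ell\ell}=\sgn\bigl((\bV\be_1)_\ell\bigr)$, with the convention $\sgn(0)=+1$, and let $\LL'=\{\ell : D_{\ell\ell}=-1\}$. Applying the first part to this $\LL'$ produces an orientation whose PTDF satisfies $\bV'\be_1=\bD\bV\be_1$, and by construction $(\bD\bV\be_1)_\ell=|(\bV\be_1)_\ell|\geq 0$ for every $\ell$, as desired.

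The argument is essentially bookkeeping; the only substantive point is the invariance $\bL'=\bL$, which I expect to be the one place a reader might stumble, since naively one might fear that flipping rows of $\bC$ changes the Laplacian. The invariance hinges precisely on $\bD$ and $\bB$ being diagonal (so that $\bD\bB\bD=\bB$), which is why the proof uses $\bL=\bC^{\top}\bB\bC$ rather than any formulation that would break this cancellation.
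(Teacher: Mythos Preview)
Your proof is correct and follows essentially the same approach as the paper's: reduce re-orientation to left multiplication of $\bC$ by a diagonal sign matrix, observe that $\bL$ (hence $\bL^+$) is invariant because the sign matrix commutes with $\bB$ and squares to the identity, and conclude $\bV'=\bD\bV$. Your version is in fact slightly more careful than the paper's --- you handle an arbitrary subset $\LL'$ at once (the paper treats a single line and leaves the iteration implicit), you retain the weight matrix $\bB$ in the Laplacian computation (the paper's proof writes $\tilde{\bL}=\tilde{\bC}^\top\tilde{\bC}$, silently dropping $\bB$), and you spell out the explicit choice $D_{\ell\ell}=\sgn((\bV\be_1)_\ell)$ for the second claim, which the paper leaves to the reader.
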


\begin{proof}
Changing the orientation of a line from $l_k=(i,j)$ to $\tilde{l}_k=(j,i)$, by definition, amounts to swapping the sign of the $k$-th row of matrix $\bC$, yielding a modified matrix $\tilde{\bC}=\bI^{(k)} \bC$, where $\bI^{(k)}$ is a diagonal matrix with $I_{ii}^{(k)}=1$ if $i\neq k$ and $I_{kk}^{(k)}=-1$.  Since $\tilde{\bL}=\tilde{\bC}^{\top}\tilde{\bC}=\bC^{\top} \bI^{(k)}\bI^{(k)}\bC=\bC^{\top}\bC=\bL$, the matrices $\bL$ and $\bL^+$ are not affected by the change.
As a consequence, the modified PTDF matrix $\overline{\bV}=\bB\tilde{\bC}\bL^+=\bI^{(k)}\bV$ differs from $\bV$ only by the swapped signs on the $k$-th row.
\end{proof}

\begin{lemma}\label{lm:C,V}
Let $G$ be assigned the orientation such that the set of edges incident to node $1$ is $\LL_1=\{(1,j)\,|\, j \text{ is adjacent to }1\}$, i.e.~$C_{\ell,1}=1=-C_{\ell,j}$ for all $\ell=(1,j)\in\LL_1$. Then, $V_{\ell,1}\ge 0$ for every $\ell\in\LL_1$. The converse is also true.
%
% and assume further that the edges $\LL_1$ appear first in the list of edges of $G$.

\end{lemma}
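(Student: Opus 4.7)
The plan is to reduce the claim $V_{\ell,1} \geq 0$ to a discrete maximum principle for the potential vector $\theta := \bL^+ \be_1$. With the prescribed outward orientation $C_{\ell,1} = -C_{\ell,j} = 1$ for $\ell = (1,j) \in \LL_1$, the definition $\bV = \bB \bC \bL^+$ gives immediately
\[
V_{\ell,1} \;=\; \beta_\ell\bigl(L^+_{1,1} - L^+_{j,1}\bigr) \;=\; \beta_\ell(\theta_1 - \theta_j),
\]
so the forward direction amounts to showing $\theta_1 \geq \theta_j$ for every neighbour $j$ of node~$1$.

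To establish this, I would use that $\bL \bL^+$ is the orthogonal projection onto $\mathrm{range}(\bL) = \langle \be \rangle^\perp$, a consequence of Lemma~\ref{lm:rk} together with the symmetry of $\bL$. Hence $\bL \theta = \be_1 - \tfrac{1}{n}\be$, which read coordinate-wise at any node $k \neq 1$ gives
\[
\sum_{j \sim k} \beta_{kj}(\theta_k - \theta_j) \;=\; -\tfrac{1}{n} \;<\; 0.
\]
Thus $\theta_k$ is strictly less than the susceptance-weighted average of its neighbours' potentials. If the maximum of $\theta$ were attained at some $k^\star \neq 1$, this inequality would supply a neighbour with strictly larger potential, contradicting maximality. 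Therefore $\theta_1 = \max_k \theta_k$, and the same inequality applied at $k = j$ excludes the equality $\theta_j = \theta_1$ for any $j \neq 1$; one obtains in fact $V_{\ell,1} > 0$ strictly for every $\ell \in \LL_1$.

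For the converse, I would invoke Lemma~\ref{lm:swap}: reversing the orientation of $\ell \in \LL_1$ flips the sign of $V_{\ell,1}$. Suppose an orientation satisfies $V_{\ell,1} \geq 0$ for all $\ell \in \LL_1$ but that some edge $\ell_0 \in \LL_1$ is oriented inward as $(j_0, 1)$. Applying the forward direction to the orientation obtained by flipping $\ell_0$ to outward produces a strictly \emph{positive} PTDF entry on $\ell_0$; by the swap lemma the original inward orientation must therefore yield a strictly negative value, contradicting the hypothesis. Hence every edge of $\LL_1$ is outward.

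The main obstacle is securing the strict inequality in the maximum principle, since the clean converse hinges on strictness (without it, "tie" edges where $V_{\ell,1}$ vanishes would make the orientation indeterminate). This strictness in turn requires that at every non-source node the projection residual $-\tfrac{1}{n}\be$ is strictly negative, which depends on connectedness of $G$ through the one-dimensional kernel statement of Lemma~\ref{lm:rk}. Once that is in place, both directions follow.
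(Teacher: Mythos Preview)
Your argument is correct and reaches the same endpoint as the paper, but by a genuinely different route. The paper's proof is a two-line appeal to an external fact: it cites \cite{VanMieghem2017} for the inequality $L^+_{1,1}\ge L^+_{1,j}$ (the diagonal entry of $L^+$ dominates its row), then reads off $V_{\ell,1}=\beta_\ell\,C_{\ell,1}(L^+_{1,1}-L^+_{1,j})$ to conclude both directions at once. You instead \emph{prove} that dominance from scratch, via the discrete maximum principle for the potential $\theta=L^+e_1$: using $\bL\bL^+=\bI-\tfrac{1}{n}\bJ$ to obtain $\bL\theta=e_1-\tfrac{1}{n}e$, you deduce that at every $k\neq 1$ the value $\theta_k$ lies strictly below the weighted average of its neighbours, forcing the unique maximum at node~$1$. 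This is self-contained, carries a clean physical interpretation (node~$1$ is the unique source, so its potential is highest), and delivers the strict inequality $\theta_1>\theta_j$ that the converse genuinely needs; the paper's version leaves strictness implicit in the cited corollary.

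One small wording issue in your converse: when you write ``applying the forward direction to the orientation obtained by flipping $\ell_0$ to outward,'' the lemma's forward hypothesis requires \emph{all} edges of $\LL_1$ to be outward, not just $\ell_0$. The fix is immediate, since $\theta=L^+e_1$ is orientation-independent (Lemma~\ref{lm:swap} shows $\bL$ and $\bL^+$ are unchanged), so your inequality $\theta_1>\theta_{j_0}$ holds regardless, and $V_{\ell_0,1}=\beta_{\ell_0}(\theta_{j_0}-\theta_1)<0$ follows directly for the inward edge without any flipping detour. Alternatively, flip \emph{all} inward edges of $\LL_1$ simultaneously, apply the forward direction, then use the swap lemma row-by-row.
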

\begin{proof}
First, note that largest element in each row of $L^+$ is its diagonal entry (Corollary 1 in~\cite{VanMieghem2017}), i.e.~ $L^+_{1,1}-L^+_{1,j}\ge 0$ for every $\ell=(1,j)\in\LL_1$.
For any line $\ell=(1,j)\in\LL_1$, we have $V_{\ell,1}=(\bC\bL^+)_{\ell,1}=C_{\ell,1}L^+_{1,1}+C_{\ell,j}L^+_{1,j}$, where $C_{\ell,1}=-C_{\ell,j}=\pm 1$ depending on the orientation of line $\ell$. Thus, $V_{\ell,1}\ge 0$ if and only if $C_{\ell,1}=1=-C_{\ell,j}$.
\end{proof}

\subsection{Optimal Power Flow}\label{ss:OPF}
The Optimal Power Flow (OPF) program is
an optimization problem that determines the generation
schedule minimizing the total system generation cost
while satisfying demand/supply balance and network
physical constraints.
In its full generality, the OPF is a nonlinear, nonconvex optimization problem. For the purpose of this paper,
we will focus on a tractable approximation based on the DC power flow equations referred to as DC-OPF, which can be formulated as the following optimization problem:

%Using the DC approximation, OPF is an optimization model formulated as follows:
\begin{alignat}{6}
& \underset{g\in\R^n}{\min} & & \sum_{i=1}^n C_i(g_i) \label{eq:objective}\\
& \text{s.t.}        & & \sum_{i=1}^n g_i=\sum_{i=1}^n d_i, &    \label{eq:balance} \\
&					  & & \underline{g}_i\le g_i \le \bar{g}_i,  &\ i \in \N,   \label{eq:gen}\\
&					  &	 -& \fbar_{\ell} \le V_{\ell}(g-d)\le \fbar_{\ell}, &\ \ell \in \LL,   \label{eq:lines}
\end{alignat}

$C_i(\cdot):\R\to\R$ denotes the cost function of generation at node $i$, $\underline{\bg},\bar{\bg}\in\R^n$ denote, respectively, the vector of nodal minimum and maximum generation capacities, and $\bfbar\in\R^m$ denotes the vector of line limits. We assume that $C_i(\cdot)$ is an increasing quadratic function. Specifically, we assume $C_i(g_i)=g_i^2/2$, $i=1,...,n$. For the purpose of this work, we do not consider generator limits, i.e. $\underline{g}_i=-\infty,\bar{g}_i=\infty$, $i=1,...,n$.

\subsection{Power flow redistribution}\label{ss:PowerFlowRedistribution}

In the event of the failure of a subset of transmission lines $\LL^{\prime}\subset \LL$, and provided that the power injections remain unchanged, the power flows will redistribute among the remaining lines according to power flow physics, provided that the altered graph $\widetilde{G}=(\mathcal{N},\LL\setminus\LL^{\prime})$ remained connected. The way the power flows redistribute is governed by the new PTDF matrix $\widetilde{\bV}$, which can be constructed analogously to $\bV$, mapping the (unchanged) power injections to the new power flows. We assume that the redistribution occurs instantaneously, without any transient effects.

As an illustration, we show how the redistributed power flows can be calculated in the special case of an isolated failure $\LL^{\prime}=\{\ell\}$. In this case, it is enough to calculate the vector $\bphi \in \R^{m-1}$ of redistribution coefficients, known as \textit{line outage distribution factors}. The quantity $\phi^{(\ell)}_j$ takes values in $[-1,1]$, and $\smash{|\phi^{(\ell)}_j|}$ represents the percentage of power flowing in line $\ell$ that is redirected to line $j$ after the failure of the former.
In particular, the new power flow configuration after the failure of line $\ell=(i,j)$, denoted by $\bfell \in\R^{m-1}$, is given by
\begin{equation}\label{eq:redistribution}
\fell_k=f_k+  f^{(\ell)}_\ell \phi^{(\ell)}_{k}, \,\forall \ell\neq k,
\end{equation}
where, for $k=(a,b)$ and $\ell =(i,j)$, the coefficient $\phi_{k,\ell} \in \R$ can be computed as
\begin{equation}
\label{eq:phi}
	\phi_{k,\ell}=\phi_{(i,j),(a,b)}= \beta_{\ell}^{-1} \cdot \frac{R_{a,j}-R_{a,i}+R_{b,i}-R_{b,j}}  {2(1-x_{i,j}^{-1} R_{i,j})},
\end{equation}
where $R_{i,j}$ is the \textit{effective resistance} between nodes $i$ and $j$, given by
 \begin{equation*}
	R_{i,j} = (\mathbf{e}_i - \mathbf{e}_j)^T L^+ (\mathbf{e}_i - \mathbf{e}_j) = (L^+)_{i,i}+(L^+)_{j,j}-2(L^+)_{i,j}.
\end{equation*}.

\section{Cascading failure model}\label{s:CascadeModel}
In view of the DC-OPF, in order to obtain a fundamental understanding of the correlation between blackout sizes and city sizes using the DC approximation model, we require a framework that adequately sets the power demand, the transmission line limits, generation limits, and the cost function for any fixed topology $\mathcal{G}= (\mathcal{N} , \mathcal{L})$. In addition, we need to specify a mechanism that causes the initial line failure, as well as which lines possibly fail next after the power flow redistribution. For this purpose, we consider a framework that consists of three problems: the \textit{planning} problem, the \textit{operational} problem, and the \textit{emergency} problem. Next, we explain our framework in more detail, followed by listing some vital properties.

\subsection{Description of the framework}\label{ss:description}
The planning problem refers to how the generation limits, the line limits and power demand are determined with respect to the city sizes $X_1,...,X_n$. We assume that each node represents a city with size $X_i$ inhabitants. For ease of presentation, we consider a framework with a static setting where each inhabitant demands one unit of energy, i.e.~$d_i=X_i$ for every $i=1,...,n$. We assume that the cost function is an increasing quadratic function and that generator limits do not pose an effective constraint in the DC-OPF. In other words, $C_i(g_i)=g_i^2/2$, $i=1,...,n$, and $\underline{g}_i=-\infty,\bar{g}_i=\infty$, $i=1,...,n$. The line limits are set as a fraction of the absolute power flow in a setting where also the line limits pose no effective constraint. More specifically, in the absence of any generator and transmission line limits, it is easy to see that the solution of the DC-OPF is $\bg^{(\text{planning})}=\frac{1}{n}\sum_{i=1}^n X_i \be$. The associated flow vector is given by
$\bfl^{(\text{planning})}=\bV (\bg^{(\text{planning})}-\bX)=-\bV\bX$, where we used that $\bV \bg^*=\bzeros$ (Lemma~\ref{lm:rk}). For a safety loading factor $\lambda \in [0,1]$, referred as loading factor in the rest, the \textit{operational} line limits are set as
%\begin{equation}
%\label{linelimits}
%F = \lambda V (X - \bar X_n e) = \lambda V X,
%\end{equation}
\begin{equation}
\label{linelimits}
\bfbar_j = \lambda  \left\vert(\bV \bX \right)_j\vert, \hspace{1cm} j=1,...,m.
\end{equation}

In the operational problem, we solve the DC-OPF for an increasing quadratic cost function and line limits as in~\eqref{linelimits} to obtain the generation vector $\bg$. That is, we solve
\begin{alignat}{6}
& \underset{g\in\R^n}{\min} & & \sum_{i=1}^n g_i^2/2 \\
& \text{s.t.}        & & \sum_{i=1}^n g_i=\sum_{i=1}^n X_i, &     \\
&					  &	 & \bV\bX - \lambda \left\vert\bV \bX \right\vert \leq \bV \bg \leq  & \bV\bX + \lambda \left\vert\bV \bX \right\vert,
\end{alignat}
where $|\bV\bX|$ denotes the vector with elements $(|\bV\bX|)_j = |(\bV\bX)_j|$, $j=1,...,m$.

Finally, in the emergency problem, we focus on the failure process after an initial disturbance. We assume that the initial failure is caused by a single line failure, chosen uniformly at random over all lines. We point out that our framework can be extended to multiple initial line failures, or adapted to deal with generator failures. The initial failure may cause a cascading effect that leads to multiple line failures that disintegrate the network. A consecutive line failure occurs whenever there is at least one line such that its \textit{emergency} line limit is exceeded. That is, instead of considering the conservatively chosen operational line limits $\fbar_{\ell}$, we take the line limits to be $F_{\ell}=\lambda^*\fbar_{\ell}$ for some constant $\lambda^*>1$. A canonical choice is $\lambda^*=1/\lambda$. We assume that line failures occur subsequently, and occurs at the line where its relative exceedance is largest.

Whenever line failures cause the network to disconnect in multiple islands, we assume that the energy balance is restored by proportionally lowering either generation or demand at all nodes. Naturally, this alters the line power flows. More specifically, before the initial disturbance occurs, the network flows are given by $\bV(\bg-\bX)$, where $\bg$ is the solution of the DC-OPF in the operational problem. After any line failure, we check whether this causes the network to disconnect, and if so, we proportionally lower the generation in one component and the demand in the other component such that demand and generation are balanced in the two disconnected components. The network flows are updated according to the laws of physics in every component. That is, the removal of one or more lines yields a modified matrix $\widetilde{\bV}$ (see Section~\ref{ss:PowerFlowRedistribution}), and possibly modified generation $\tilde{\bg}$ and demand $\tilde{\bd}$. The line flows are given by $\widetilde{\bV}(\tilde{\bg}-\tilde{\bd})$. This cascading failure process continues until the line limits $F_{\ell}$ of all surviving lines are sufficient to carry the power flows.

This iterative process leads to a network having disconnected sets. We make the convention that $A_1$ is the set of nodes that contains the city with the largest demand after the cascade has taken place. We point out that the set $A_1$ is random, and in particular, $A_1= \{1,...,n\}$ if the cascade stops without causing network disconnections.

Whenever the network disintegrates in multiple components, we alter the generation and demand to restore the power balance in every component.
% The lowering of the power demand reflects the event that customers are affected: the total power demand cannot be met and load is shed.
We approximate the total of load that is shed or equivalently, the number of customers affected by the blackout, by the mismatch between generation and demand in the component containing the city with highest power demand, defined as
\begin{equation}\label{eq:S_def}
S = \left\vert \sum_{i\in A_1} (X_i-g_i) \right\vert.
\end{equation}
Due to properties of the Pareto distribution, this turns out to be a good approximation as it yields exactly the same limiting behavior. We study this notion in more detail in the next sections.

\subsection{Principle of a single city with large demand}
A vital property in our framework is that the only likely way to have a large blackout is when there is a single city that has a large power demand. To formalize this notion, write $d_1=\max\{d_1,...,d_n\}$ with $d_i, i=1,...,n$ independent and identically Pareto distributed power demands. Note that for every $\epsilon>0$,

\begin{align*}
&\Prob\left( S > x \right) = \\&\Prob\left( S > x ; \sum_{i=2}^n d_i < \epsilon d_1 \right) + \Prob\left( S > x ; \sum_{i=2}^n d_i \geq \epsilon d_1 \right).
\end{align*}

It turns out that we can show that (in certain settings) the first term on the right-hand side has a Pareto tail, and the second term is negligible. More specifically, the following result can be shown.

\begin{lemma}\label{lem:SingleLargeDemand}
Suppose $d_i$, $i=1,...,n$ are independent and identically Pareto distributed with tail exponent $\alpha>0$, and write $d_1=\max\{d_1,...,d_n\}$. For every $\epsilon >0$, as $x \rightarrow \infty$,
\begin{align}
\Prob\left( S > x ; \sum_{i=2}^n d_i \geq \epsilon d_1 \right) = O\left(x^{-2\alpha}\right).
\end{align}
\end{lemma}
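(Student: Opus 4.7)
The plan is to reduce the event $\{S > x,\,\sum_{i=2}^n d_i \geq \epsilon d_1\}$ to the event that at least two of the underlying iid Pareto variables are of order $x$, then close with a union bound and the one-variable tail estimate. The key preliminary observation is that the blackout size is bounded by a constant multiple of the total demand, $S \leq C\sum_{i=1}^n d_i$ for some $C > 0$: the mismatch in the island $A_1$ cannot exceed the total demand in $A_1$, which is itself at most $\sum_i d_i$.

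With this bound in hand, $S > x$ implies $\sum_{i=1}^n d_i > x/C$. The hypothesis $\sum_{i=2}^n d_i \geq \epsilon d_1$ rearranges to $d_1 \leq \epsilon^{-1}\sum_{i=2}^n d_i$; substituting into $d_1 + \sum_{i=2}^n d_i > x/C$ produces $\sum_{i=2}^n d_i > c_1 x$ with $c_1 = \epsilon/(C(1+\epsilon))$. Since $\sum_{i=2}^n d_i \leq (n-1)\max_{i\geq 2} d_i$, this forces $\max_{i\geq 2} d_i > cx$ for $c := c_1/(n-1)$, and since $d_1$ is the maximum of all demands we also have $d_1 > cx$. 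Hence the event is contained in the event that at least two of the $n$ underlying iid Pareto variables exceed $cx$. By a union bound over the $\binom{n}{2}$ pairs, together with independence and the tail $\Prob(d > y) \sim K y^{-\alpha}$ valid for each iid variable,
\begin{equation*}
\Prob\bigl(S > x,\,\textstyle\sum_{i=2}^n d_i \geq \epsilon d_1\bigr) \leq \binom{n}{2}\Prob(d > cx)^2 = O(x^{-2\alpha}),
\end{equation*}
which is the claimed bound.

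The main potential obstacle is a clean justification of $S \leq C\sum_i d_i$: since $g_i$ in~\eqref{eq:S_def} is the operational generation and the DC-OPF is posed without generator limits, one cannot simply discard $g_i$. A painless workaround is to invoke the remark immediately following~\eqref{eq:S_def} that $S$ has the same asymptotic tail as the true post-cascade load shed, which is trivially at most $\sum_i d_i$; alternatively, one uses the minimization property of the quadratic-cost operational DC-OPF to bound $\sum_{A_1}|g_i|$ by a multiple of $\sum_i X_i$. Either route only affects the prefactor of the estimate and preserves the $O(x^{-2\alpha})$ scaling.
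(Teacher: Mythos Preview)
Your proof is correct and follows essentially the same approach as the paper: both arguments reduce the event to ``at least two of the underlying iid Pareto variables exceed a constant multiple of $x$'' and then finish with a union bound yielding $O(x^{-2\alpha})$. The paper's chain of inequalities is marginally more direct---it uses $S \leq \sum_i d_i \leq n d_1$ to get $d_1 > x/n$ in one step, then pigeonholes $\sum_{i\geq 2} d_i \geq \epsilon d_1$ to extract a second large coordinate---whereas you route through $\sum_i d_i$ first; notably, you are more careful than the paper about justifying the preliminary bound $S \leq C\sum_i d_i$, which the paper simply asserts.
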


\begin{proof}
We observe that the total mismatch can never exceed the sum of all demands, and hence
\begin{align*}
S \leq \sum_{i=1}^n d_i \leq n d_1.
\end{align*}
Therefore,
\begin{align*}
&\Prob\left( S > x ; d_i > \sum_{i=2}^n d_i \geq \epsilon d_1 \right)\\ \leq &\Prob\left( d_1 > \frac{x}{n} ; d_i > \epsilon \frac{d_1}{n} \textrm{ for some } i=2,...,n  \right) \\
\leq  &\Prob\left( d_i > \epsilon \frac{x}{n^2} \textrm{ for some } i=2,...,n \right)
\end{align*}
Write $I(y)=\vert \{i : d_i >y \}\vert$. Since for every $\eta >0$,
\begin{align*}
\Prob\left( I(\eta x) \geq 2 \right) = O\left(x^{-2\alpha}\right)
\end{align*}
as  $x \rightarrow \infty$, the result follows.
\end{proof}

In other words, Lemma~\ref{lem:SingleLargeDemand} implies that if for some $\epsilon>0$ sufficiently small,
\begin{align*}
\Prob\left( S > x ; \sum_{i=2}^n d_i < \epsilon d_1 \right) \sim C x^{-\alpha}
\end{align*}
holds for some constant $C \in (0,\infty)$, then the only likely way to have a large blackout is when there is a single city that has a large demand.

\subsection{Closed-form solution for the operational OPF in the case $\bd=\be_1$.}\label{ss:preliminary}
Note that without loss of generality, we can always normalize our framework by dividing all parameters (e.g. generation, line limits, etc.) by the sum of all power demands. This yields an equivalent setting where the total power demand equals one. In view of Lemma~\ref{lem:SingleLargeDemand}, it is sensible to consider the special case where $\bd=\be_1$. That is, node~$1$, henceforth referred to as the sink node, has unit demand, while all other nodes have zero demand. For this special case, a closed-form solution exists for the generation vector in the operational OPF.

First, we consider the planning problem. As stated in the model description, in the absence of any generator and transmission line limits, the solution of the planning OPF is $\bg^*=\frac{1}{n}\be$, with associated flow vector $\bfl^* =\bV (\bg^*-\be_1)=-\bV\be_1$, where we used that $\bV \bg^*=\bzeros$ (Lemma~\ref{lm:rk}). Therefore, the operational problem~\eqref{eq:objective}-\eqref{eq:lines} reduces to
\begin{alignat}{6}\label{eq:Plambda}
& \underset{g\in\R^n}{\min} & & \sum_{i=1}^n g_i^2/2 \\
& \text{s.t.}        & & \be^{\top}\bg=\be^{\top}\be_1=1, &    \\
&					  &	 -&\lambda |\bV \be_1| \le \bV(\bg-\bd)\le \lambda |\bV\be_1|, &
\end{alignat}
which we will denote by $P(\lambda)$. Lemma~\ref{lm:g_prelimit} shows that the solution of $P(\lambda)$ is of closed form.

\begin{lemma}\label{lm:g_prelimit}
Let $\lambda\in(0,1)$. Let $G$ be assigned the orientation such that $\bV \be_1\ge\bzeros$. Then, the solution of $P(\lambda)$ is given by
\[\bg(\lambda) =  \lambda \frac{1}{n}\be + (1-\lambda) \be_1,\]
i.e. $g_1(\lambda) = 1 - \lambda \frac{n - 1}{n}$ and
$g_i(\lambda) = \lambda \frac{1}{n}$ for all $i = 2, \dots, n$.
The corresponding line flows are at capacity and are given by $\bfl(\lambda)=-\lambda \bV\be_1$.
\end{lemma}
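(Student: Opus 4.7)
The plan is to certify the candidate by (i) direct substitution to check feasibility and (ii) a first-order variational argument that exploits strict convexity of the quadratic objective to simultaneously establish optimality and uniqueness. The orientation hypothesis enters the proof through a sign identity that I expect to be the main obstacle.

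Feasibility is immediate from $\bV\be = \bzeros$ (Lemma~\ref{lm:rk}): one checks that $\be^\top\bg(\lambda) = \lambda + (1-\lambda) = 1$ and
\[
\bV(\bg(\lambda)-\be_1) = -\lambda\,\bV\be_1,
\]
which, since $|\bV\be_1| = \bV\be_1$ by orientation, saturates the lower side of the box constraint on every line while leaving the upper side slack. This also delivers the claimed flow formula $\bfl(\lambda) = -\lambda\bV\be_1$.

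For optimality, write any feasible $\bg$ as $\bg(\lambda)+\Delta$. Power balance forces $\be^\top\Delta = 0$ and, the lower line bound being active at $\bg(\lambda)$, feasibility forces $\bV\Delta \ge \bzeros$ componentwise. Expanding the objective and using $\be^\top\Delta = 0$ to kill the uniform piece of $\bg(\lambda)$,
\[
\tfrac{1}{2}\|\bg\|^2 - \tfrac{1}{2}\|\bg(\lambda)\|^2 = \bg(\lambda)^\top\Delta + \tfrac{1}{2}\|\Delta\|^2 = (1-\lambda)\Delta_1 + \tfrac{1}{2}\|\Delta\|^2,
\]
so the argument reduces to establishing $\Delta_1 \ge 0$ on the cone $\{\Delta : \be^\top\Delta=0,\ \bV\Delta \ge \bzeros\}$.

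The hard step is this sign statement. Because $\bC^\top \bV = \bL\bL^+$ acts as the identity on $\langle\be\rangle^\perp$ and $\Delta$ lies there, Kirchhoff's identity $\Delta = \bC^\top(\bV\Delta)$ holds, yielding
\[
\Delta_1 = \sum_{\ell \in \LL} C_{\ell, 1}\,(\bV\Delta)_\ell.
\]
By the converse direction of Lemma~\ref{lm:C,V}, the hypothesis $\bV\be_1 \ge \bzeros$, specialised to rows $\ell \in \LL_1$, forces every edge incident to node~$1$ to point away from $1$; hence $C_{\ell,1} \ge 0$ for every line (with $C_{\ell,1}=0$ whenever $\ell \notin \LL_1$). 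Each summand is then a product of two nonnegatives, so $\Delta_1 \ge 0$. Both terms in the objective expansion are therefore nonnegative, and strict convexity of $\tfrac12\|\bg\|^2$ on the affine feasible set upgrades this to uniqueness of the minimizer $\bg(\lambda)$.
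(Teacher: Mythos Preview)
Your proof is correct and takes a genuinely different route from the paper's. The paper establishes optimality by verifying the KKT conditions: it exhibits explicit dual multipliers $\bmu^+ = \bzeros$, $\bmu^- = (1-\lambda)\be_{\LL_1} = (1-\lambda)\bC\be_1$, $\gamma = -1/n$, and checks stationarity, complementary slackness, and dual feasibility directly. Nonnegativity of $\bmu^-$ is precisely where Lemma~\ref{lm:C,V} enters for them, just as the sign of $C_{\ell,1}$ on $\LL_1$ is where it enters for you. Your primal variational argument is more elementary in that it avoids Lagrangian machinery; the Kirchhoff identity $\Delta = \bC^\top(\bV\Delta)$ on $\langle\be\rangle^\perp$ (which follows from $\bC^\top\bV = \bL\bL^+ = \bI - \bJ/n$) is a clean substitute for the stationarity equation. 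The two arguments are essentially dual to each other: your inequality $\Delta_1 \ge 0$ on the feasible cone is exactly what the existence of the paper's nonnegative $\bmu^-$ certifies. Your route makes the physical mechanism (all net power entering node~$1$ flows along outward-oriented edges) more transparent, while the paper's KKT verification follows the standard strictly-convex-QP template and would extend more mechanically to variants where a closed-form multiplier is still guessable.
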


\begin{proof}
First, we note that the selected orientation on $G$ implies that the set of edges incident to node $1$ is $\LL_1=\{(1,j)\,|\, j \text{ is adjacent to }1\}$ (i.e. the edges in $\LL_1$ \textit{exit} node $1$), or, in terms of the edge-node incidence matrix $\bC$, that $C_{\ell,1}=1=-C_{\ell,j}$ for all $\ell=(1,j)\in\LL_1$. This is proved in Lemma~\ref{lm:C,V} in Section~\ref{ss:properties_PTDF}.

Due to the chosen orientation, $\bfbar(\lambda)= \lambda\,|\bV\be_1|=\lambda\bV\be_1
$ and the line limit constraints in $P(\lambda)$ can be rewritten as
\[
(1-\lambda) \bV\be_1\le \bV \bg \le (1+\lambda) \bV\be_1.
\]
The problem $P(\lambda)$ is a strictly convex optimization problem with linear equality and inequality constraints. Therefore, in order to show that $\bg(\lambda)$ is the unique optimal solution, it is sufficient to show that it satisfies the KKT conditions for $P(\lambda)$, which read
\begin{align}
&\bg+\bV^{\top}(\bmu^+-\bmu^-)+\gamma \be=0 \label{eq:kkt1},\\
&\bmu^+\ge\bzeros,\bmu^-\ge \bzeros ,\label{eq:kkt2},\\
&\mu_l^+ (\bV\bg-(1+\lambda)\bV\be_1)_\ell=0 \,\forall \ell\in \LL, \label{eq:kkt3a}\\
&\mu_l^- (-\bV\bg+(1-\lambda)\bV\be_1)_\ell=0 \,\forall \ell\in \LL,\label{eq:kkt3b}\\
&\be^{\top}\bg=1,\label{eq:kkt4}\\
&(1-\lambda) \bV\be_1\le \bV \bg \le (1+\lambda) \bV\be_1,\label{eq:kkt5}
\end{align}
where $\gamma$ is the Lagrange multipliers for the equality constraint and $\bmu^+,\bmu^-\in\R^m$ are the Lagrange multipliers for the inequality constraints.

Since $\bV\bg(\lambda)=(1-\lambda) \bV\be_1$ and $\be^{\top} \bg(\lambda)=\be^{\top}\be_1=1$, the candidate solution $\bg(\lambda)$ clearly satisfies the feasibility conditions~\eqref{eq:kkt4},\eqref{eq:kkt5} and the complementary slackness condition~\eqref{eq:kkt3b}. Moreover, condition~\eqref{eq:kkt3a} is satisfied if we choose $\bmu^+=\bzeros$.

Using the facts that $\bV\bg(\lambda)=(1-\lambda)\bV\be_1$ and $\text{Ker}(\bV)=<\be>$, pre-multiplying equation~\eqref{eq:kkt1} by $\bV$ yields $(1-\lambda)\be_1+\bV^{\top}\bmu\in \text{Ker}(\bV)$. This is equivalent to
\[ (1-\lambda)\be_1+\bV^{\top}\bmu =\frac{(1-\lambda)}{n}\be,
\]
where in the last equality we used again the property that $\bV\be=\bzeros$.
To conclude the proof, it remains to be shown that that there exist a nonnegative solution $\bmu^-\ge\bzeros$ of the matrix equation
\begin{equation}\label{eq:kkt_lagr}
\bV^{\top}(-\bmu^-)=(1-\lambda)(\be/n-\be_1).
\end{equation}

We construct a non-negative solution $\bmu^-$ as follows:
\[\mu_{\ell}^-:=(1-\lambda)\be_{\LL_1}=
\begin{cases}
(1-\lambda)\quad &l\in\LL_1\\
0\quad &l\notin\LL_1,
\end{cases},\]
where $\be_{\LL_1}$ is a $m$- dimensional vector containing ones in positions given by $\LL_1$, and $0$ elsewhere.
Invoking Lemma~\ref{lm:C,V} we see that $\bC\be_1=\be_{\LL_1}$, yielding
 $\bmu^-=(1-\lambda) \bC\be_1$.
 Using the definition of $\bV=\bC\bL^+,\bL=\bC^{\top}\bC$, and the property $\bL^+\bL=(\bI-\bJ/n)$ (see~\cite{VanMieghem2017}), we observe that Eq.~\eqref{eq:kkt_lagr} is indeed satisfied:
\begin{align*}\bV^{\top}\bmu^-&=-(1-\lambda)\bV^{\top}\bmu^-=-(1-\lambda)\bV^{\top}\bC\be_1\\&=-(1-\lambda)(\bL^+\bL)\be_1=(1-\lambda)(\be/n-\be_1).
\end{align*}
 Setting $\gamma=-1/n$ completes the proof.
\end{proof}

Finally, we solve the emergency problem. Observe that whenever there is a network disconnection, the component that does not contain node~$1$ has no power demand, and hence the generation at every node in that component is reduced to zero. Evidently, no consecutive failures occur in this component. On the other hand, the demand at node~1 is reduced by the number of nodes that disconnect from this component times $\lambda/n$. Therefore, the total amount of load that is shed is exactly equal to the total amount of reduced power generation at node~1 (power imbalance), which is given by
\begin{align}
S= \sum_{i \not\in A_1} \frac{\lambda}{n} =\frac{\lambda(n-|A_1|)}{n}.
\end{align}

Naturally, the way the failure process cascades through the network after the initial disturbance is highly dependent on the network topology. The redistribution of power flow takes place as described in Section~\ref{ss:PowerFlowRedistribution}, and we stress that this is a deterministic process. In this special case, the only sources of randomness come from the choice of the initial line failure, and possibly the choice of subsequent line failure whenever the redistribution of power flow causes the relative exceedance to be the same at multiple lines. Therefore, given a network topology and the line that initially fails, we can determine exactly how the failure process propagates through the network.

It may be apparent from Lemma~\ref{lem:SingleLargeDemand} that this special case where $\bd = \be_1$ describes some form of limiting behavior. That is, as Lemma~\ref{lem:SingleLargeDemand} holds for every $\epsilon>0$, we observe that the normalized demand vector $\bd$ converges to the unit vector $\be_1$ as $\epsilon \downarrow 0$. Next, we show that for almost all values of $\lambda$, for all demand vectors $\bd$ for which $\bd \rightarrow \be_1$ as $\epsilon \downarrow 0$, the order at which line failures occur converges to the sequence of line failures as if the demand vector would have been $\bd=\be_1$.

\subsection{Convergence of cascade sequence}

The operational OPF
\begin{alignat}{6}
& \underset{g\in\R^n}{\min} & & \frac{1}{2}\bg^\top\bg \label{eq:Plambda1}\\
& \text{s.t.}        & & \be^{\top}\bg=\be^{\top}\bd, &    \label{eq:Plambda2}\\
&					  &	 | & \bV(\bg-\bd)|\le \lambda \left\vert\bV\bd
\right\vert, & \label{eq:Plambda3}
\end{alignat}
is a strictly convex optimization problem, and since $\bg=\lambda \bar{\bd}\be + (1-\lambda) \bd$ is a feasible point, the feasible set of this optimization problem is nonempty. Therefore, for each demand vector $\bd$, there exists a unique optimal solution $\bg^*(\bd)$.

If we view $\bd$ as a \textit{parameter} of the problem, then~\eqref{eq:Plambda1}-\eqref{eq:Plambda3} is an instance of a \textit{multi-parametric quadratic programming} (mp-QP) problem with a strictly convex objective function, for which it is known that the optimal solution $\bg^*(\bd)$ is a continuous function of the parameter vector $\bd$ (Theorem $1$,~\cite{Tondel2003}). This continuity property will be used extensively in the rest of this section.

We assume in our framework that line failures occur subsequently, i.e. a next line failure occurs at the line where the line limit is relatively most exceeded. Recall that $F_j$ denotes the \textit{emergency} line limit of line $j \in \mathcal{L}$, and is given by (taking $\lambda^*=\frac{1}{\lambda}$)
\begin{align*}
F_j=\lambda^*\lambda|(\bV\bd)_j|=|(\bV\bd)_j|.
\end{align*}
We write $f^{(m)}_j$ as the flow on line $j$ after the failure of the first $m-1$ lines and after the load/generation shedding took place, where we use the convention that $f_j^{(1)}$ denotes the flow on line $j$ when no initial disturbance has occurred yet, and $f_j^{(m)}=0$ if line $j$ has already failed before the $m$-th step of the cascading failure process. The cascade is initiated by the random failure of line $\ell=\ell^{(1)}$. The $m$-th line to fail, for $m\ge 2$, is given by
\begin{align}\label{eq:next_failure}
\ell^{(m)}=\argmax_{j\in \A^{(m)}}
\Bigl\{\frac{|f_j^{(m)}|-F_j}{F_j}\Bigr\}=\argmax_{j\in \A^{(m)}}\Bigl\{\frac{|f_j^{(m)}|}{F_j}\Bigr\},
\end{align}
where $\A^{(m)}=\{j \,:\, |f_j^{(m)}|\ge F_j\}$ is the set of lines that exceed the limit.

\begin{remark}
Note that the line limits and line flows depend on $\bd$ and $\lambda$ through the operational OPF, so that the sequence of subsequent failure depends on $\bd,\lambda$, and on the initial failure $\ell=\ell^{(1)}$. That is,
\begin{align*}
&F_j=F_j(\bd),f^{(m)}_j=f^{(m)}_j(\bd,\lambda),\A^{(m)}=\A^{(m)}(\ell,\lambda,\bd),\\&
\ell^{(m)}=\ell^{(m)}(\ell,\lambda,\bd).\end{align*}
For the sake of exposition, we do not write the dependency on $\bd$, $\lambda$ and $\ell$.
\end{remark}

Let $\C=\{\ell^{(1)},\ldots,\ell^{(T)}\}$ be a cascade sequence, where $\ell^{(T)}$ is the last failure before the cascade stops. Such a sequence is uniquely determined by the first failure $\ell^{(1)}$ and by the demand vector $\bd$ and by $\lambda$, i.e. $\C=\C(\bd,\lambda,\ell)$. In view of Lemma~\ref{lem:SingleLargeDemand} and the normalization property, the goal of this section is to show that if $\bd\to\be_1$, then the cascade sequence does not depend on $\bd$ anymore, i.e.
\begin{align*}
\C(\bd,\lambda,\ell)=\C(\be_1,\lambda,\ell)\qquad\text{if }\bd \rightarrow \be_1.
\end{align*}

We observe that if $|\A^{(m)}|=0$, no more line failures occur. Technically, it is also possible that $|\A^{(m)}|>1$ and hence the subsequent line failure next needs to be chosen out of a set of multiple lines. We exclude the cases that do not yield unique maximizers from our framework.
\begin{assumption}\label{ass:ass2}
For all lines $j$, the ratios between redistributed flows and line limits
\[\frac{|f_j^{(m)}(\be_1)|}{F_j(\be_1)}=\frac{\lambda |(\bV^{(m)}\be_1)_j|}{|(\bV\be_1)_j|}\] are all different for all $m\geq 2$, where $\bV^{(m)}$ denotes the PTDF matrix for the remaining network after $m-1$ failures have taken place. This assumption is needed to ensure the uniqueness of the maximizer in~\eqref{eq:next_failure}.
\end{assumption}
This assumption ensures that whenever the first line failure $\ell$ and the parameter $\lambda$ is known, the cascade sequence is unique and deterministic for demand vector $d=\be_1$. This assumption is for technical convenience, and we stress that our results hold more generally. In particular, this assumption rules out certain network topologies with some form of symmetry, but we can slightly adapt the framework to deal with these cases as well.

 That is, suppose that $|\A^{(m)}|>1$ for some $m \in \mathbb{N}$ and the set $\A^{(m)}$ consists only of lines that are indistinguishable from one another (lines that are `symmetric'). Since nodal demands are independent and identically distributed, this implies that each of these lines has an equal probability of being the line that fails next. By the symmetry of the network topology, regardless of which line is chosen to fail next, the resulting networks after the cascade are indistinguishable. We illustrate this notion for the 6-node example in the next section.

To analyze the power imbalance in this framework, we need to introduce some notation as well as formally define the shedding rule and the redistribution of power flows.

\begin{definition}[Uniform shedding rule]\label{def:shedding}
Let $\bg^{(1)}=\bg^*,\bd^{(1)}=\bd$ be the initial generation and demand vectors.
Assume that the removal of lines $\ell^{(1)},\ldots,\ell^{(m)}$, $m\ge 1$, disconnects the network in components $\G_i^{(m)}=(\N_i^{(m)},\LL_i^{(m)})$, $i=1,\ldots,h_m$.
Define the power imbalance in component $\G_i^{(m)}$ as
 \begin{align*}
Y_{\G_i^{(m)}} = \sum_{k \in \N_i^{(m)}} (g^{(m)}_k-d_k^{(m)}).
\end{align*}
In order to re-achieve power balance, generation and demand in each component are modified iteratively according to the following uniform shedding rule, for $k\in \N_i^{(m)}$:
\begin{align*}
 & d_k^{(m+1)} = \begin{cases}
 \left( 1- \frac{Y_{\G_i^{(m)}}}{\sum_{l \in \N_i^{(m)}} d_l^{(m)}}\right)d_k^{(m)}\quad &\text{if } Y_{\G_i^{(m)}}<0
 \\
  d_k^{(m)} & \text{if } Y_{\G_i^{(m)}}\ge 0
 \end{cases},
 \end{align*}

  \begin{align*}
 g_k^{(m+1)} = \begin{cases}
 g_k^{(m)} \quad &\text{if } Y_{\G_i^{(m)}}<0\\
\left( 1- \frac{Y_{\G_i^{(m)}}}{\sum_{l \in \N_i^{(m)}} g_l^{(m)}}\right)g^{(m)}_k & \text{if } Y_{\G_i^{(m)}}\ge 0\quad %\text{(shed generation)},\quad k\in \N_i^{(m)},
 \end{cases}
 \end{align*}
\end{definition}

\begin{definition}[Power flow redistribution]\label{def:flows}
Assume that the removal of lines $\ell^{(1)},\ldots,\ell^{(m)}$, $m\ge 1$, disconnects the network in components $\G_i^{(m)}=(\N_i^{(m)},\LL_i^{(m)})$, $i=1,\ldots,h_m$.
Then, the line flows in component $\G_i^{(m)}$ are given by
\begin{align*}
f_{\LL_i}^{(m+1)}=\bV^{(m+1,\G_i)}(\bg_{\N_i}^{(m+1)}-\bd_{\N_i}^{(m+1)}),
\end{align*}
where $\bV^{(m+1,G_i)}$ is the PTDF matrix for the subgraph $\G_i^{(m)}$, and
$g_{\N_i}^{(m+1)},d_{\N_i}^{(m+1)}$ are defined as in Definition~\ref{def:shedding}.
\end{definition}

A second assumption we require to show the convergence of the cascade sequence involves the following.
\begin{assumption}\label{ass:ass1}
\vspace{0.1cm}
 For all lines $j$ and $m\ge 2$,
\[|f_j^{(m)}(\be_1)|-F_j(\be_1)\neq 0.\]
 That is,
for $\bd=\be_1$ it is not possible for a line flow $|f_j^{(m)}|$ to be exactly equal to its limit. In terms of PTDF matrices and $\lambda$, this assumption reads
\begin{align*}
\lambda |(\bV^{(m,\G_i)}\be_1)_j|\neq |(\bV\be_1)_j|, \hspace{1cm} m\geq 2.
\end{align*}
This assumption means that we exclude finitely many $\lambda$-s from our analysis, which correspond to phase-transitions.
\end{assumption}

Assumption~\ref{ass:ass1} states that none of the line flows equal its emergency line limit in the cascade sequence if $d=\be_1$. In order to prove the convergence of the cascade sequence, we also need a continuity property of the line flows at every stage with respect to the demand vector.

\begin{lemma}[Continuity of $f_j^{(m)}$ with respect to $\bd$]\label{lm:continuity}
At each stage $m$ of the cascade, the redistributed power flows $f_j^{(m)}$ are continuous in the initial demand vector $\bd$ for all $j=1,...,m$.
\end{lemma}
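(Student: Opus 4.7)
I would proceed by induction on the cascade stage $m$, leveraging Assumptions~\ref{ass:ass2} and~\ref{ass:ass1} to show that in a neighborhood of $\be_1$ the combinatorial decisions of the cascade (which lines are overloaded, which line fails next, how the grid splits into components) are locally constant, so that continuity reduces to continuity of a sequence of fixed linear and rational maps applied to $\bd$.

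\textbf{Base case ($m=1$).} Before any failures, $f_j^{(1)}(\bd) = (\bV(\bg^{*}(\bd)-\bd))_j$, where $\bg^{*}(\bd)$ is the unique solution of the operational DC-OPF~\eqref{eq:Plambda1}--\eqref{eq:Plambda3}. This is a strictly convex mp-QP whose constraint right-hand sides $\lambda|\bV\bd|$ depend continuously on $\bd$, so by Theorem~1 of~\cite{Tondel2003} the optimizer $\bg^{*}(\bd)$ is continuous in $\bd$. Composing with the fixed linear map $\bd\mapsto\bV(\bg^{*}(\bd)-\bd)$ yields continuity of $f_j^{(1)}$.

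\textbf{Inductive step.} Assume $f_j^{(m)}$ and $F_j=|(\bV\bd)_j|$ are continuous in $\bd$ at $\be_1$ for every surviving line $j$. I would then argue in four micro-steps. First, Assumption~\ref{ass:ass1} says $|f_j^{(m)}(\be_1)|\ne F_j(\be_1)$ for every $j$, so each strict inequality persists in a neighborhood by the induction hypothesis, making the overloaded set $\A^{(m)}$ locally constant. Second, by Assumption~\ref{ass:ass2} the ratios $|f_j^{(m)}(\be_1)|/F_j(\be_1)$ are pairwise distinct, so by continuity the argmax in~\eqref{eq:next_failure} is locally unique and achieved at the same index $\ell^{(m)}$. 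Third, the graph obtained by removing $\ell^{(m)}$ is purely topological, so the partition $\{\G_i^{(m)}\}$ and the corresponding PTDF matrices $\bV^{(m+1,\G_i)}$ do not vary with $\bd$ on this neighborhood. Fourth, the uniform shedding rule (Definition~\ref{def:shedding}) expresses $(\bg^{(m+1)},\bd^{(m+1)})$ as a rational function of $(\bg^{(m)},\bd^{(m)})$ whose denominators are total generation or total demand inside a component; near $\be_1$ these stay bounded away from zero, since the sink component keeps total demand close to $1$ while any non-sink component has total generation close to $(\lambda/n)\,|\N_i^{(m)}|>0$ by Lemma~\ref{lm:g_prelimit}. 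Then $f_j^{(m+1)}$ is obtained from $(\bg^{(m+1)},\bd^{(m+1)})$ by multiplication with the fixed matrices $\bV^{(m+1,\G_i)}$ (Definition~\ref{def:flows}), so continuity is preserved.

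\textbf{Main obstacle.} The delicate point is not any single step but the interaction between the argmax selection and the later stages: the argmax in~\eqref{eq:next_failure} is continuous only when the maximizer is unique, which is exactly what Assumption~\ref{ass:ass2} enforces at $\bd=\be_1$. Without it, a small perturbation of $\bd$ could swap the identity of the failing line, causing the cascade sequence (and hence $f_j^{(m')}$ for $m'>m$) to jump discontinuously. Assumption~\ref{ass:ass1} plays the analogous role at the threshold $|f_j^{(m)}|=F_j$, ruling out the boundary case in which a line sits exactly at its emergency limit and may or may not fail under an arbitrarily small perturbation. Both assumptions are therefore essential, and the remaining work is the bookkeeping verification that, once they are in force, every other operation in the cascade (solving the OPF, applying the PTDF, executing the shedding rule) is manifestly continuous.
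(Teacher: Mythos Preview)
Your argument is correct, but it proves more than the paper's lemma intends and takes a different logical route. The paper reads $f_j^{(m)}$ as the flow after removing a \emph{fixed} list of lines $\ell^{(1)},\ldots,\ell^{(m-1)}$; with the sequence held fixed, the proof collapses to a two-line composition argument: $\bfl^{(m+1)}=\bV^{(m+1,\G_i)}(\bg^{(m+1)}-\bd^{(m+1)})$ is continuous in $(\bg^{(m+1)},\bd^{(m+1)})$, the shedding rule makes $(\bg^{(m+1)},\bd^{(m+1)})$ continuous in $(\bg^{(m)},\bd^{(m)})$, and unfolding the recursion back to $\bg^*(\bd)$ (continuous by the mp-QP result) finishes it. No appeal to Assumptions~\ref{ass:ass2} or~\ref{ass:ass1} is made in the lemma itself; those are invoked only afterward, in Proposition~\ref{prop:MainResultConvergence}, to show that the actual cascade sequence for $\bd$ near $\be_1$ coincides with the one for $\be_1$.

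What you do instead is read $f_j^{(m)}$ as the flow in the \emph{actual} cascade for $\bd$, so that the identity of the failing line at each stage is part of what must be shown stable. This forces you to carry the argmax/threshold analysis inside the induction and to invoke both Assumptions at every step, and it restricts your conclusion to a neighborhood of $\be_1$ rather than to arbitrary $\bd$. In effect you have merged Lemma~\ref{lm:continuity} and Proposition~\ref{prop:MainResultConvergence} into a single inductive proof. That is perfectly valid, and your treatment of the shedding-rule denominators is in fact more careful than the paper's, but the paper's separation of concerns keeps the lemma short and assumption-free and isolates the combinatorial stability argument where it is actually needed.
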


\begin{proof}
Assume that the removal of lines $\ell^{(1)},\ldots,\ell^{(m)}$, $m\ge 1$, disconnects the network in components $\G_i^{(m)}=(\N_i^{(m)},\LL_i^{(m)})$, $i=1,\ldots,h_m$. According to Definition~\ref{def:flows},
\[f_{\LL_i}^{(m+1)}=\bV^{(m+1,\G_i)}(\bg_{\N_i}^{(m+1)}-\bd_{\N_i}^{(m+1)}),\]
for each connected component $\G_i^{(m)}$, so $\bfl^{(m+1)}$ is continuous in $\bg^{(m+1)},\bd^{(m+1)}$. Moreover, according to Definition~\ref{def:shedding}, $\bg^{(m+1)},\bd^{(m+1)}$ are continuous functions of $\bg^{(m)},\bd^{(m)}$. By unfolding the recursion, and using that $\bg^*(\bd)$ is continuous in $\bd$, we see that $\bfl^{(m+1)}$ is continuous in $\bd$.
\end{proof}

Finally, we can show the main result of this section.
\begin{proposition}\label{prop:MainResultConvergence}
Assume that Assumptions~\ref{ass:ass2} and~\ref{ass:ass1} hold, and let $\C(\bd,\lambda,\ell)=\{\ell^{(1)},\ldots,\ell^{(T)}\}$ be a cascade sequence initiated by $\ell=\ell^{(1)}$. Then, there exists a $\eps>0$ such that
%neighborhood $\B(\be_1,\eps)$ of $\be_1$ such that
%\[\bd\in\B(\be_1,\eps)\implies \C(\bd,\lambda,\ell)=\C(\be_1,\lambda,\ell).\]
\[d_1=1, d_j<\eps,\, \forall j\ge 2\implies \C(\bd,\lambda,\ell)=\C(\be_1,\lambda,\ell).\]

\end{proposition}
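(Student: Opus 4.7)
The plan is to prove this by induction on the stage index $m$ of the cascade, with inductive hypothesis: for $\bd$ sufficiently close to $\be_1$, the first $m$ failed lines coincide with those obtained under initial demand $\be_1$, i.e.\ $\ell^{(k)}(\bd,\lambda,\ell)=\ell^{(k)}(\be_1,\lambda,\ell)$ for $k=1,\dots,m$. The base case $m=1$ is trivial since $\ell^{(1)}=\ell$ is fixed. Since the cascade has at most $|\LL|$ stages, once the inductive step is established we can take $\eps$ to be the minimum of the finitely many $\eps_m$'s arising at each stage.

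For the inductive step, assume the first $m-1$ failures coincide. Then the network topology after these removals is identical in both scenarios, so the PTDF matrices $\bV^{(m,\G_i)}$ of the resulting components are the same, and by Lemma~\ref{lm:continuity} both the redistributed flows $f_j^{(m)}(\bd)$ and the limits $F_j(\bd)=|(\bV\bd)_j|$ depend continuously on $\bd$. By Assumption~\ref{ass:ass1}, for every surviving line $j$,
\[
\delta_j := \bigl| |f_j^{(m)}(\be_1)|-F_j(\be_1)\bigr|>0,
\]
so by choosing $\eps_m$ small enough we ensure that, for $\bd$ in an $\eps_m$-neighborhood of $\be_1$, the sign of $|f_j^{(m)}(\bd)|-F_j(\bd)$ matches the one at $\be_1$ for every $j$. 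This guarantees $\A^{(m)}(\bd)=\A^{(m)}(\be_1)$: if the latter is empty, the cascade terminates in both cases at step $m-1$ and $T(\bd)=T(\be_1)$; otherwise, Assumption~\ref{ass:ass2} ensures that the ratios $|f_j^{(m)}(\be_1)|/F_j(\be_1)$ over $j\in\A^{(m)}(\be_1)$ are all distinct, so there is a strict gap between the maximizer and the runner-up. Shrinking $\eps_m$ further, continuity then forces the argmax for $\bd$ to coincide with that for $\be_1$, completing the induction.

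The main obstacle is that the uniform shedding rule in Definition~\ref{def:shedding} is piecewise defined based on the sign of $Y_{\G_i^{(m)}}$, so one must check that this sign is also preserved under the perturbation $\bd\to\be_1$, since otherwise the continuity statement of Lemma~\ref{lm:continuity} would break at a disconnection event. For $\bd=\be_1$, using $\bg(\lambda)=(\lambda/n)\be+(1-\lambda)\be_1$ from Lemma~\ref{lm:g_prelimit}, a direct computation shows that the component containing node~$1$ has imbalance $Y_{\G_1^{(m)}}=-\lambda(n-|\N_1^{(m)}|)/n<0$ (strict deficit) while every other component satisfies $Y_{\G_i^{(m)}}=\lambda|\N_i^{(m)}|/n>0$ (strict surplus). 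Because these imbalances are strictly bounded away from zero, and both $Y_{\G_i^{(m)}}(\bd)$ and the corresponding denominators $\sum_{l\in\N_i^{(m)}} d_l^{(m)}$, $\sum_{l\in\N_i^{(m)}} g_l^{(m)}$ are continuous in $\bd$ and bounded away from zero in a neighborhood of $\be_1$, the correct branch of the shedding rule is selected for $\bd$ close enough to $\be_1$, and continuity propagates across disconnection events. Taking $\eps$ to be the minimum of all $\eps_m$ and all sign-preservation thresholds (a finite collection) yields the claim.
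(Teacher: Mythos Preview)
Your proof is correct and takes essentially the same approach as the paper: both argue inductively over cascade stages, using Lemma~\ref{lm:continuity} together with Assumption~\ref{ass:ass1} to match the overload sets $\A^{(m)}$, then Assumption~\ref{ass:ass2} to pin down the unique $\argmax$, and finally take the minimum over the finitely many resulting thresholds $\eps^{(m)}$. Your explicit treatment of sign preservation for the imbalances $Y_{\G_i^{(m)}}$ at disconnection events adds rigor to a point the paper's continuity lemma handles only summarily, but the overall strategy is identical.
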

\begin{proof}

Let $\ell^{(1)}$ be the first failure, and consider
\[\ell^{(2)}=\argmax_{j\in \A^{(2)}}\Bigl\{\frac{|f_j^{(2)}|}{F_j}\Bigr\},\]
where $\A^{(2)}=\A^{(2)}(\bd,\lambda)=\{j \,:\, |f_j^{(2)}|\ge F_j\}$.
%We first show that the set of overloaded lines, $\A^{(2)}(\bd,\lambda)$, converges to
%$\A^{(2)}(\be_1,\lambda)$ as $\bd\to\be_1$. Then, we will show $\ell^{(2)}$ that complete the proof.
Lemma~\ref{lm:continuity} shows that $f_j^{(2)}(\bd)
\to f_j^{(2)}(\be_1)$ as $\bd\to\be_1$,
so by continuity and Assumption~\ref{ass:ass1}
\[|f_j^{(2)}(\bd)|-F_j(\bd)
\to |f_j^{(2)}(\be_1)|-F_j(\be_1)\neq 0. \]  Consequently, there exist $\eps>0$ such that, if $d_j<\eps$ for all $j\ge 2$, then
\[
|f_j^{(2)}(\bd)|>F_j(\bd)\,\iff\, |f_j^{(2)}(\be_1)|>F_j(\be_1).
\]
In other words, a line limits is exceeded for $\bd=\be_1$ (which, due to our assumption, implies that it is strictly exceeded) if and only if it is also (strictly) exceeded when $\bd$ is close enough to $\be_1$, implying that $\A^{(2)}(\bd)=\A^{(2)}(\be_1)$ .

Moreover, there exists a $\eps^{(1)}\le\eps$ such that, if $d_k<\eps$ for $k\ge 2$, then
\begin{align*}\max_{j\in\A^{(2)}(\bd_1,\lambda)}
\frac{|f_j^{(2)}(\bd,\lambda)|}{F_j(\bd)}&=
\max_{j\in\A^{(2)}(\be_1,\lambda)}
\frac{|f_j^{(2)}(\bd,\lambda)|}{F_j(\bd)}\\&=
\max_{j\in\A^{(2)}(\be_1,\lambda)}
\frac{|f_j^{(2)}(\be_1,\lambda)|}{F_j(\be_1)},
\end{align*}
where in the second equality we used that
$\A^{(2)}(\bd)=\A^{(2)}(\be_1)$, and in the third equality we used again continuity.
Finally, Assumption~\ref{ass:ass2} allows us to conclude that the max is unique and that the (unique) second failure
$\ell^{(2)}(\bd,\lambda)=\ell^{(2)}(\be_1,\lambda)$ does not depend on $\bd$ if $d_k<\eps^{(1)}$, $k\ge2$.

As Lemma~\ref{lm:continuity} holds for every stage of the cascade, we can repeat the steps above to construct a sequence $\eps^{(T)}\le \ldots,\eps^{(2)}\le\eps^{(1)}$ such that the cascade sequence $\C$ is well defined and does not depend on $\bd$ if $d_j<\eps^{(T)}$ for all $j\ge2$.
\end{proof}

\begin{example}
To illustrate how one can easily derive the phase-transition values, we consider the 4-node cycle topology. With the standard clock-wise orientation,  we have
\[\bV^{(\text{clock})}=\frac{1}{8}
\begin{bmatrix}
3 & -3 & -1 & 1\\
1 & 3 & -3 & -1\\
-1 & 1 & 3 & -3\\
-3 & -1 & 1 & 3.
\end{bmatrix}.\]
For $d_1 \gg d_2,d_3,d_4$, we can change the orientation such that $\bV \bd\ge \bzeros$, which is given by the edgelist $\{(1,2),(2,3),(4,3),(1,4)\}$. Then the matrix $\bV$ reads
\[
\bV=\text{diag}(\bs)\bV^{(\text{clock})}=\frac{1}{8}\begin{bmatrix}
3 & -3 & -1 & 1\\
1 & 3 & -3 & -1\\
1 & -1 & -3 & 3\\
3 & 1 & -1 & -3,
\end{bmatrix},\]
where $\bs=\text{sign} (\bV^{(\text{clock})}\be_1)=[1,1,-1,-1]$.
In this case,
\[\phi_{k,\ell}=-s_k=
\begin{cases}
-1 \qquad &\text{if } k\in\{(1,2),(2,3)\}\\
1 \qquad &\text{if }  k\in\{(4,3),(1,4)\}
\end{cases}.\]
Assume that the first failure is $\ell=(1,2)$, so that the power flow redistribution is
\[\lambda\left((\bV\be_1)_k+\phi_{k,\ell}(\bV\be_1)_\ell\right)=\lambda \cdot
\begin{cases}
1/4 \qquad &\text{if } k=(2,3)\\
-1/2 \qquad &\text{if }  k=(4,3)\\
-3/4 \qquad &\text{if }  k=(1,4)\}
\end{cases}.\]

\noindent
Then, the critical values of $\lambda$ are given by
\begin{equation}\label{eq:lambda1}
\lambda= \pm\frac{(\bV\be_1)_k}{(\bV\be_1)_k+\phi_{k,\ell}(\bV\be_1)_\ell},\,k\neq \ell,
\end{equation}
%
%and
%\begin{equation}\label{eq:lambda2}
%\lambda= \frac{(\bV\be_1)_k}{(\bV\be_1)_k+\phi_{k,\ell}(\bV\be_1)_\ell},\,k\neq \ell.
%\end{equation}
%Since
%\[\frac{(\bV\be_1)_k}{(\bV\be_1)_k+\phi_{k,\ell}(\bV\be_1)_\ell}=[-1/2,1/4,1/2],\] so that
and we find that they are $\lambda=\frac{1}{4},\frac{1}{2}$.
%Since $(\bV\be_1)_k+\phi_{k,\ell}(\bV\be_1)_\ell>0$ for $k\in\{(4,3),(1,4)\}$ and $\lambda>0$, Eq.~\eqref{eq:lambda1} is never satisfied  Moreover,

%Therefore, the only critical $\lambda$ is given by
%\[\lambda=-\frac{(\bV\be_1)_{(2,3)}}{(\bV\be_1)_{(2,3)}+\phi_{(2,3),(1,2)}(\bV\be_1)_{(1,2)}}=-\frac{1/8}{1/8-3/8}=\frac{1}{2}.\]
Moreover, if $\lambda<\frac{1}{4}$ then the cascade stops immediately after the failure of the first line. If $\frac{1}{4} < \lambda <\frac{1}{2}$, then line $(4,3)$ fails afterwards
 and if $\lambda>\frac{1}{2}$ lines $(2,3),(4,3),(1,4)$ fails afterwards.
 Therefore, $\lambda=\frac{1}{4},\frac{1}{2}$ can be seen as phase-transition points.
\end{example}

\subsection{Asymptotic behavior of power imbalance}
In the previous sections, we showed that the only likely way to have a large blackout is when there is a single city that has a significantly larger demand than all other cities. Under certain assumptions, given the position of this city (i.e. labeling this as city~$1$) and the first line failure, the cascade sequence is deterministic and the same to the one as if the demand vector would have been $d=\be_1$. We exploit these properties to derive the tail behavior of $S$, or equivalently, the amount of load that is shed/the number of affected customers.

We point out that the demands are independent and identically distributed, so the probability that a city has the largest demand equals $1/n$. To obtain the tail behavior of $S$, we need that Assumptions~\ref{ass:ass2} and~\ref{ass:ass1} to hold regardless of which city has the largest power demand.

\begin{assumption}\label{ass:GeneralAss}
Assumptions~\ref{ass:ass2} and~\ref{ass:ass1} hold for any relabeling of the vertices.
\end{assumption}

Note that since the number of cities $n$ is finite, and inherently also the number of the possible lines where the first failure occurs, Assumption~\ref{ass:GeneralAss} excludes only a finite number of possible values of $\lambda$ from our framework. The main theorem follows.

\begin{theorem}
Suppose there is a fixed topology $\G=(\N,\LL)$ and a fixed $\lambda \in (0,1)$, for which Assumption~\ref{ass:GeneralAss} holds. Write $Z(i,\ell)$, $i=1,...,n$, $\ell=1,...,m$ as the number of cities that are not in the same component as city $i$ after the cascade under demand vector $d=\be_i$ and first line failure $\ell$. If $Z(i,\ell)=0$ for all $i=1,...,n$ and $\ell=1,...,m$, then as $x \rightarrow \infty$,
\begin{align}\label{eq:NoDisconnectionsAsympt}
\Prob(S>x) = O(x^{-2\alpha}).
\end{align}
Otherwise, as $x \rightarrow \infty$, there exists a $C \in (0,\infty)$ such that
\begin{align}\label{eq:WithDisconnectionsAsympt}
\Prob(S>x) \sim C x^{-\alpha}.
\end{align}
\end{theorem}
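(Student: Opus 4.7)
The plan is to decompose $\{S>x\}$ according to whether the demand vector is dominated by a single large entry, bound the non-dominant contribution via Lemma~\ref{lem:SingleLargeDemand}, and extract the $x^{-\alpha}$ asymptotics from the dominant case using the scale-invariance of the framework together with Proposition~\ref{prop:MainResultConvergence}.

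\textbf{Step 1 (Decomposition).} Let $i^* = \arg\max_j d_j$ and $E_\epsilon = \{\sum_{j\ne i^*} d_j < \epsilon d_{i^*}\}$. Then $\Prob(S>x) = \Prob(S>x, E_\epsilon) + \Prob(S>x, E_\epsilon^c)$, and the second term is $O(x^{-2\alpha})$ by Lemma~\ref{lem:SingleLargeDemand}.

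\textbf{Step 2 (Scale invariance).} Since the cost $\sum g_i^2/2$ is homogeneous of degree two and the operational OPF constraints (including $\bfbar = \lambda|\bV\bd|$) are homogeneous of degree one in $\bd$, the optimum satisfies $\bg^*(c\bd) = c\bg^*(\bd)$. The post-failure line flows and the emergency limits $F_j = |(\bV\bd)_j|$ both scale by $c$, so the ratios $|f_j^{(m)}|/F_j$ driving the cascade are scale-invariant, as is the uniform shedding rule of Definition~\ref{def:shedding}. Consequently the entire cascade sequence is unchanged under rescaling and the final load shed satisfies $S(c\bd, \ell) = c\, S(\bd, \ell)$.

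\textbf{Step 3 (Cascade on $E_\epsilon$).} Condition on $i^*=i$, $d_i = y$, and on the initial failure $\ell$ (uniform on $\LL$, independent of $\bd$). On $E_\epsilon$ the normalized demand $\bd/y$ satisfies $(\bd/y)_i=1$ and $(\bd/y)_j<\epsilon$ for $j\ne i$. Proposition~\ref{prop:MainResultConvergence}, applied with city~$i$ playing the role of city~$1$, yields a threshold $\epsilon_0(\G,\lambda)>0$ — uniform over all relabelings thanks to Assumption~\ref{ass:GeneralAss} — such that for $\epsilon<\epsilon_0$ the cascade sequence for $\bd/y$ coincides with the one for $\be_i$, and by Step~2 so does the sequence for $\bd$. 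Hence $|A_1| = n - Z(i,\ell)$. Continuity of the mp-QP solution together with continuity of the shedding rule on $\{\|\bd/y-\be_i\|\le\epsilon\}$ then gives $S(\bd,\ell) = y\bigl(\lambda Z(i,\ell)/n + o_\epsilon(1)\bigr)$, with $o_\epsilon(1)\to 0$ as $\epsilon\downarrow 0$ uniformly in $y$.

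\textbf{Step 4 (Tail asymptotics and conclusion).} If $Z(i,\ell)=0$ for all $(i,\ell)$, then on $E_\epsilon$ with $\epsilon<\epsilon_0$ the cascade never disconnects the grid, so $S(\bd,\ell)=0$ identically and $\Prob(S>x, E_\epsilon)=0$; combined with Step~1 this yields~(\ref{eq:NoDisconnectionsAsympt}). Otherwise, using $\Prob(d_i > t)\sim K t^{-\alpha}$ together with $\Prob(d_j < \epsilon d_i\,\forall j\ne i \,|\, d_i=t)\to 1$ as $t\to\infty$, we obtain $\Prob(i^*=i, S(\bd,\ell)>x, E_\epsilon) \sim K \bigl(xn/(\lambda Z(i,\ell))\bigr)^{-\alpha}$ whenever $Z(i,\ell)>0$, and the summand vanishes when $Z(i,\ell)=0$. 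Averaging over $\ell$ (probability $1/m$) and summing over $i$ gives $\Prob(S>x, E_\epsilon) \sim C x^{-\alpha}$ with $C = \frac{K\lambda^\alpha}{m n^\alpha}\sum_{i=1}^n\sum_{\ell=1}^m Z(i,\ell)^\alpha \in (0,\infty)$; combined with Step~1 this yields~(\ref{eq:WithDisconnectionsAsympt}).

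\textbf{Main obstacle.} The delicate step is making the uniformity in Steps~3--4 rigorous: Proposition~\ref{prop:MainResultConvergence} is stated for a single demand vector, so one must ensure that the threshold $\epsilon_0$ can be chosen simultaneously across all $n$ possible identities of the largest city and all $m$ possible initial failures, and that the $o_\epsilon(1)$ error in Step~3 is uniform in $y$. This is precisely where Assumption~\ref{ass:GeneralAss} is essential (it rules out only a finite set of phase-transition values of $\lambda$), and the explicit closed form from Lemma~\ref{lm:g_prelimit} allows one to quantify continuity of $\bg^*(\cdot)$ at $\be_i$ and thereby justify the exchange of limits $x\to\infty$ and $\epsilon\downarrow 0$.
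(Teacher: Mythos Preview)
Your proposal is correct and follows essentially the same route as the paper: decompose into the single-big-city event and its complement, kill the complement with Lemma~\ref{lem:SingleLargeDemand}, and on the dominant event use Proposition~\ref{prop:MainResultConvergence} (under Assumption~\ref{ass:GeneralAss}) together with the closed form from Lemma~\ref{lm:g_prelimit} to identify $S\approx \lambda Z(i,\ell)d_i/n$ and read off the Pareto tail. The only cosmetic difference is that you make the scale-invariance argument (Step~2) explicit to normalize by $d_{i^*}$, whereas the paper proceeds directly via continuity bounds of the form $S\le Z(i,\ell)(\lambda/n+c_1(\epsilon))d_i$ and then sends $\epsilon\downarrow 0$ after $x\to\infty$; both yield the same constant $C=\frac{K\lambda^\alpha}{mn^\alpha}\sum_{i,\ell}Z(i,\ell)^\alpha$.
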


\begin{proof}
First, since the demands are independent and identically distributed, we observe that each city has an equal probability of being the city with the largest demand. That is, if $B$ denotes the city that has the largest demand, then
\begin{align*}
\Prob(B=i)=1/n, \hspace{1cm} i=1,...,n.
\end{align*}
By the law of total probability,
\begin{align*}
\Prob\left( S > x \right) = \sum_{i=1}^n \frac{1}{n} \Prob\left( S > x \,\big|\, B=i \right).
\end{align*}
Fix some $\epsilon>0$ (sufficiently small), and note that for all $i=1,...n$,
\begin{align*}
\Prob\left( S > x \,\big|\, B=i \right) &= \Prob\left( S > x ; \sum_{j\neq i}^n d_j < \epsilon d_i \,\big|\, B=i \right) \\&+ \Prob\left( S > x ; \sum_{j \neq i}^n d_j \geq \epsilon d_i \,\big|\, B=i \right).
\end{align*}
Due to Lemma~\ref{lem:SingleLargeDemand}, we observe that the second is of order $O(x^{-2\alpha})$ for all $i=1,...,n$, and hence
\begin{align*}
\sum_{i=1}^n \frac{1}{n}  \Prob\left( S > x ; \sum_{j \neq i}^n d_j \geq \epsilon d_i \,\big|\, B=i \right) = O(x^{-2\alpha}).
\end{align*}
For the first term, note that Assumption~\ref{ass:GeneralAss} ensures that $Z(i,\ell)$ is well-defined for all $i=1,...,n$ and $\ell=1,...,m$. Since we choose our first failure uniformly at random among all lines, we observe that by law of total probability, for all $i=1,...,n$,
\begin{align*}
&\Prob\left( S > x ; \sum_{j\neq i}^n d_j < \epsilon d_i \,\big|\, B=i \right) \\=& \sum_{l=1}^{m} \frac{1}{m} \Prob\left( S > x ; \sum_{j\neq i}^n d_j < \epsilon d_i \,\big|\, \ell^{(1)}=l, B=i \right).
\end{align*}

In case that $Z(i,\ell)=0$ for all $i=1,...,n$ and $\ell=1,...,m$, it follows from Proposition~\ref{prop:MainResultConvergence} that for all $\epsilon >0$ sufficiently small, the cascade sequence causes no disconnections for every city with largest demand and first line failure $\ell^{(1)}$. That is, for all $i=1,...,n$, $l=1,...,m$, $x>0$ and $\epsilon>0$ sufficiently small,
\begin{align*}
\Prob\left( S > x ; \sum_{j\neq i}^n d_j < \epsilon d_i \,\big|\, \ell^{(1)}=l, B=i \right) = 0.
\end{align*}
Therefore, if $Z(i,\ell)=0$ for all $i=1,...,n$ and $\ell=1,...,m$, then for all $\epsilon>0$ sufficiently small,
\begin{align*}
\sum_{i=1}^n \frac{1}{n} \Prob\left( S > x ; \sum_{j \neq i}^n d_j < \epsilon d_i \,\big|\, B=i \right)  = 0,
\end{align*}
and we conclude that~\eqref{eq:NoDisconnectionsAsympt} holds.

Next, suppose that $Z(i,\ell)\neq 0$ for at least some $i\in \{1,...,n\}$ and $\ell\in \{1,...,m\}$. It follows from Proposition~\ref{prop:MainResultConvergence} that for all $i\in \{1,...,n\}$ and $\ell\in \{1,...,m\}$ for which $Z(i,\ell)\neq 0$, it holds for all $\epsilon >0$ sufficiently small that the cascade sequence is the same as the one when the demand vector would have been $\bd=\be_i$. In particular,  whenever $\sum_{j \neq i}^n d_j < \epsilon d_i$, it holds for all $\epsilon >0$ sufficiently small that the set $A_1$ is deterministic and is the same set of nodes as if demand would have been $\bd=\be_i$, and the number of cities disconnected from city $i$ equals $Z(i,\ell)$. Recall Lemma~\ref{lm:g_prelimit} and the property that the generator vector $\bg$ is a continuous function of $\bd$. Consequently, for all $i\in \{1,...,n\}$ and $\ell\in \{1,...,m\}$ for which holds that $Z(i,\ell)\neq 0$, and for all $\epsilon >0$ sufficiently small,
{\small
\begin{align*}
&\Prob \left( S > x , \sum_{j\neq i}^n d_j < \epsilon d_i \,\big|\, \ell^{(1)}=\ell, B=i \right)
\\=& \Prob\left( \sum_{i \not\in A_1} (g_i-d_i)  > x , \sum_{j\neq i}^n d_j < \epsilon d_i \,\big|\, \ell^{(1)}=\ell, B=i \right)\\
\leq& \Prob\left( Z(i,\ell)\left(\frac{\lambda}{n} +c_1(\epsilon) \right)d_i > x , \sum_{j\neq i}^n d_j < \epsilon d_i \,\big|\, \ell^{(1)}=\ell, B=i \right),
\end{align*}
}
where $c_1(\epsilon)$ is a strictly positive function with $c_1(\epsilon) \rightarrow 0$ as $\epsilon \downarrow 0$. For independent identically Pareto-distributed random variables $X_1,...,X_n$, it holds that as $x \rightarrow \infty$,
\begin{align*}
\Prob\left(\max\{X_1,...,X_n\} \geq x \right) \sim n \Prob(X_i >x) =  n K x^{-\alpha}.
\end{align*}
Therefore, for all $i\in \{1,...,n\}$ and $\ell\in \{1,...,m\}$ for which $Z(i,\ell)\neq 0$,
\begin{align*}
&\lim_{\epsilon \downarrow 0} \lim_{x \rightarrow \infty} x^{\alpha} \Prob\left( S > x , \sum_{j\neq i}^n d_j < \epsilon d_i \,\big|\, \ell^{(1)}=\ell, B=i \right) \\\leq &\lim_{\epsilon \downarrow 0} n \left(Z(i,\ell)\left(\frac{\lambda}{n} +c_1(\epsilon)\right) \right)^\alpha \\
=& nK \left(Z(i,\ell)\frac{\lambda}{n}  \right)^\alpha .
\end{align*}
Similarly, we can obtain the same lower bound, i.e.
\begin{align*}
&\lim_{\epsilon \downarrow 0} \lim_{x \rightarrow \infty} x^{\alpha} \Prob\left( S > x , \sum_{j\neq i}^n d_j < \epsilon d_i \,\big|\, \ell^{(1)}=\ell, B=i \right) \\\geq & nK \left(Z(i,\ell)\frac{\lambda}{n}  \right)^\alpha .
\end{align*}
We conclude that as $x \rightarrow \infty$,
\begin{align*}
\Prob\left( S > x \right) = \sum_{i=1}^n \sum_{l=1}^{m} \frac{K}{m}  \left(Z(i,\ell)\frac{\lambda}{n}  \right)^\alpha x^{-\alpha}.
\end{align*}
Note that term in front of $x^{-\alpha}$ is a double sum of finitely many terms, and hence we can also conclude that~\eqref{eq:WithDisconnectionsAsympt} holds.
\end{proof}

\section{Cascade analysis for 6-node topology}
To illustrate how to derive the asymptotic behavior of the amount of load that is shed using our framework, we consider a network topology that consists of six nodes and eight lines as illustrated in the main paper. It follows from our results that in order to understand the behavior for large blackouts, it suffices to consider the behavior under unit demand vectors. Due to the highly symmetric structure of the network topology in this example, there are only two relevant options for the position of the city with the highest demand, as is illustrated in Fig.~\ref{fig:LineLimitsExample}. The red node represents the city that has unit demand (largest), while the other nodes have zero demand. Note that case A and B occur with probability $1/3$ and $2/3$, respectively. In each case, one can solve the operational problem to determine the \textit{emergency} line limits, which are also depicted in Fig.~\ref{fig:LineLimitsExample}. We illustrate how the cascading failure processes evolve in these cases next.

%\vspace{-0.2cm}
\begin{figure}[h!]
\centering
  \begin{subfigure}[t]{0.45\columnwidth}
  \centering
          \includegraphics[width=\textwidth]{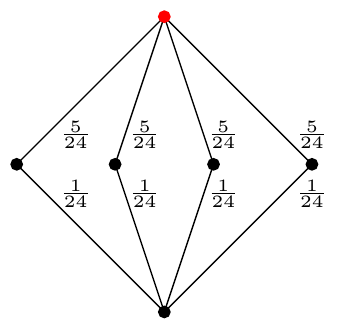}
          \caption{\small{Case A: the city with the largest demand is one of two cities that connect to four other cities.}}
%          \label{fig:PLFIT}
\end{subfigure}\;\;\;
  \begin{subfigure}[t]{0.45\columnwidth}
  \centering
          \includegraphics[width=\textwidth]{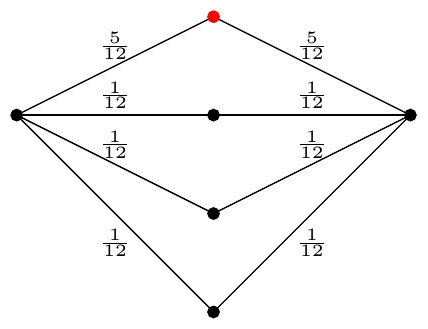}
        \caption{\small{Case B: the city with the largest demand is one of four cities that connect to two other cities.}}
      \label{fig:LineLimitsB}
      \end{subfigure}
      \caption{Line limits in the 6-node example.}
      \label{fig:LineLimitsExample}
      \end{figure}

\subsection{Case A}
The first line failure is chosen uniformly at random among all eight lines. Again, due to the symmetries of the network topology, we only need to consider two possibilities: when the first failure occurs at a top lines, or when it occurs at one of the bottom lines. Fig.~\ref{fig:CascadeExampleATop} illustrates the possible cascade when the initial failure is at a top line. The possible subsequent line failure occurs at the line for which the ratio of flow and line limit is largest, and therefore these values are depicted next to all (remaining) lines in Fig.~\ref{fig:CascadeExampleATop}. Only if the maximum ratio is strictly larger than one, another line fails. If the ratio is less than one, no consecutive line failure occurs, and if it equals one, then it corresponds to a phase-transition case.

\begin{figure}[h!]
  \begin{subfigure}[t]{0.45\columnwidth}
  \centering
          \includegraphics[width=\textwidth]{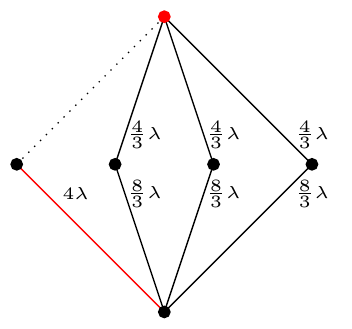}
          \caption{\small{Initial failure.}}
          %\label{fig:PLFIT}
\end{subfigure}\hfill
  \begin{subfigure}[t]{0.45\columnwidth}
  \centering
          \includegraphics[width=\textwidth]{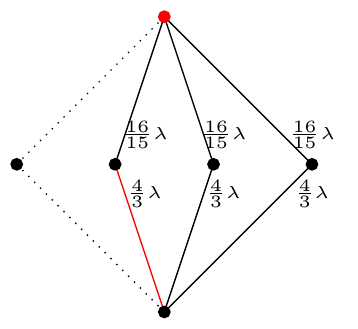}
        \caption{\small{Step 2.}}
%      \label{fig:Hill}
      \end{subfigure}\hfill
       \begin{subfigure}[t]{0.45\columnwidth}
  \centering
          \includegraphics[width=\textwidth]{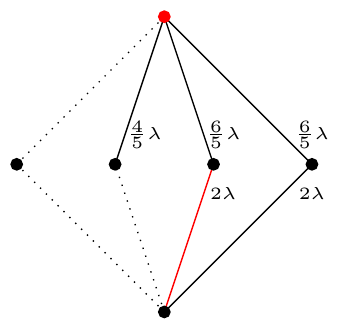}
        \caption{\small{Step 3.}}
 %     \label{fig:Hill}
      \end{subfigure}\hfill
             \begin{subfigure}[t]{0.45\columnwidth}
  \centering
          \includegraphics[width=\textwidth]{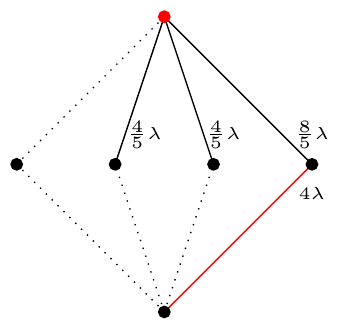}
        \caption{\small{Step 4.}}
 %     \label{fig:Hill}
      \end{subfigure}
             \begin{subfigure}[t]{0.45\columnwidth}
  \centering
          \includegraphics[width=\textwidth]{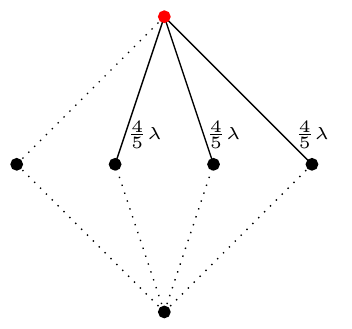}
        \caption{\small{Step 5.}}
 %     \label{fig:Hill}
      \end{subfigure}
      \caption{First line failure occurs at a top line.}
      \label{fig:CascadeExampleATop}
      \end{figure}

In Fig.~\ref{fig:CascadeExampleATop}, we observe that after the initial failure at a top line, the ratio of flow and line limit is highest at the corresponding bottom line. The ratio is $4\lambda$, and hence $\lambda=1/4$ is the first phase-transition value that we observe. If $\lambda<1/4$ the cascade ends immediately, otherwise this line fails. In step~2 (which is relevant for all value of $\lambda>1/4$), we observe that the bottom lines all have the same maximum ratio of $4/3\lambda$, which implies that Assumption~\ref{ass:ass2} is violated in this example. Yet, demands are independent and identically distributed, and the symmetric structure of this network topology ensures that each of the remaining bottom lines have equal probability to fail next. This explains why Assumption~\ref{ass:ass2} is too restrictive, and why our framework can deal with network topologies that have these types of symmetries as well. That is, regardless of the actual line that is chosen to fail next, the networks that appear in the next stages of the cascade are indistinguishable from one another. The second step also yields the second phase-transition value: if $1/4 < \lambda < 3/4$, then no consecutive failure occur, and if $\lambda >3/4$ another bottom line fails. In the latter case, also step~3 and step~4 are observed, where the network turns stable at step~5.

\begin{figure}[h!]
\centering
  \begin{subfigure}[t]{0.45\columnwidth}
  \centering
          \includegraphics[width=\textwidth]{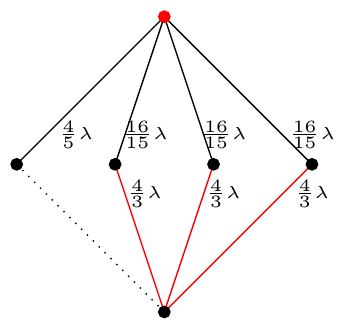}
          \caption{\small{First steps.}}
          %\label{fig:PLFIT}
\end{subfigure}
  \begin{subfigure}[t]{0.45\columnwidth}
  \centering
          \includegraphics[width=\textwidth]{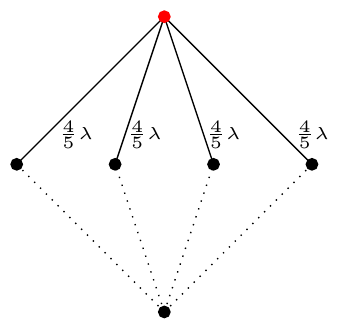}
        \caption{\small{Network after cascade.}}
%      \label{fig:Hill}
      \end{subfigure}
      \caption{First line failure at a bottom line}
      \label{fig:CascadeExampleABottom}
      \end{figure}

\begin{figure}[h!]
\centering
  \begin{subfigure}[t]{0.45\columnwidth}
  \centering
          \includegraphics[width=\textwidth]{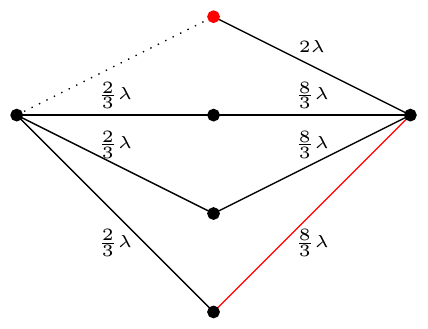}
          \caption{\small{Initial failure.}}
          %\label{fig:PLFIT}
\end{subfigure}
  \begin{subfigure}[t]{0.45\columnwidth}
  \centering
          \includegraphics[width=\textwidth]{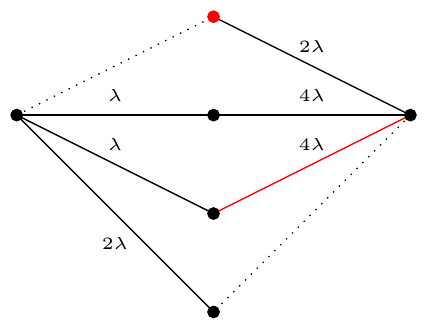}
        \caption{\small{Step 2.}}
%      \label{fig:Hill}
      \end{subfigure} \\
        \begin{subfigure}[t]{0.45\columnwidth}
  \centering
          \includegraphics[width=\textwidth]{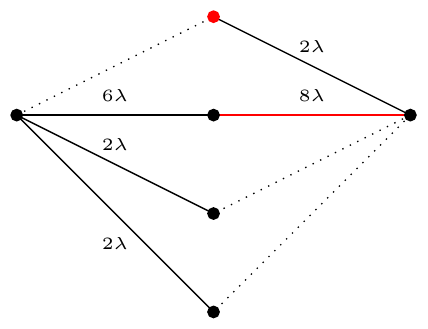}
          \caption{\small{Step 3.}}
          %\label{fig:PLFIT}
\end{subfigure}
  \begin{subfigure}[t]{0.45\columnwidth}
  \centering
          \includegraphics[width=\textwidth]{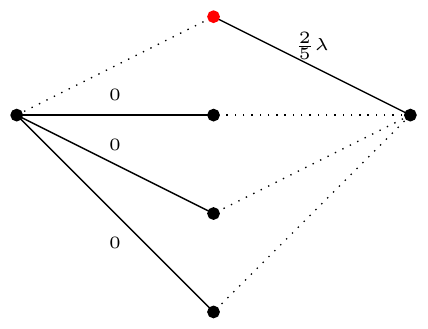}
        \caption{\small{Step 4.}}
%      \label{fig:Hill}
      \end{subfigure} \\
      \caption{First line failure at a top line.}
      \label{fig:CascadeExampleBTop}
      \end{figure}
If the first line failure occurs at a bottom line, then the corresponding top line is stable, and a possible subsequent failure occurs at one of the three remaining bottom lines. Again, consecutive line failures occur when $\lambda>3/4$, and the cascade stops if $\lambda<3/4$. Using an analogue analysis as in the previous case, we would end up with a network where all bottom lines have failed, and all top lines are intact when $\lambda>3/4$. An illustration of this case is given in Fig.~\ref{fig:CascadeExampleABottom}.

\subsection{Case B} We can perform a similar analysis in this case. Again, due to the symmetries, there are only two truly different possibilities for the first line failure to occur: one of the two top lines, or one of the six other lines in Fig.~\ref{fig:LineLimitsB}.

In Fig.~\ref{fig:CascadeExampleBTop} we illustrate the possible cascades if the initial failure occurs at one of the two top lines. In this case, the cascade only continues if $\lambda >3/8$, and stops if $\lambda<3/8$. That is, we observe another phase-transition value, i.e.~$\lambda=3/8$. If $\lambda > 3/8$, then after the initial line failure three more line failure occurs, which after the cascade stops. In Fig.~\ref{fig:CascadeExampleBBottom} we illustrate the possible cascades if the initial failure occurs at one of the six bottom lines. Then, there is only a possible second line failure when $\lambda>1/2$, after which the cascade stops. If $\lambda<1/2$, the cascade stops after immediately after the initial line failure. Consequently, we obtain $\lambda=1/2$ as a fourth phase-transition value.

\begin{figure}[h!]
\centering
  \begin{subfigure}[t]{0.45\columnwidth}
  \centering
          \includegraphics[width=\textwidth]{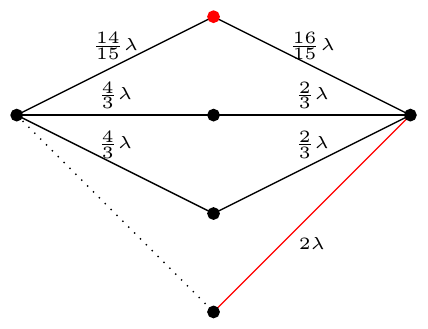}
          \caption{\small{Initial failure.}}
          %\label{fig:PLFIT}
\end{subfigure}
  \begin{subfigure}[t]{0.45\columnwidth}
  \centering
          \includegraphics[width=\textwidth]{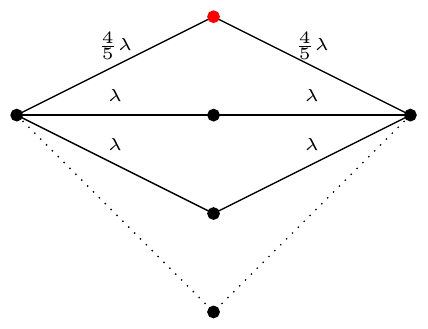}
        \caption{\small{Step 2.}}
%      \label{fig:Hill}
      \end{subfigure}
      \caption{First line failure at a bottom line.}
      \label{fig:CascadeExampleBBottom}
      \end{figure}

  \subsection{Tail behavior of blackout size} To derive the tail behavior of the total amount of load shed, we need to determine the number of combinations that leads to $j$ cities disconnected from the city with largest demand $j=0,1,...,n-1$. That is, we count the number of tuples $(i,\ell)$ such that $Z(i,\ell)=j$, $j=0,1,...,n-1$, where $i$ denotes the city label and $\ell$ the first line failure. Since the network consists of six nodes and eight lines, there are a total of~$48$ possible tuples. It follows from the previous sections that in this example, there are four phase-transition values of $\lambda$, namely $1/4$, $3/8$, $1/2$ and $3/4$. Therefore we need to distinguish between five possible intervals of $\lambda$. In Table~\ref{tab:OverviewExampleNumberOfOccurances} we provide an overview, which follows directly from the results in the previous sections. A direct consequence is the following corollary.

\begin{table}[h!]
\centering
  \begin{tabular}{| l | c c c c c c |}
  	\hline
	$\#\{(i,\ell) : Z(i,\ell) = j\}$ & $j=0$ & $j=1$ & $j=2$ & $j=3$ & $j=4$ & $j=5$ \\
	\hline
	$0 < \lambda < 1/4$ & 48 & 0 & 0 & 0 & 0 & 0 \\
	$1/4 < \lambda < 3/8$ & 40 & 8 & 0 & 0 & 0 & 0 \\
	$3/8 < \lambda < 1/2$ & 32 & 8 & 0 & 0 & 8 & 0 \\
	$1/2 < \lambda < 3/4$ & 8 & 32 & 0 & 0 & 8 & 0 \\
	$3/4 < \lambda < 1$ & 0 & 32 & 8 & 0 & 8 & 0 \\
	\hline
  \end{tabular}
  \caption{Overview of number of tuples that lead to the disconnection of $j$ cities from the city with largest demand.}
  \label{tab:OverviewExampleNumberOfOccurances}
\end{table}

\begin{corollary}
Consider the 6-node network topology. If $\lambda \in (0,1/4)$, then as $x \rightarrow \infty$,
\begin{align*}
\Prob(S > x) \sim O(x^{-2\alpha}).
\end{align*}
Otherwise, as $x \rightarrow \infty$,
\begin{align*}
\Prob(S > x) \sim C(\lambda) K x^{-\alpha},
\end{align*}
where
\begin{align*}
C(\lambda) = \left\{ \begin{array}{ll}
\left(\lambda/6 \right)^{ \alpha} & \textrm{if } 1/4 < \lambda < 3/8,\\
\left(\lambda/6 \right)^{ \alpha} + \left(2\lambda/3 \right)^{ \alpha}& \textrm{if } 3/8 < \lambda < 1/2,\\
4\left(\lambda/6 \right)^{ \alpha} + \left(2\lambda/3 \right)^{ \alpha}& \textrm{if } 1/2 < \lambda < 3/4,\\
4\left(\lambda/6 \right)^{ \alpha} + \left(\lambda/3 \right)^{ \alpha} + \left(2\lambda/3 \right)^{ \alpha}& \textrm{if } 3/4 < \lambda < 1.
\end{array}\right.
\end{align*}

\end{corollary}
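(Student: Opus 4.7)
The plan is to apply the Theorem of the preceding subsection directly to this network. That result yields, for any topology and any $\lambda$ outside the finitely many phase-transition values,
\[
\Prob(S>x) \sim \frac{K x^{-\alpha}}{m} \sum_{i=1}^n \sum_{\ell=1}^m \Bigl( \frac{Z(i,\ell)\lambda}{n} \Bigr)^\alpha,
\]
collapsing to $O(x^{-2\alpha})$ when $Z(i,\ell)=0$ for all $(i,\ell)$. With $n=6$ and $m=8$, the proof reduces to (a) verifying the theorem's hypotheses for this example, and (b) enumerating $Z(i,\ell)$ across the $48$ tuples of (sink, initial failure).

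First I would address the hypotheses. Assumption~\ref{ass:GeneralAss} requires the uniqueness condition of Assumption~\ref{ass:ass2} for every relabeling of vertices; as Figs.~\ref{fig:CascadeExampleATop}--\ref{fig:CascadeExampleBBottom} show, this fails because of graph symmetries (e.g.\ in Case~A after the first top-and-bottom-line pair has failed, the three remaining bottom lines tie). The tying lines, however, are related by an automorphism of the post-failure graph, so the cascade outcome is determined up to isomorphism and $Z(i,\ell)$ is well-defined. This is exactly the symmetry-based extension noted in the remark after Assumption~\ref{ass:ass2}, and the theorem's conclusion therefore still applies. The phase-transition condition (Assumption~\ref{ass:ass1}) excludes only the four values $\lambda\in\{1/4,3/8,1/2,3/4\}$ identified below, carving $(0,1)$ into the five intervals in the statement.

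Next I would compute $Z(i,\ell)$. By node-orbit symmetry the sink $i$ has only two inequivalent placements, giving the $16$ and $32$ tuples labelled Case~A and Case~B in Fig.~\ref{fig:LineLimitsExample}. Within each placement, a further edge-orbit symmetry reduces the eight possible initial failures to two representatives (a top line versus a bottom line), for four scenarios in total. For each scenario I would unroll the cascade: apply the closed-form operational solution of Lemma~\ref{lm:g_prelimit} restricted to the surviving subnetwork, compute the ratios $|f_j^{(m)}|/F_j$ as per~\eqref{eq:next_failure}, and identify the lines that trip under Definitions~\ref{def:shedding}--\ref{def:flows}. This mechanical walk-through, pictured in the figures, pinpoints the four phase-transition values and the terminal connected component containing the sink. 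Multiplying by the orbit size yields the occupancy Table~\ref{tab:OverviewExampleNumberOfOccurances}; as a sanity check every row should sum to $48$.

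Finally I would substitute into the formula. In $\lambda\in(0,1/4)$ every $Z(i,\ell)=0$, which triggers the $O(x^{-2\alpha})$ branch of the theorem. On the other four intervals, the possible values $Z\in\{1,2,4\}$ give the factors $\lambda/6,\lambda/3,2\lambda/3$ inside the $\alpha$-th power, and the multiplicities from the table combine into the stated $C(\lambda)$. The main obstacle is not any deep estimate but the enumeration itself: one must avoid double counting when invoking the orbit symmetries, and must recheck the tie-breaking pattern at every cascade stage, since a missed tie or an off-by-one in an orbit size would shift the multiplicities in Table~\ref{tab:OverviewExampleNumberOfOccurances} and hence the coefficients in $C(\lambda)$.
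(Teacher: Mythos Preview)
Your proposal is correct and follows essentially the same route as the paper: the corollary is stated there as ``a direct consequence'' of the preceding case analysis (Sections~V.A--V.B) and Table~\ref{tab:OverviewExampleNumberOfOccurances}, plugged into the main theorem, and you have laid out precisely that structure, including the correct handling of the symmetry-induced ties via the remark after Assumption~\ref{ass:ass2}. One small wording slip: Lemma~\ref{lm:g_prelimit} is used only once to get the \emph{initial} operational flows, not ``restricted to the surviving subnetwork'' at each stage---the post-failure flows come from the updated PTDF on the surviving graph as in Definition~\ref{def:flows}---but this does not affect your overall argument.
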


%\clearpage

\section{Numerical validation on synthetic networks}

\begin{algorithm}[H]
\begin{small}
    \caption{Monte-Carlo simulation for synthetic networks}
    \label{alg:montecarlo}
    \begin{algorithmic}[1]
    	\Inputs{Parameters for sampling network and city sizes\\
    	Line limit scaling parameter $\lambda \in(0,1)$ \\
    	Number of blackout realizations  $\niter$ \\\vspace{0.5cm}}
        \Initialize{
        \State Sample network topology $\G$ and city sizes (C)\\
         \State Solve OPF without line limits; let $\bfl^*$ be the resulting power flows (B, D)
		\State Solve OPF with line limits $\bar{\bfl}=\lambda |\bfl^*|$ (B, D)
		\State Choose a random subset $\LL^\prime$ of lines, with cardinality $\niter$ \vspace{0.5cm}
        }
        \Procedure{}{}
%        	\For{ t in $1:N_{\text{iter}}$ }
		\For{ $\ell$ in $\LL^\prime$: }
					\State Sample city sizes (A, C)
					\State Solve OPF without line limits; let $\bfl^*$ be the resulting power flows  (B, D)
					\State Set $\Gconn = $ True
					\State Remove line $\ell$ from $\G$
        			     \State Set $\Gc = $True
        				\While{$\Gc = $True \textbf{and} $\Gconn =$ True}
        					
        					\State Recompute power flows (B)
        					\State Remove from $\G$ the line with the largest relative overload wrt. the original line limits $|\bfl^*|$ , if any
        					\State If a line was removed, let $\Gc =$ True; otherwise, set $\Gc = $ False.
        					\State If $\G$ is still connected, let $\Gconn =$ True; otherwise, set $\Gconn = $ False.
%        					\If {a line was removed}
%        					\State  let $\Gc = $True
%        					\If {$\G$ is still connected}
%        						\State let $\Gconn = $True
%        					\Else { set $\Gconn = $False}
%        					\EndIf
%        					\Else { set $\Gc =$False}
%        					\EndIf
        		\EndWhile
        		\State Shed load/generation within each component of $\G$ to achieve power balance (\textit{load shedding event}) (B)
        		\State Store blackout realization
        		%\EndFor
        	\EndFor
        \EndProcedure
    \end{algorithmic}
    \end{small}
\end{algorithm}

In this section, we numerically validate our theory by simulating synthetic blackouts using the three-stages mathematical model described in Section~\ref{s:CascadeModel}.
We use different random graph topologies, the Watts-Strogatz model~\cite{Watts1998} and the more recent SynGrid model developed in~\cite{Wang2018syngrid}, as well as IEEE test cases.
The Watts-Strogatz model produces graphs with small world topologies, while the SynGrid model produces random graphs with realistic and statistically correct power grid topologies. With the Watts-Strogatz model, we aim to analyze the impact of network topology via changing the rewiring probability $p$, while with the SynGrid model we study the impact of the line limit loading factor $\lambda$.
% We use the \textit{SynGrid} package~\cite{Wang2018syngrid} to generate random topologies according to~\cite{Wang2010generating}.
%\note{In Section~\ref{sec:standard}, we perform simulations based on our theoretical framework without modifications.}

In Sections~\ref{sec:fixed}-~\ref{sec:pwl}, we relax various assumptions upon which our model is based, in order to test the sensitivity of the theory.
The general algorithm is summarized in Algorithm~\ref{alg:montecarlo}. Steps marked with (A), (B), (C) and (D) differ according to the particular assumption that is relaxed. In particular, these steps control:
(A) whether or not to resample city sizes at each iteration;
(B) whether to use a DC or AC power flow model;
(C) which random graph and city size distribution to use;
(D) whether to use a convex quadratic or a convex piecewise linear cost function in the OPF.
%HOW ABOUT NUMBERING THEM AS A,B,C?

 City sizes are sampled from both a Pareto distribution with parameters $\alpha^{(d)}=1.37$ and $\xmin^{(d)}=5 \cdot 10^4$ (according to the results in Table~\ref{tab:PLFIT_stats}), and a uniform distribution with the same mean as the Pareto distribution.
The two different city sizes distributions are used to corroborate our theory from different angles. On the one hand, we show that when city sizes are heavy-tailed so are blackout sizes and the tail indexes are similar, as predicted by our asymptotic theory. On the other hand, when city sizes are not heavy-tailed, we show that the blackout size distribution is also not heavy-tailed.
Each iteration of the simulation stops when there are no more overloaded lines, or the graph got disconnected in two islands.

The results show that our framework is robust with respect to violations of the above-mentioned key assumptions, different topologies, line limit loading factors, and city size distributions.
Later, in Section~\ref{sc:scigrid}, we perform simulations on a model for the German transmission grid, where all of the simplifying assumptions, except for the DC flow model, are relaxed in favour of realistic parameters.

\subsection{Frozen city sizes}\label{sec:fixed}
Our mathematical framework described in Section~\ref{s:CascadeModel} models city sizes $X_1,\ldots,X_n$ as Pareto distributed random variables, while in the real world the sizes of cities served by a given power grid are essentially fixed. In this section, we show that our results still hold for a version of our model where the city sizes are kept fixed (i.e., they are not resampled at each iteration), provided that the network is large enough to avoid finite size effects.

With respect to Algorithm~\ref{alg:montecarlo}, step A is skipped, so that city sizes are kept frozen throughout the simulation, and the only source of randomness is the first outage event. The rest of the framework is unchanged, i.e. a DC flow model with a convex quadratic objective function is used (steps B, D), and both a Pareto and a uniform distribution are used for city sizes (step C).
%We remark that the graph topology and city sizes are sampled only \textit{once}, so that the only source of randomness during the simulation is the first outage event.
%Moreover, we run additional simulations for the case where city sizes \textit{do not} follow a Pareto distribution, but are instead light-tailed (in particular, a uniform distribution with the same mean as the Pareto distribution is chosen for city sizes).
%The simulation setup is summarized in Algorithm~\ref{alg:montecarlo}, with the following provisions:
%\note{finish}
%With the Watts-Strogatz model, we analyze the impact of network topology (via he rewiring probability $p$), while with the SynGrid model we study the impact of line limit loading factor $\lamba$.

The results are reported in Tables~\ref{tab:synthetic_WS},~\ref{tab:synthetic_syngrid} and Figures~\ref{fig:synthetic_WS},~\ref{fig:synthetic_syngrid}.
 We observe that, in the Watts-Strogatz case, the tail index estimates $\alpha$ are within one standard deviation apart from the city sizes index $\alpha^{(d)}=1.37$,
 %, with $\pv\ge 0.1$,
 consistently across different values of the rewiring probability $p$. The result is corroborated by the analysis of the Hill plots in Fig.~\ref{fig:synthetic_WS}, where we observe that the flat region of the graph $\xmin \to \alpha(\xmin)$ is close to $\alpha^{(\text{city})}=1.37$. Conversely, there is no indication of a heavy tail for the blackout size distribution in the case of uniform city sizes, as can be inferred from the Hill plots in Fig.~\ref{fig:synthetic_WS}.

For the SynGrid model, the tail index estimates $\alpha$ are within two standard deviations apart from the city sizes index, consistently across different values of $\lambda$. The fit is more accurate for larger values or $\lambda$, as can be observed from the Hill plots in Fig.~\ref{fig:synthetic_syngrid}, while the outliers at the far end of the tail could be attributed to finite size effects. A worse fit is observed in the case of smaller networks, as can be seen from Fig.~\ref{fig:different_sizes}. In particular, it appears that the estimated tail index of blackout sizes decreases monotonically to that of city sizes, and that convergence is achieved only for $n \ge 10000$.
Once more, in the case of light-tailed city sizes the heavy-tail behavior of blackout size is not observed.

We conclude that, provided that the network is large enough,
the Pareto law of blackout sizes is inherited from that of city sizes as predicted by our model, even in the case where city sizes are fixed and a realistic power grid topology is used. Moreover, when city sizes are light-tailed, the blackout size are not heavy-tailed, providing further support to our theory.

%\note{BERT: TO DO, ADD A TABLE AND DISCUSSION FOR SMALLER NETWORKS \\}

\begin{figure}[hbt!]
  \centering
          \includegraphics[width=\columnwidth]{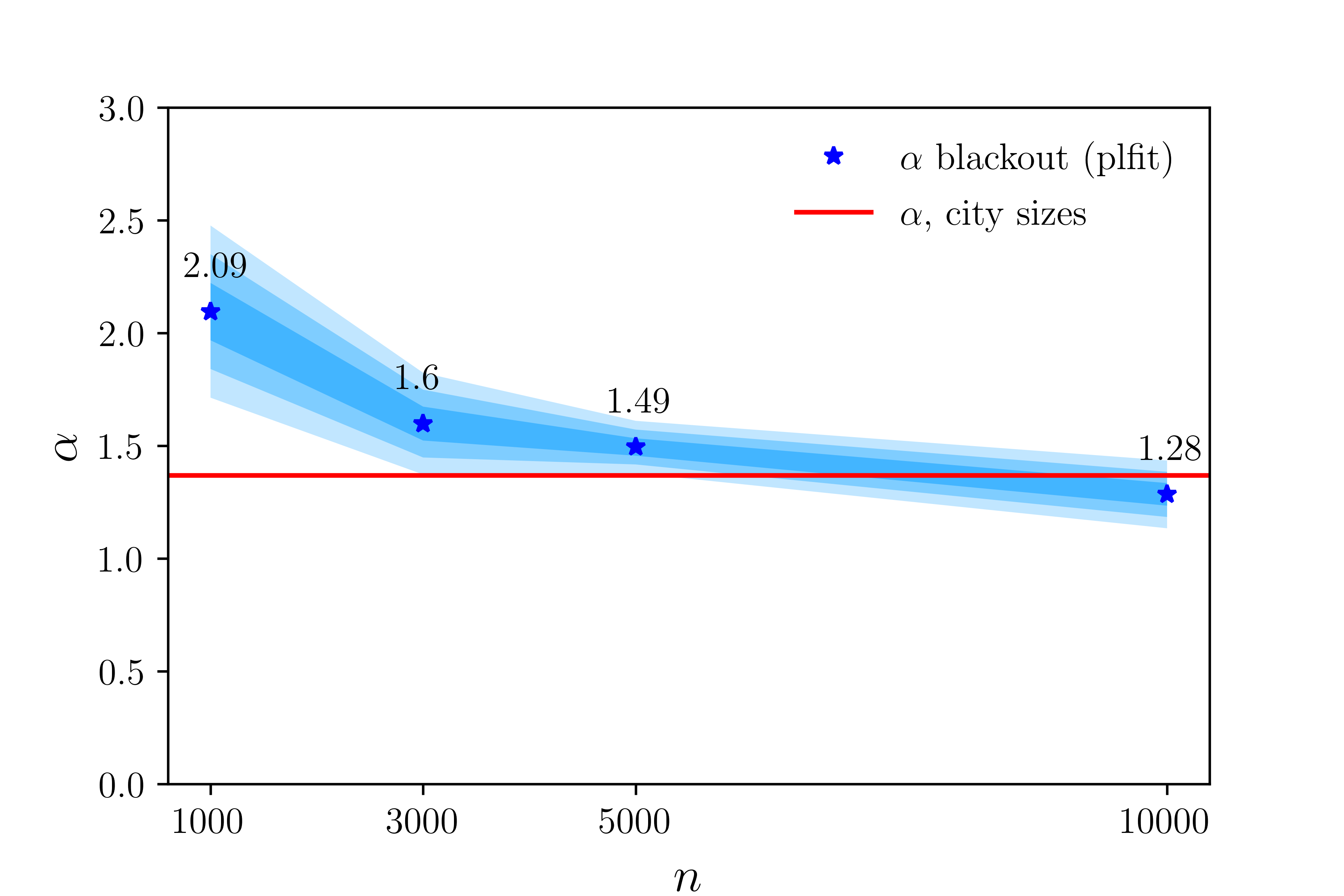}
       \caption{Visualization of PLFIT-based tail index for synthetically generated blackout keeping city sizes fixed, as the size $n$ of the network increases. The power grid topology is based on the SynGrid random graph model~\cite{Wang2018syngrid} with $m \sim 2.5 \, n$ lines, $\lambda=0.9$, and city sizes are sampled form a Pareto distribution. The shaded bands correspond to $1,2$ and $3$ standard deviations.
}
         \label{fig:different_sizes}
          \end{figure}

\begin{table}[H]%[hbt!]
\centering
\begin{tabular}{c|c|c|c|c|c}
\hline
\hline
\small{$p$} & \small{$\niter$} &%$N$ \tnote{1}
\small{$\nnonzero$}& \small{$\ntail$}
 &\small{$\alpha$}   & \small{city sizes}   \\
\hline
$0.3$ & $10000$ & $5066$ & $1336$ & $1.56 \pm 0.04$   & pareto\\
$0.5$ & $10000$ & $5189$ & $366$  & $1.31 \pm 0.07$  & pareto\\
$0.7$ & $10000$ & $5257$ & $504$  & $1.41 \pm 0.06$  & pareto\\
$0.3$ & $10000$ & $5225$ & $1681$ & $4.31 \pm 0.11$  & uniform\\
$0.5$ & $10000$ & $5331$ & $1692$ & $4.47 \pm 0.11$  & uniform\\
$0.7$ & $10000$ & $5332$ & $1570$ & $4.80 \pm 0.12$  & uniform\\
\hline
\hline
\end{tabular}
\caption{\small PLFIT statistics for synthetically generated blackout data keeping city sizes fixed, using a Watts-Strogatz random graph model for the power grid topology with $n=10000$ nodes, $ m=20000$ lines, rewiring probabilities $p\in\{0.3,0.5,0.7\}$, mean degree $K=4$, line limit scaling factor $\lambda=0.7$, and different city sizes distribution.  $\nnonzero$ is the number of nonzero realizations, and $\ntail$ is the number of realizations $x_i\ge\hatxmin$.}
  \label{tab:synthetic_WS}
\end{table}

\begin{table}[H]%[hbt!]
\centering
%\resizebox{\columnwidth}{!}{%
\begin{tabular}{c|c|c|c|c|c}
\hline
\hline
\small{$\lambda$} & \small{$\niter$} &%$N$ \tnote{1}
\small{$\nnonzero$}& \small{$\ntail$}
 &\small{$\alpha$}  & city sizes  \\
\hline
$0.9$ & $10000$ & $8883$ & $657$ & $1.28\pm 0.05 $ & pareto \\
$0.7$ & $10000$ & $4308$ & $1125$ & $1.35\pm 0.04 $ & pareto \\
$0.5$ & $10000$ & $1189$ & $435$ & $1.23\pm 0.06 $  &  pareto\\
$0.9$ & $10000$ & $8919$ & $763$ & $3.63\pm 0.13 $ & uniform \\
$0.7$ & $10000$ & $4764$ & $404$ & $4.00\pm 0.20 $  & uniform \\
$0.5$ & $10000$ & $1615$ & $79$ & $3.07\pm 0.35 $ &  uniform\\
\hline
\hline
\end{tabular}
%}
\caption{\small PLFIT statistics for synthetically generated blackout data keeping city sizes fixed, using the SynGrid random graph model~\cite{Wang2018syngrid} for the power grid topology with $n=10000$ nodes, $m \sim 2.5 \, n$ lines, for different values of $\lambda$ and city sizes distribution.  $\nnonzero$ is the number of nonzero realizations, and $\ntail$ is the number of realizations $x_i\ge\hatxmin$.}
  \label{tab:synthetic_syngrid}
\end{table}

%\clearpage

\begin{figure}[hbt!]
 \begin{subfigure}[t]{0.45\columnwidth}
  \centering
          \includegraphics[width=\textwidth]{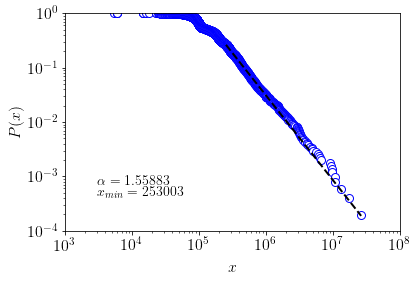}
          \caption{\small{PLFIT, Pareto city sizes, $p=0.3$}}
\end{subfigure}\hfill
 \begin{subfigure}[t]{0.45\columnwidth}
  \centering
      \includegraphics[width=\textwidth]{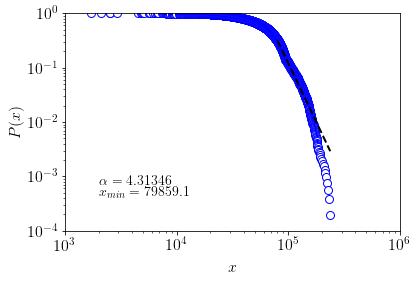}
          \caption{\small{PLFIT, uniform city sizes, $p=0.3$}}
\end{subfigure}\hfill
\begin{subfigure}[t]{0.45\columnwidth}
  \centering
          \includegraphics[width=\textwidth]{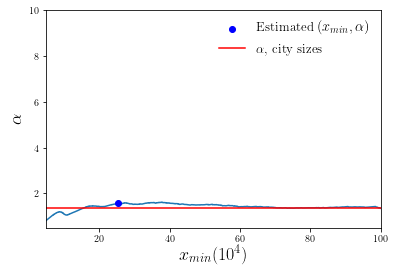}
          \caption{\small{Hill plot, Pareto city sizes, $p=0.3$.}}
\end{subfigure}\hfill
 \begin{subfigure}[t]{0.45\columnwidth}
  \centering
      \includegraphics[width=\textwidth]{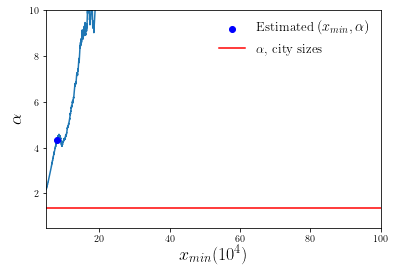}
          \caption{\small{Hill plot, uniform city sizes, $p=0.3$.}}
\end{subfigure}\hfill
\begin{subfigure}[t]{0.45\columnwidth}
  \centering
          \includegraphics[width=\textwidth]{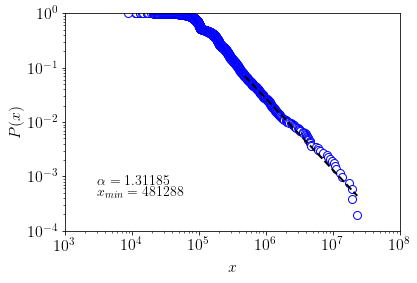}
          \caption{\small{PLFIT, Pareto city sizes, $p=0.5$.}}
\end{subfigure}\hfill
 \begin{subfigure}[t]{0.45\columnwidth}
  \centering
      \includegraphics[width=\textwidth]{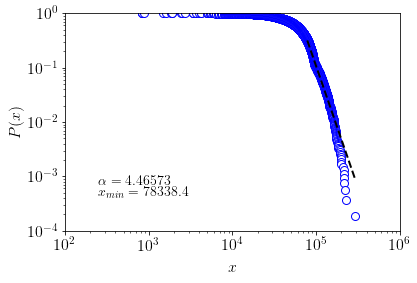}
          \caption{\small{PLFIT, uniform city sizes, $p=0.5$.}}
\end{subfigure}\hfill
\begin{subfigure}[t]{0.45\columnwidth}
  \centering
          \includegraphics[width=\textwidth]{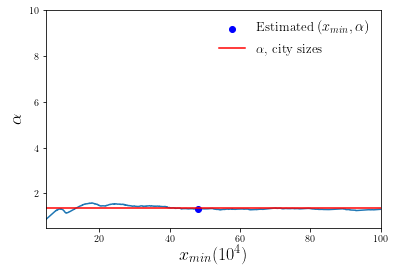}
          \caption{\small{Hill plot, Pareto city sizes, $p=0.5$.}}
\end{subfigure}\hfill
 \begin{subfigure}[t]{0.45\columnwidth}
  \centering
      \includegraphics[width=\textwidth]{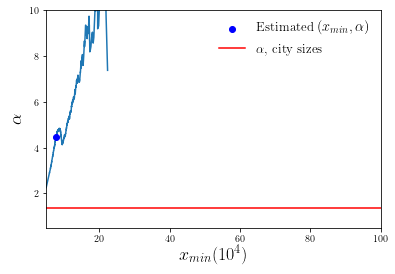}
          \caption{\small{Hill plot, uniform city sizes, $p=0.5$.}}
\end{subfigure}\hfill
 \begin{subfigure}[t]{0.45\columnwidth}
  \centering
          \includegraphics[width=\textwidth]{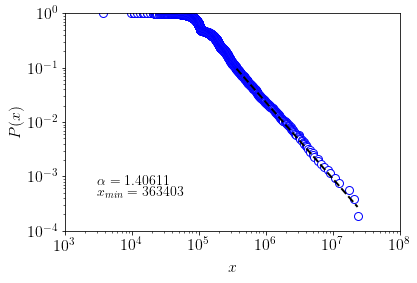}
          \caption{\small{PLFIT, Pareto city sizes, $p=0.7$.}}
\end{subfigure}\hfill
 \begin{subfigure}[t]{0.45\columnwidth}
  \centering
      \includegraphics[width=\textwidth]{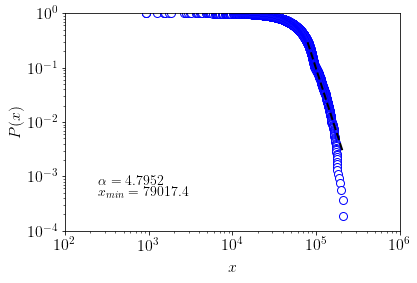}
          \caption{\small{PLFIT, uniform city sizes, $p=0.7$.}}
\end{subfigure}\hfill
\begin{subfigure}[t]{0.45\columnwidth}
  \centering
          \includegraphics[width=\textwidth]{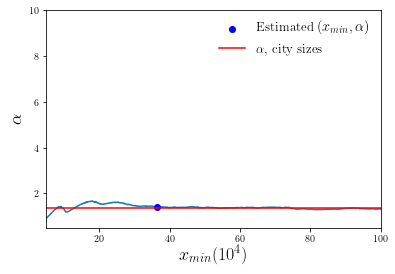}
          \caption{\small{Hill plot, Pareto city sizes, $p=0.7$.}}
\end{subfigure}\hfill
 \begin{subfigure}[t]{0.45\columnwidth}
  \centering
      \includegraphics[width=\textwidth]{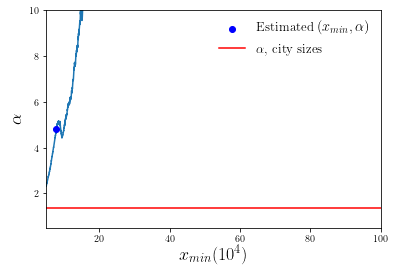}
          \caption{\small{Hill plot, uniform city sizes, $p=0.3$.}}
\end{subfigure}
\caption{\small Results for synthetically generated blackout data, using a Watts-Strogatz model for the power grid topology and keeping city sizes fixed, for different rewiring probabilities $p$ and different city sizes distributions. a,b,e,f,i,j): PLFIT results and log-log plot of CCDF; c,d,g,h,k,l) Hill plot: red line corresponds to the city sizes tail index $\alpha^{(\text{city})}=1.37$.\label{fig:synthetic_WS}}
      \end{figure}

\begin{figure}[h!]
 \begin{subfigure}[t]{0.45\columnwidth}
  \centering
          \includegraphics[width=\textwidth]{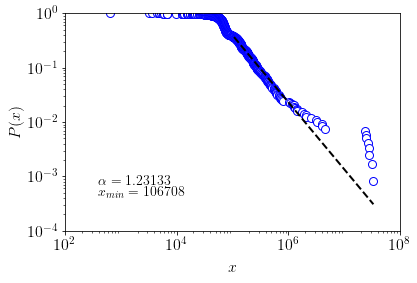}
          \caption{\small{PLFIT, Pareto city sizes, $\lambda=0.5$.}}
\end{subfigure}\hfill
 \begin{subfigure}[t]{0.45\columnwidth}
  \centering
          \includegraphics[width=\textwidth]{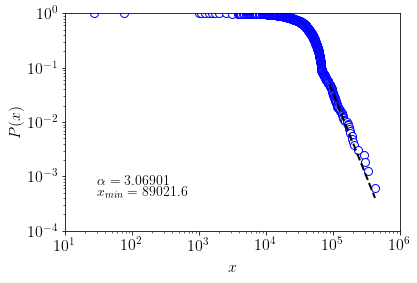}
          \caption{\small{PLFIT, uniform city sizes, $\lambda=0.5$.}}
\end{subfigure}\hfill
\begin{subfigure}[t]{0.45\columnwidth}
  \centering
          \includegraphics[width=\textwidth]{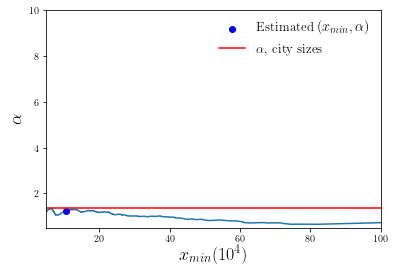}
          \caption{\small{Hill plot, Pareto city sizes, $\lambda=0.5$.}}
\end{subfigure}\hfill
 \begin{subfigure}[t]{0.45\columnwidth}
  \centering
          \includegraphics[width=\textwidth]{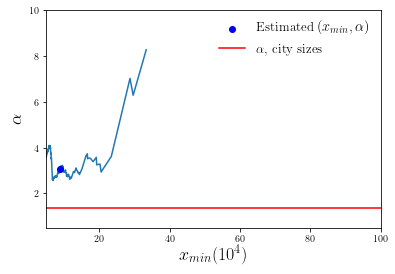}
          \caption{\small{Hill plot, uniform city sizes, $\lambda=0.5$.}}
\end{subfigure}
 \begin{subfigure}[t]{0.45\columnwidth}
  \centering
          \includegraphics[width=\textwidth]{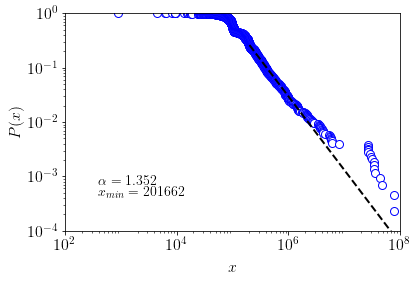}
          \caption{\small{PLFIT, Pareto city sizes, $\lambda=0.7$.}}
\end{subfigure}\hfill
 \begin{subfigure}[t]{0.45\columnwidth}
  \centering
          \includegraphics[width=\textwidth]{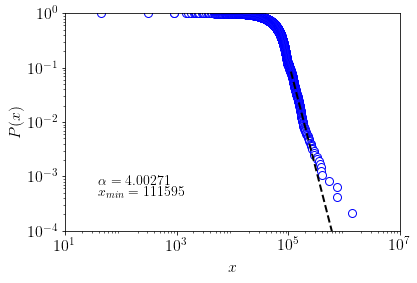}
          \caption{\small{PLFIT, uniform city sizes, $\lambda=0.7$.}}
\end{subfigure}\hfill
\begin{subfigure}[t]{0.45\columnwidth}
  \centering
          \includegraphics[width=\textwidth]{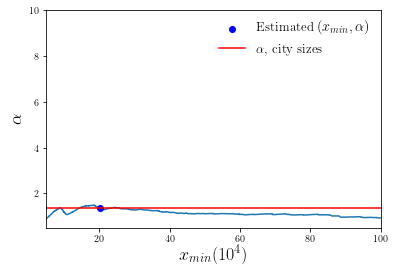}
          \caption{\small{Hill plot, Pareto city sizes, $\lambda=0.7$.}}
\end{subfigure}\hfill
 \begin{subfigure}[t]{0.45\columnwidth}
  \centering
          \includegraphics[width=\textwidth]{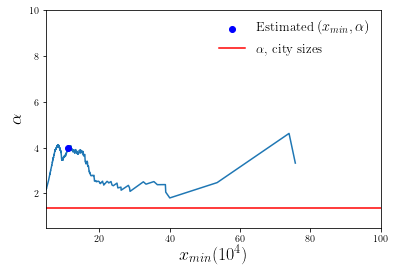}
          \caption{\small{Hill plot, uniform city sizes, $\lambda=0.7$.}}
\end{subfigure}
 \begin{subfigure}[t]{0.45\columnwidth}
  \centering
          \includegraphics[width=\textwidth]{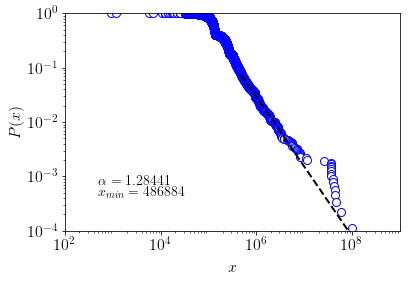}
          \caption{\small{PLFIT, Pareto city sizes, $\lambda=0.9$.}}
\end{subfigure}\hfill
 \begin{subfigure}[t]{0.45\columnwidth}
  \centering
          \includegraphics[width=\textwidth]{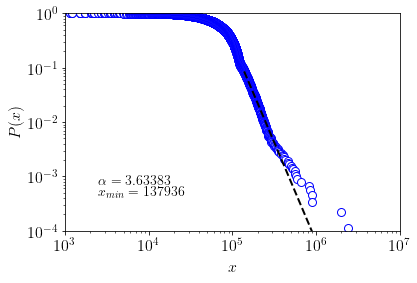}
          \caption{\small{PLFIT, uniform city sizes, $\lambda=0.9$.}}
\end{subfigure}\hfill
\begin{subfigure}[t]{0.45\columnwidth}
  \centering
          \includegraphics[width=\textwidth]{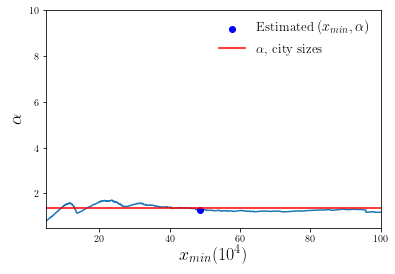}
          \caption{\small{Hill plot, Pareto city sizes, $\lambda=0.9$.}}
\end{subfigure}\hfill
 \begin{subfigure}[t]{0.45\columnwidth}
  \centering
          \includegraphics[width=\textwidth]{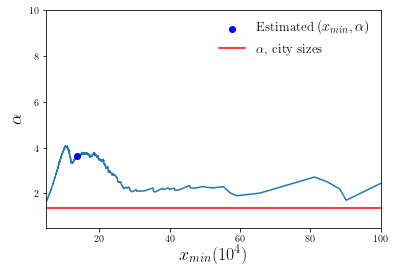}
          \caption{\small{Hill plot, uniform city sizes, $\lambda=0.9$.}}
\end{subfigure}
\caption{ \small Results for synthetically generated blackout data, using the SynGrid model in~\cite{Wang2018syngrid} for the power grid topology and keeping city sizes fixed, for line limit loading factors $\lambda=0.9$, and different city sizes distributions. a,b,e,f,i,j): PLFIT results and log-log plot of CCDF; c,d,g,h,k,l) Hill plot: red line corresponds to the city sizes tail index $\alpha^{(\text{city})}=1.37$.\label{fig:synthetic_syngrid}}
      \end{figure}

%\clearpage
\subsection{Voltage limits and AC power flow model}\label{sec:AC}

In this section, we illustrate an extension of our framework to the AC power flow model that takes into account voltage limits and losses.  The experiments are performed using MATPOWER~\cite{Zimmerman2011}. As our three stages framework described in Section~\ref{s:CascadeModel} is devised with a DC power flow model in mind, we introduce the following modifications.
%In order to do that, we introduce the following modifications to our three stages framework described in Section~\ref{s:CascadeModel}
%
%\note{INSERT EXPLANATION AND RESULTS: ''My thought, still not totally organized, is that what we have on the table
%[14 bus network, slightly lighter tail, 118 bus network, heavier tail,
%but no heavy tail if we take uniform city sizes] is already quite decent.
%"}
First, the planning and operational OPFs, as well as the calculation of the power flow (PF) redistribution after line failures in the emergency phase, are performed using the AC model~\cite{Bienstockbook}.
The required parameters to perform these calculations, such as voltage limits, line resistances and reactances are pulled from the MATPOWER test cases (as opposed to the original DC framework, which assumed unitary reactances and zero resistances), and the OPF and PF calculations are performed using MATPOWER's built-in routines.

%Second, the line limit constraints are expressed in terms of the active power flows only. Specifically, under the AC power flow model, the line flows are composed of an active component $ $ and a reactive component $ $. DIRE CHE MI FOCALIZZO SOLO SUL ACTIVE
Second, due to line losses, the active power injected into line $\ell=\{i,j\}$ at its sending end, denoted by $f_{ij}$, will differ from the one at the receiving end, $f_{ji}$.
Thus, we express the line limit constraints in terms of the maximum of active power flows at the two ends of the line. In particular, the line limit constraints in the operational OPF are given by
\begin{equation}
|f_{\ell}| < \fbar_{\ell} := \lambda   \max (|f_{ij}|, |f_{ji}|), \ell=\{i,j\}\in\LL,
\end{equation}
for a loading factor $\lambda\in(0,1)$.  Reactive power flow constraints are not considered.

Finally, the blackout size differs from the definition in Eq.~\eqref{eq:S_def} to take into account losses.
%Specifically, while in the DC case load shedding simply amounts to shed demand in the island with a shortage until total generation equals total demand, in the AC case the amount of demand that is shed must be such that total demand matches total generation minus network losses.
Specifically, in the island with a shortage demand is shed until total demand matches total generation minus network losses.
%Thus, the emergency phase follows the mechanism described in Section~\ref{}, and
More precisely, after the first disconnection happens, we define $A_1$ be the island such that
$\sum_{i\in A_1} g_i-d_i < \sum_{i\notin A_1} g_i-d_i$, where $g_i$ is the generation at node $i$ as given by the operational OPF, and $d_i$ the corresponding demand.
Thus, the blackout size is defined as
\begin{equation}
S = \left\vert \sum_{i\in A_1} (g_i-d_i) - \eta \right\vert,
\end{equation}
%I THINK THE ASOLUTE VALUE CAN CAUSE CONFUSION. CAN WE REMOVE IT
where $\eta$ is a loss component defined as the sum of active power losses on the lines belonging to $A_1$, obtained after solving a new OPF in the subnetwork $A_1$. %\note{clear enough?}
%\note{ the point is that the new OPF will shed generation in order to match demand and losses at the same time}
%\note{questione dello slack node nell'isola -FORSE NOT NEEDED}
%\note{question di non-feasible OPFs}

The rest of the framework is the same as in Section~\ref{s:CascadeModel}. In particular, generation limit are not considered, and all generators are assumed to be identical with cost functions $C_i(g_i) = g_i^2/2$. We perform experiments on the IEEE 14-bus, 30-bus and 39-bus networks from~\cite{Zimmerman2011}, which are modified accordingly to the above description. For each simulation we run $10000$ iterations using a loading factor $\lambda=0.9$, and we discard iterations resulting in a non-feasible AC-OPF.

With respect to Algorithm~\ref{alg:montecarlo}, the above changes affect steps $B$ by using an AC model for the OPF and PF computations rather than a DC model, and by adding losses to the blackout size realizations.
The rest of the framework remains unchanged, i.e. city sizes are resampled at each iteration (step A), and both a Pareto and a uniform distribution are used for city sizes (step C).
%
%Table~\ref{tab:AC} and Fig.~\ref{fig:AC} show the result for two different city sizes distributions: i) a Pareto distribution with tail index $\alpha^{(\text{city})}=1.37$ and $\xmin^{(d)}=5 \cdot 10^4$ (according to the results in Table~\ref{tab:PLFIT_stats}), and ii) a uniform distribution with the same mean.

The results are reported in Table~\ref{tab:AC} and Fig.~\ref{fig:AC}. We observe that, in the case of Pareto distributed city sizes, there is indication of heavy tail for the blackout size distribution. The tail is lighter than that of city sizes for the 14-bus and 30-bus networks, while a much better fit is observed for the 39-bus test case.
At the same time, there is no indication of heavy tail for the blackout size distribution in the case of uniform city sizes, as can be inferred from the Hill plots comparison in Fig.~\ref{fig:AC}. We conclude that, even when the DC flow model assumption is violated, the city size distribution still plays an important role in affecting the distribution of blackout sizes.

We remark that these results are based on a partial adaptation of our DC framework to the AC case. In order to develop a more thorough mathematical theory for the AC model, one would have to modify and extend our framework considerably, especially the design and operational phase. In order to do such an extension, it would be necessary to develop a mechanism for assigning line resistances and voltage limits in our design and operational phases, for instance, which is beyond the scope of this study.
A direction for rigorous follow-up work would be to focus on special networks topologies (like rings), or to use a load flow model which is intermediate between DC and AC, such as lossless AC~\cite{MolzahnHiskens}.

\begin{table}
\centering
%\resizebox{\columnwidth}{!}{%
\begin{tabular}{c|c|c|c|c|c}
\hline
\hline
\small{test case} & \small{$\niter$} &%$N$ \tnote{1}
\small{$\nnonzero$}& \small{$\ntail$}
 &\small{$\alpha$}   & city sizes  \\
\hline
14-bus & $10000$ & $9145$ & $2227$ & $1.43 \pm 0.03$ &   pareto\\
30-bus & $10000$ & $7691$ & $781$ & $1.97 \pm 0.07$  &   pareto\\
39-bus & $10000$ & $6348$ &  $990$ & $1.24\pm 0.04$ &   pareto\\
14-bus & $10000$ & $9158$ & $171$ & $7.29 \pm 0.56$ &   uniform\\
30-bus &$10000$  & $9017$ & $149$ & $6.52 \pm 0.53$ &   uniform\\
39-bus &$10000$  & $8621$& $127$ & $7.20\pm 0.64$ &  uniform\\
\hline
\hline
\end{tabular}
%}
\caption{\small PLFIT statistics for synthetically generated blackout data using an AC power flow model for different MATPOWER testcases, line limit loading factor $\lambda=0.9$, and different city sizes distributions. $\nnonzero$ is the number of nonzero realizations, and $\ntail$ is the number of realizations $x_i\ge\hatxmin$.}
  \label{tab:AC}
\end{table}

%%%%%%%%
\begin{figure}[ht!]
 \begin{subfigure}[t]{0.45\columnwidth}
  \centering
          \includegraphics[width=\textwidth]{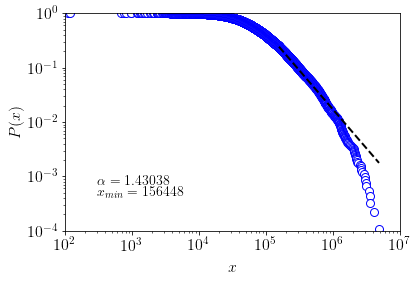}
          \caption{\small{PLFIT, Pareto city sizes, 14-bus network.}}
\end{subfigure}\hfill
 \begin{subfigure}[t]{0.45\columnwidth}
  \centering
          \includegraphics[width=\textwidth]{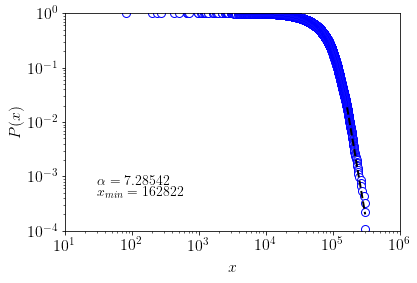}
          \caption{\small{PLFIT, uniform city sizes, 14-bus network.}}
\end{subfigure}\hfill
 \begin{subfigure}[t]{0.45\columnwidth}
  \centering
                   \includegraphics[width=\textwidth]{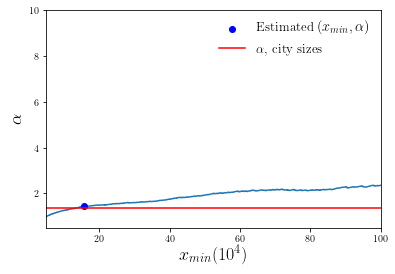}
          \caption{\small{Hill plot, Pareto city sizes, 14-bus network.}}
\end{subfigure}\hfill
 \begin{subfigure}[t]{0.45\columnwidth}
  \centering
                   \includegraphics[width=\textwidth]{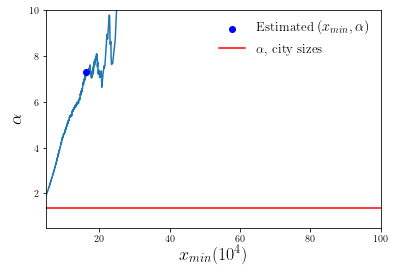}
          \caption{\small{Hill plot, uniform city sizes, 14-bus network.}}
\end{subfigure}\hfill
%\caption{\small{Blackout size distribution for 14-bus network.}
%\label{fig:AC_case14}}
%      \end{figure}
%      %%%%%%%%%%%%5
%  %%%%%%%%%%%%%
%  %%%%%%%%%%%%%%%%
%\begin{figure}[ht!]
 \begin{subfigure}[t]{0.45\columnwidth}
  \centering
          \includegraphics[width=\textwidth]{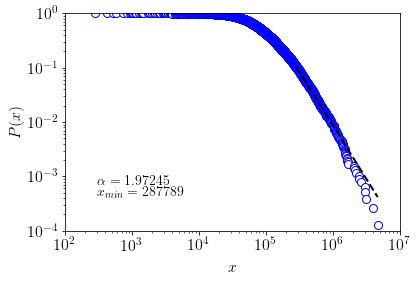}
          \caption{\small{PLFIT, Pareto city sizes, 30-bus network.}}
\end{subfigure}\hfill
 \begin{subfigure}[t]{0.45\columnwidth}
  \centering
          \includegraphics[width=\textwidth]{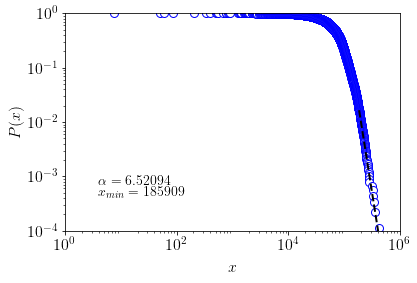}
          \caption{\small{PLFIT, uniform city sizes, 30-bus network.}}
\end{subfigure}\hfill
 \begin{subfigure}[t]{0.45\columnwidth}
  \centering
                   \includegraphics[width=\textwidth]{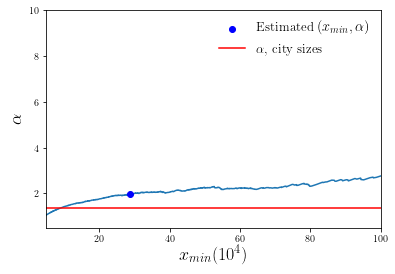}
          \caption{\small{Hill plot, Pareto city sizes, 30-bus network.}}
\end{subfigure}\hfill
 \begin{subfigure}[t]{0.45\columnwidth}
  \centering
                   \includegraphics[width=\textwidth]{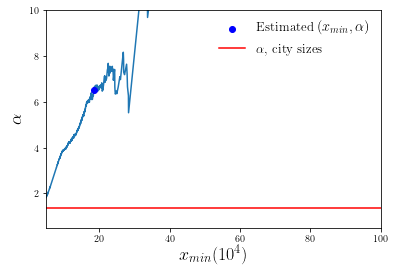}
          \caption{\small{Hill plot, uniform city sizes, 30-bus network.}}
\end{subfigure}\hfill
%\caption{\small{Blackout size distribution for 30-bus network.}\label{fig:AC_case30}}
%  \end{figure}
%
%%%%%%%%%%%%%%%%%%%%%%%%%%%%
%%%%%%%%%%%%%%%%%%%%%%%%%%%%%%
%\begin{figure}[ht!]
 \begin{subfigure}[t]{0.45\columnwidth}
  \centering
          \includegraphics[width=\textwidth]{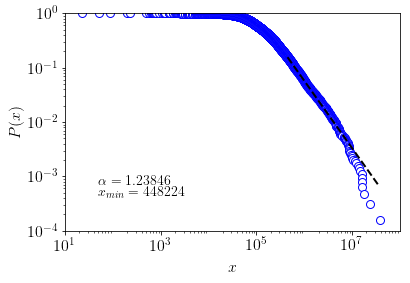}
          \caption{\small{PLFIT, Pareto city sizes, 39-bus network.}}
\end{subfigure}\hfill
 \begin{subfigure}[t]{0.45\columnwidth}
  \centering
          \includegraphics[width=\textwidth]{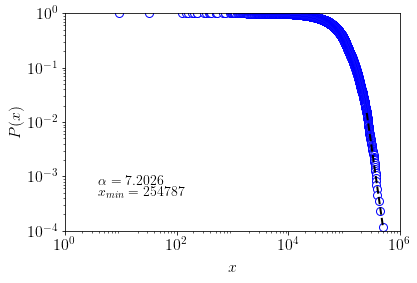}
          \caption{\small{PLFIT, uniform city sizes, 39-bus network.}}
\end{subfigure}\hfill
 \begin{subfigure}[t]{0.45\columnwidth}
  \centering
                   \includegraphics[width=\textwidth]{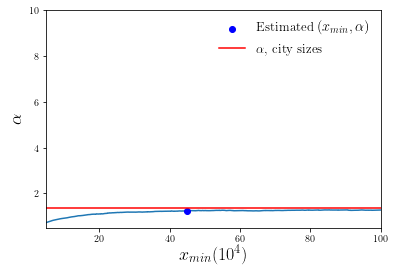}
          \caption{\small{Hill plot, Pareto city sizes, 39-bus network.}}
\end{subfigure}\hfill
 \begin{subfigure}[t]{0.45\columnwidth}
  \centering
                   \includegraphics[width=\textwidth]{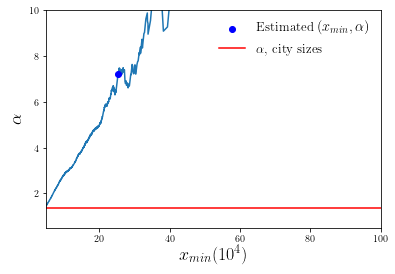}
          \caption{\small{Hill plot, uniform city sizes, 39-bus network.}}
\end{subfigure}\hfill
\caption{\small Results for synthetically generated blackout data using the AC power flow model for different MATPOWER testcases, line limit loading factor $\lambda=0.9$, and different city sizes distributions. a,b,e,f,i,j): PLFIT results and log-log plot of CCDF; c,d,g,h,k,l) Hill plot: red line corresponds to the city sizes tail index $\alpha^{(\text{city})}=1.37$.\label{fig:AC}}
%
%\small{Blackout size distribution for 39-bus network.}\label{fig:AC_case39}}
  \end{figure}

\subsection{Convex piecewise-linear generator cost function}\label{sec:pwl}

Our theoretical framework assumes that the cost function is a convex quadratic function (in particular, the cost of generator $i$ is of the form $C_i(g_i)=g_i^2/2$).
% A quadratic cost function is a reasonable approximation of the more realistic piecewise linear (pwl) cost functions, and is widely used in the literature~\cite{}\note{add references}.
In this section, we investigate the sensitivity of our results with respect to violation of this assumption, by simulating synthetic blackouts using a convex piecewise linear cost function instead and comparing the results.

With respect to Algorithm~\ref{alg:montecarlo}, the change affects steps D by modifying the objective function in the OPF.
We consider two MATPOWER testcases, \textit{case30} and \textit{case30pwl}, which only differs in the choice of the cost function (respectively, convex quadratic and convex piecewise linear). The testcases are modified according to our framework in Section~\ref{s:CascadeModel}. In particular, the cost function of generators of \textit{case30} is set to $C_i(g_i)=g_i^2/2$, while that of the generators of \textit{case30pwl} is taken from the testcase itself and set to the convex pwl function defined by the break-points $(0,0), (12, 144), (36, 1008), (60, 2832)$, expressed in (MW, \$/h).
The rest of the framework remains unchanged, i.e. city sizes are resampled at each iteration (step A), a DC model is used (step B), and a Pareto distribution is used for city sizes (step C).

The results, visualized in Fig.~\ref{fig:pwl}, show that the Pareto fits for the quadratic and piecewise linear case are very similar. In particular, the Hill plots show a remarkable fit for both cases, demonstrating the robustness of our theory to the form of the cost function.

This can be explained by observing that a key argument in our original framework is the fact that the generation schedule in the planning phase is as decentralized as possible, i.e.~ $\bg^{(\text{planning})}=\frac{1}{n}\sum_{i=1}^n X_i \be$ (Section~\ref{ss:description}), due the strict convexity of the quadratic objective function. This property is partially retained with a convex piecewise linear function, since generation  will be optimally allocated to the break-points of the function, thus preserving the decentralization feature. The main difference is when an amount of generation greater than the largest break-point must be produced at a certain location, in which case it becomes optimal to produce as much as needed at that location. This, in turn, results in more local generation at the largest city, and thus smaller blackouts. As a results, a piecewise  linear cost function results in blackouts with smaller magnitudes than in the case of a quadratic function, while preserving the Pareto shape, as it can be appreciated in Fig.~\ref{fig:pwl} (a), (b).

\begin{figure}
\begin{subfigure}[t]{0.45\columnwidth}
  \centering
          \includegraphics[width=\textwidth]{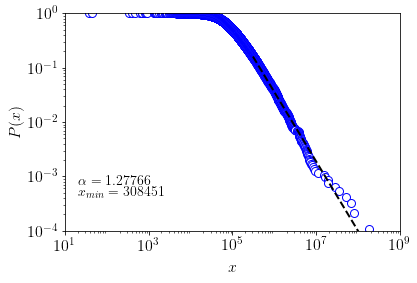}
          \caption{\small{PLFIT, quadratic cost function, 30-bus network.}}
\end{subfigure}\hfill
 \begin{subfigure}[t]{0.45\columnwidth}
  \centering
          \includegraphics[width=\textwidth]{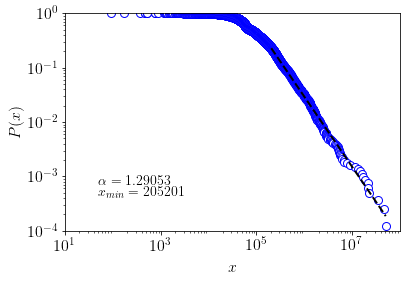}
          \caption{\small{PLFIT, pwl cost function, 30-bus network.}}
\end{subfigure}\hfill
 \begin{subfigure}[t]{0.45\columnwidth}
  \centering
                   \includegraphics[width=\textwidth]{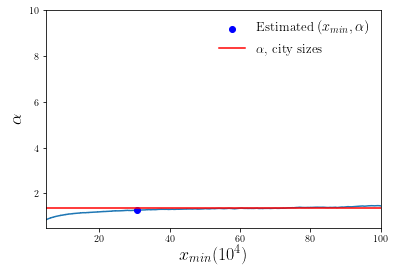}
          \caption{\small{Hill plot, quadratic cost function, 30-bus network.}}
\end{subfigure}\hfill
 \begin{subfigure}[t]{0.45\columnwidth}
  \centering
                   \includegraphics[width=\textwidth]{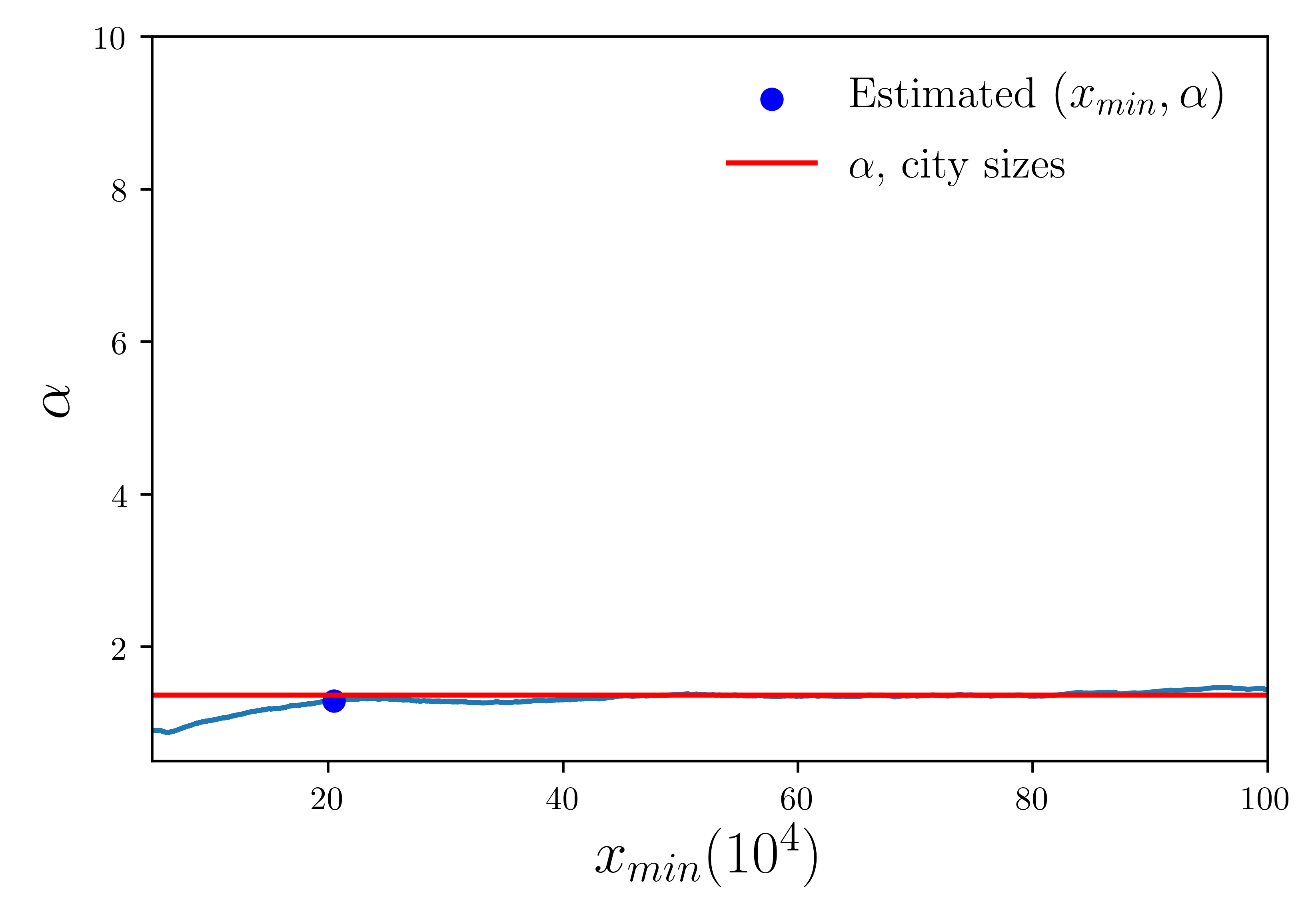}
          \caption{\small{Hill plot, pwl cost function, 30-bus network.}}
\end{subfigure}\hfill
\caption{\small Results for synthetically generated blackout data with quadratic and pwl generation cost function, for the IEEE 30-bus network, loading factor $\lambda=0.9$, Pareto city size distribution. a,b,): PLFIT results and log-log plot of CCDF; c,d) Hill plot: red line corresponds to the city sizes tail index 
$\alpha^{(\text{city})}=1.37$.\label{fig:pwl}}
\end{figure}

\clearpage
\section{SCIGRID case study}\label{sc:scigrid}
%%%%%%%%%%%%%%%%%%%%%%%%%%%%%%%%%%%%%%%%%%%%%%%%%%%%%%%%%%%%%%%%%%%%%555555

%\begin{figure} %uncomment to retain floating behaviour
\begin{algorithm}[H]%H inserted because float algorithm are not compatible with rev-tex
\begin{small}
    \caption{Monte Carlo simulation - SciGRID German Network}
    \label{alg:montecarlo_scigrid}
    \begin{algorithmic}[1]
    	\Inputs{$\T=\{\text{hourly snapshots for the year 2011}\}$\\ $\lambda = $ line limits scaling factor
    	}
        \Initialize{Solve OPF $\forall t\in T$ with scaling factor $\lambda\in (0,1)$.\\
        Set $\T(\lambda)=\{\text{feasible OPF snapshots}\}$\\
        For all $t\in\T(\lambda)$, let $\G(t)$ be the corresponding network\\\vspace{0.5cm} }
        \Procedure{}{} %\Comment{The g.c.d. of a and b}
        %\State Solve OPF for all $h\in \T$ with factor $\lambda$
        %\State $\T(\lambda) = $  set of snapshots for which OPF is feasible
        	\For{ $t\in \T(\lambda)$ }
        		%\For{iter in $1:k$} \Comment{For a fixed snapshot, run $k$ cascades using different first outages (default $k=1$.}
					\State Remove $1$ line uniformly at random from $\G(t)$.
        			\State Set $\Gtc = $True
        				\While{$\Gtc = $True}
        					\State Shed load/generation within each component of $\G$ in order 						to achieve power balance (\textit{load shedding event})
        					\State Recompute normalized power flows $f_{\ell}$
        					\State Remove from $\G$ all lines exceeding the original line limit.
        					\State If at least one line was removed, let $\Gtc = $True; otherwise, set $\Gtc = $False.
        		\EndWhile
        		\State Store blackout realization
        		%\EndFor
        	\EndFor
        \EndProcedure
    \end{algorithmic}
    \end{small}
\end{algorithm}
%\end{figure}
We perform our experiments using PyPSA, a free software toolbox for power system analysis~\cite{PyPSA2017}. We use the dataset described in~\cite{SCIGRID1}, which provides a model of the German electricity system based on SciGRID~\cite{SCIGRID00}.
The SciGRID model represents the actual German transmission network with $n = 585$ nodes, $489$ of which carry loads, and $m = 852$ lines.
Data for German city sizes are pulled from~\cite{wpr},
%\url{http://worldpopulationreview.com/countries/germany-population/cities/} [Accessed October 2018]
 while the population of German districts, together with the corresponding administrative borders, are taken from~\cite{eurostat} and~\cite{eurostatboundaries,githubboundaries}.

Since the aforementioned datasets do not include nodal demand data, we generate relative nodal demands by using population sizes and administrative borders of German NUTS3 districts, which are then rescaled with hourly nation-wide demand statistics. The procedure, based on~\cite{SCIGRID2}, is explained in detail below.

\subsection{Nodal demand}\label{ss:nodal_demand}
The SciGRID model of the German power grid contains $1423$ generators, $585$ nodes, $489$ demand nodes,  and $m=852$ transmission lines. Geographical coordinates of the demand nodes are denoted by $P_1,\ldots,P_{489}\in\R^2$. Moreover, Germany is partitioned into $402$ NUTS3 administrative districts: we denote by $\popd(j)\in\R$ and $\P_j\subseteq \R^2$, respectively, the population and the polygon describing the administrative borders of district $j$.

In order to attach the loads to the $489$ demand nodes, we proceed as follows.
First, we partition Germany using the Voronoi tessellation associated with the demand nodes. Since some of the nodes lie outside the border of Germany, we consider a bounding square $X$ that contains Germany and all the $P_i$-s, and we define the Voronoi cells:
\[V(P_i)=\{x\in X\,:\,||x-P_i|| \le ||x-P_j||\,\forall j\neq i\}.\]

Then, the population of a node $P_i$, denoted by $\popn(i)$, is taken to be proportional to the overlapping area between $V(P_i)$ and all the NUTS3 districts that intersect $V(P_i)$. Rigorously, if we define the \textit{transfer} matrix
 $\mathbf{T}\in\R^{489\times 402}$ as
\begin{equation}
T_{i,j}=\sum_{j=1}^{402}
\frac{\text{Area}\,(V(P_i)\cap \P_j)}
{\text{Area}(\P_j)},
\end{equation}
the nodal population can be calculated as the matrix-vector product $\popn=\mathbf{T}\,\popd$.

Table~\ref{tab:power_law_stats} and Fig.~\ref{fig:SciGRID} summarizes the key statistics for the power law fits of city, district and nodal population.

Fig.~\ref{fig:Voronoi} shows the different partitions of Germany in NUTS3 districts and Voronoi cells associated with SciGRID demand nodes.
Finally, the demand at node $i$ at time $t$, denoted by $d_i(t)$, is calculated by rescaling the country-wide demand $\dg(t)$ by a factor proportional to the nodal population, as shown in Eq.~\ref{eq:demand_nodes}.
%UNCOMMENT
% \tom{what year?}

 \begin{equation}\label{eq:demand_nodes}
d_i(t)=\dg(t)\cdot \frac{\popn_i}{\sum_i \popn_i}
\end{equation}

\begin{table}[ht!]
\centering
\begin{tabular}{c|c|c|c|c|c}
\hline
\hline
Quantity & $N$ & $\ntail$ &$\alpha$  & $\xmin\cdot 10^4$ & KS $\pv$  \\
\hline
Cities pop. & $400$ & $271$  & $1.29\pm 0.08$ &  $4.4 \pm 1 $ & $0.35$  \\
Districts pop. & $402$ & $107$  & $2.35\pm 0.34$ &  $22.9 \pm 3.8 $ & $0.65$  \\
Nodal pop. & $498$ & $51$  & $3.77\pm 1.07$ &  $35.7 \pm 7.8 $ & $0.76$  \\
\hline
\hline
\end{tabular}
\caption{PLFIT statistics for German cities, district and nodal population.}
\label{tab:power_law_stats}
\end{table}

\begin{figure}[hbt!]
     \centering
    \includegraphics[width=0.9\columnwidth]{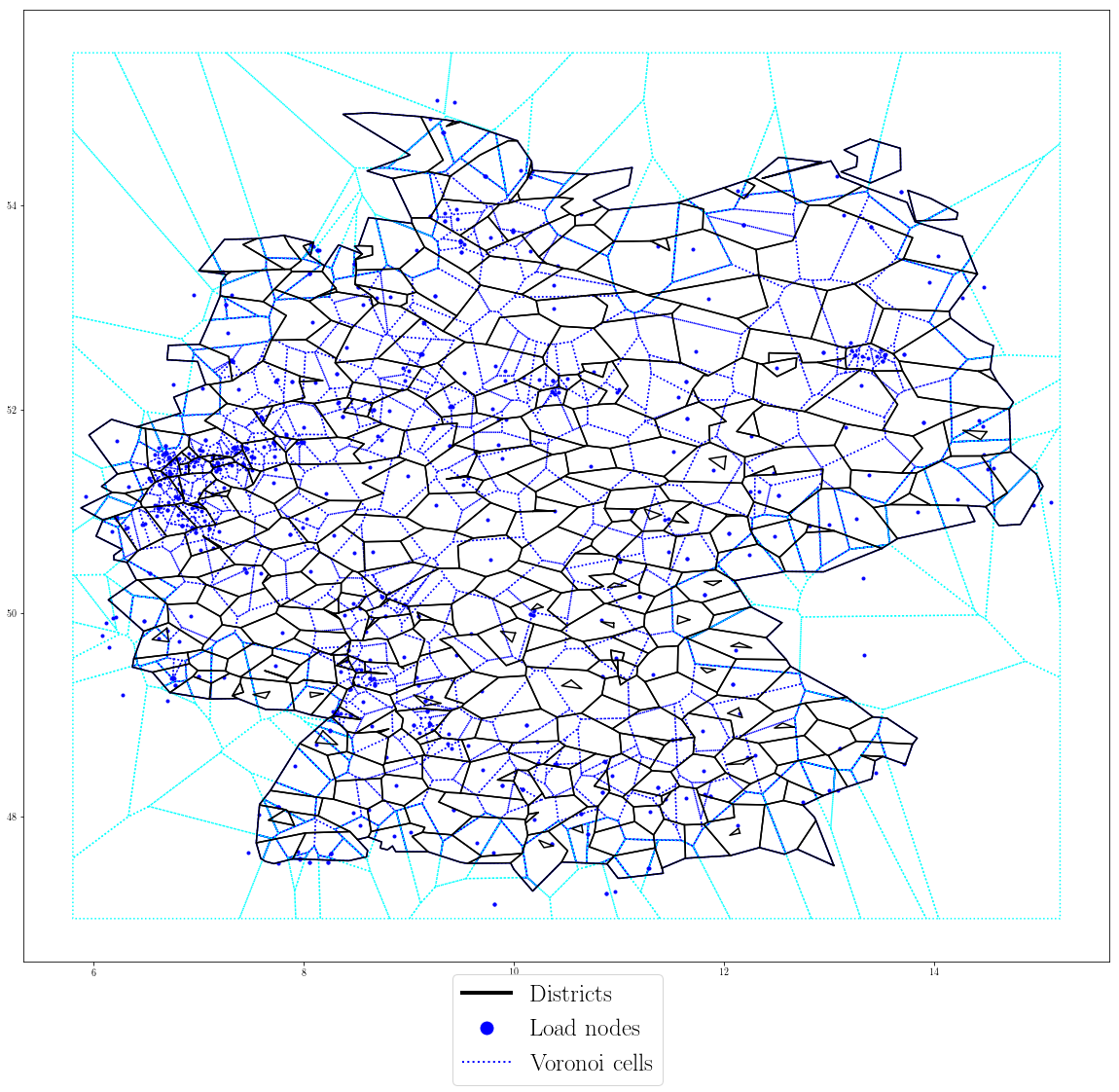}
       \caption{Subdivision of Germany according to NUTS3 districts and to Voronoi tessellation corresponding to demand SciGRID nodes.}
       \label{fig:Voronoi}
    \end{figure}

\subsection{Simulation setup}

The dataset described in~\cite{SCIGRID1} includes hourly nodal generation time series for the entire year 2011, together with data for power grid topology,  transmission lines limits, generation capacities and marginal costs. After augmenting it with the nodal demands generated as described in Subsection~\ref{ss:nodal_demand}, we are able to run realistic Optimal Power Flow (OPF) instances.
We generate blackout data via Monte Carlo simulation, as described in Algorithm~\ref{alg:montecarlo_scigrid}. First, for each of the $24\cdot 365=8760$ hourly snapshots in the year 2011, we solve the corresponding OPF using a safety factor $\lambda \in(0,1)$ (line 2 in Algorithm~\ref{alg:montecarlo_scigrid}). This corresponds to the \textit{operational} phase in our mathematical model. Note that there is no \textit{planning} phase in this simulation since we are using a model of a real-world grid.

Among the $8760$ hourly snapshots available, only a subset results in a feasible OPF, due to the introduction on the conservative parameter $\lambda$. Such snapshots are called \textit{feasible OPF snapshots}.
Then, for each feasible snapshot, we remove one line uniformly at random (line 5), and let the cascade evolve as explained in Section~\ref{s:CascadeModel} (lines 6-12). One \textit{stage} of the cascade is comprised of lines 8-11. Note that a load shedding event (line 8) may or may not happen during a given cascade stage, according to whether the previous stage line failures caused a network disconnection or not.
Finally, we store the resulting \textit{blackout realization} (line 13) expressed in terms of the total number of customers affected, obtained from the total amount of load shed via the relationship~\eqref{eq:demand_nodes}.
%both i) the cumulative amount of load shed in every stage, and ii) the total number of customers affected. The two quantities are linked by~\eqref{eq:demand_nodes}.
In general, only a subset of the feasible snapshots resulted in \textit{non-zero blackout realization}, i.e. a realization with a strictly positive blackout size, the others stopping without disconnecting the network, and thus without any load shedding.
Table~\ref{tab:scigrid_stats} reports statistics on the number of feasible OPF snapshots and non-zero blackout realizations based on Algorithm~\ref{alg:montecarlo_scigrid}.

%\clearpage

\subsection{Results and analysis}
Given a cascade realization with $k$ stages, let $L_i$ be the cumulative load shed up to stage $i$, for $i=0,\ldots$, with the convention $L_0=0$, and let
$L_{i}-L_{i-1}$ denote the amount of load shed at stage $i$. The \textit{number of load shedding events}, in a blackout realization with $n$ stages is
\begin{equation*}
J=\left\vert \{i=1,\ldots,k\,:\, L_i-L_{i-1}>0\}\vert\right .
\end{equation*}
Fig.~\ref{fig:jumps} (corresponding to Figure $3$ in the main paper) reports the histogram and the CCDF of the total number of load shedding events in the SciGRID network, for different values of $\lambda$.
For a moderate loading
factor $\lambda = 0.7$, nearly $98\%$
of the blackouts involve just a single load shedding event, corresponding to a network disconnection. Even for a high loading factor $\lambda= 0.9$, $90\%$
of the blackouts involve just a single disconnection, and the fraction of blackouts with four or more disconnections remains below $4\%$ in all cases, as can be seen from Fig.~\ref{fig:jumps2}
These observations
are typical properties that follow from our framework, and sharply contrast the branching process
approximations where many small jumps take place.

We note that, due to the small dimension of the network and the fact that German city sizes are kept frozen (as opposed to our mathematical model where $X_1,\ldots,X_n$ are random variables). For a sufficiently large network, a frozen version of our model still leads to the correct power law behavior, as we show in Section~\ref{sec:fixed}.

%
%\begin{figure}[!htb]
%     \centering
%          %\includegraphics[width=0.6\columnwidth]{img/scigrid/new/hist_jumps.png}
%          \includegraphics[width=0.5\columnwidth]{img/scigrid/june2019/hist_jumps.png}
%          \caption{\small{caption
%}}\label{fig:jumps}
%\end{figure}

\begin{figure}[hbt!]
     \centering
\begin{subfigure}[t]{0.55\columnwidth}
             \includegraphics[width=\textwidth]{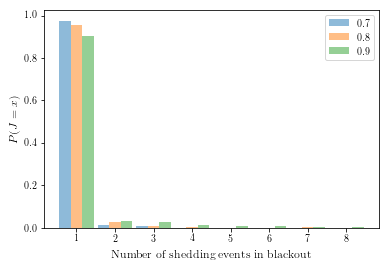}
     \caption{\small{Histogram of $J$.}}
\label{fig:jumps1}
  \end{subfigure}
\begin{subfigure}[t]{0.55\columnwidth}
                   \includegraphics[width=\textwidth]{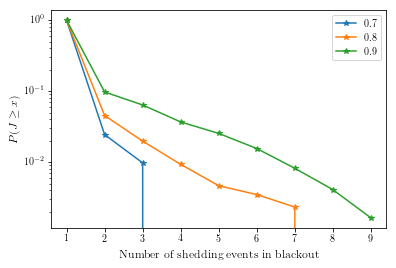}
          \caption{\small{CCDF of $J$.}}
          \label{fig:jumps2}
             \end{subfigure}
   \caption{Statistics for the total number of shedding events $J$ in the SciGRID simulation.\label{fig:jumps}}
\end{figure}
 \begin{table}[hbt!]
\centering
\begin{tabular}{c|c|c|c|c|c}
\hline
\hline
\small{loading factor $\lambda$} & \small{$\nfeas$} &
\small{$\nnonzero$}  \\
\hline
$0.7$ & $3718$ & $614$  \\
$0.8$ & $4988$ & $858$  \\
$0.9$ & $6127$ & $1220$  \\
\hline
\hline
\end{tabular}
\caption{Key statistics for the SciGRID case study. $\nfeas$ is the number of feasible OPF snapshots, and $\nnonzero$ is the number of nonzero blackout realizations.}
\label{tab:scigrid_stats}
\end{table}

\clearpage

\begin{figure*}[hbt!]
\centering
  \begin{subfigure}[t]{0.25\textwidth}
    \includegraphics[width=\textwidth]{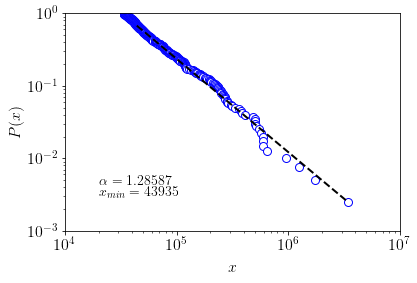}
     \caption{\small{PLFIT, German cities.}}
      %$\gamma = 1.29\pm (0.08)$, $\xmin = (43.9 \pm 10) \cdot 10^3$}
    %\label{fig:1}
  \end{subfigure}
  \begin{subfigure}[t]{0.25\textwidth}
          \includegraphics[width=\textwidth]{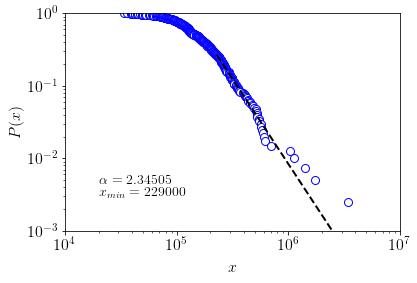}
          \caption{\small{PLFIT, German districts.}}
          % $\alpha = 1.37 \pm 0.35$, $\xmin = 232 \pm 41 (\cdot 10^3)$.}
             \end{subfigure}
  \begin{subfigure}[t]{0.25\textwidth}
          \includegraphics[width=\textwidth]{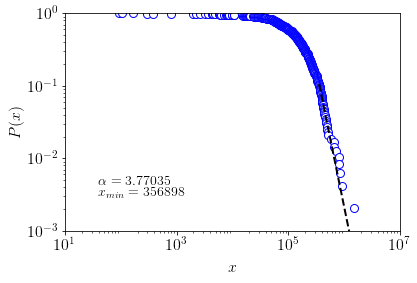}
               \caption{\small{PLFIT, SciGRID nodes.}
     }
      \end{subfigure}
  \begin{subfigure}[t]{0.25\textwidth}
    \includegraphics[width=\textwidth]{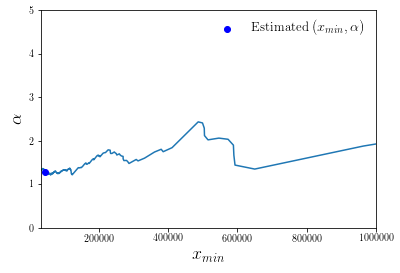}
          \caption{Hill plot, German cities.}
    %\label{fig:2}
  \end{subfigure}
  \begin{subfigure}[t]{0.25\textwidth}
          \includegraphics[width=\textwidth]{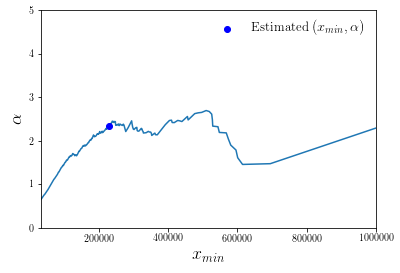}
          \caption{\small{Hill plot, German districts.}}
      \end{subfigure}
  \begin{subfigure}[t]{0.25\textwidth}
          \includegraphics[width=\textwidth]{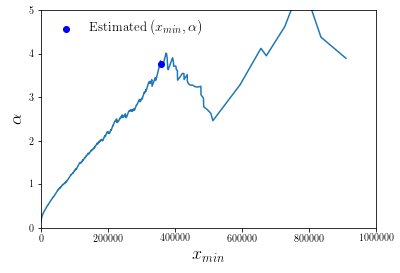}
          \caption{\small{Hill plot, SciGRID nodes.}}
      \end{subfigure}
      \caption{PLFIT results for German cities, districts and nodes population.\label{fig:SciGRID}}
\end{figure*}

\begin{figure*}[hbt!]
\centering
  \begin{subfigure}[t]{0.25\textwidth}
    \includegraphics[width=\textwidth]{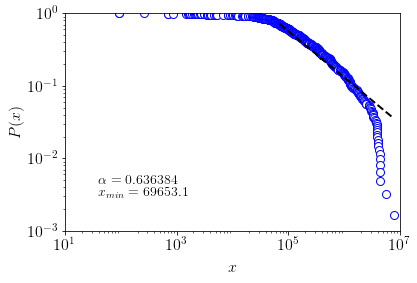}
     \caption{\small{PLFIT, $\lambda=0.7$.}}
      %$\gamma = 1.29\pm (0.08)$, $\xmin = (43.9 \pm 10) \cdot 10^3$}
   % \label{fig:1}
  \end{subfigure}
  \begin{subfigure}[t]{0.25\textwidth}
          \includegraphics[width=\textwidth]{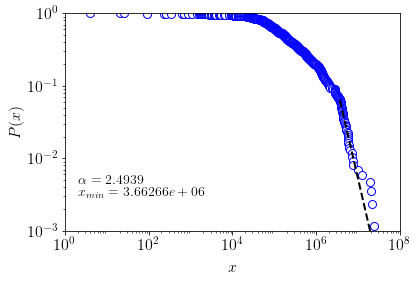}
          \caption{\small{PLFIT, $\lambda=0.8$}}
          % $\alpha = 1.37 \pm 0.35$, $\xmin = 232 \pm 41 (\cdot 10^3)$.}
             \end{subfigure}
         \begin{subfigure}[t]{0.25\textwidth}
          \includegraphics[width=\textwidth]{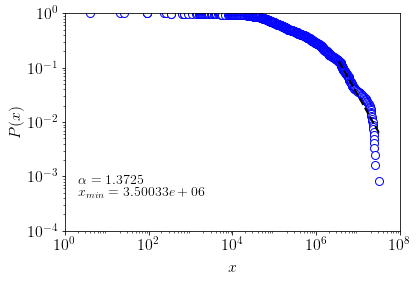}
               \caption{\small{PLFIT, $\lambda=0.9$.}
     }
      \end{subfigure}
  \begin{subfigure}[t]{0.25\textwidth}
    \includegraphics[width=\textwidth]{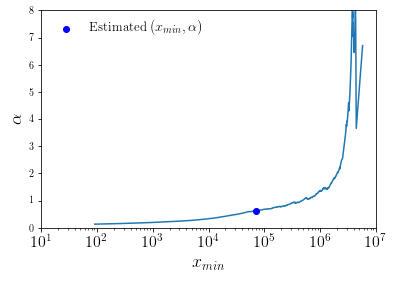}
          \caption{\small{Hill plot, $\lambda=0.7$.}}
    %\label{fig:2}
  \end{subfigure}
        \begin{subfigure}[t]{0.25\textwidth}
          \includegraphics[width=\textwidth]{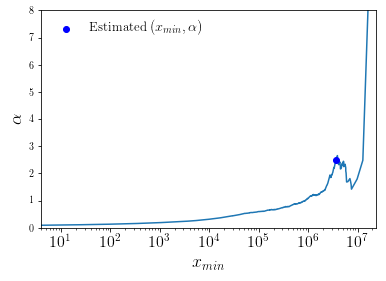}
          \caption{\small{Hill plot, $\lambda=0.8$.}}
      \end{subfigure}
        \begin{subfigure}[t]{0.25\textwidth}
          \includegraphics[width=\textwidth]{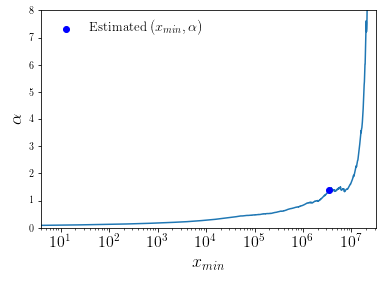}
          \caption{\small{Hill plot, $\lambda=0.9$.}}
      \end{subfigure}
      \caption{Results for SciGRID blackout simulation for different values of $\lambda$. a,b,c): PLFIT results and log-log plot of the ccdf of the number of customers affected; d,e,f) Hill plots.\label{fig:SciGRID_results}}
\end{figure*}

\end{document}